\newtheoremstyle{DStheorem}
  {\topsep}
  {\topsep}
  {\itshape}
  {0pt}
  {\scshape}
  {.}
  { }
  {\thmname{#1}\thmnumber{ #2}\thmnote{ (#3)}}
\theoremstyle{DStheorem}
\newtheorem{theorem}{Theorem}[section]
\newtheorem{lemma}[theorem]{Lemma}
\newtheorem{claim}[theorem]{Claim}
\let\oldproofname=\proofname
\renewcommand{\proofname}{\rm\sc{\oldproofname}}
\newcommand{\MyAbove}[2]{\genfrac{}{}{0pt}{}{#1}{#2}}
\newcommand{\bs}[1]{\boldsymbol{#1}}
\newcommand{\bb}{\mathbb}
\newcommand{\eps}{\epsilon}
\newcommand{\mystick}{\mathrm{stick}}
\newcommand{\myquit}{\mathrm{quit}}
\newcommand{\myavg}{\mathrm{average}}
\newcommand{\mymark}{{\cal M}}
\newcommand{\myunmark}{\bar{\cal M}}
\newcommand{\myco}{\mathrm{CO}}
\newcommand{\bbN}{\mathbbm{N}}
\newcommand{\bbInd}{\mathbbm{1}}
\newcommand{\rev}{{\cal R}}
\newcommand{\pr}[1]{{\rm Pr} \left[ #1 \right]}
\newcommand{\prpar}[1]{{\rm Pr} [ #1 ]}
\newcommand{\prsub}[2]{{\rm Pr}_{#1} \left[ #2 \right]}
\newcommand{\prparsub}[2]{{\rm Pr}_{#1} [ #2 ]}
\newcommand{\ex}[1]{{\mathbb E} \left[ #1 \right]}
\newcommand{\expar}[1]{{\mathbb E} [ #1 ]}
\newcommand{\exsub}[2]{{\mathbb E}_{#1} \left[ #2 \right]}
\newcommand{\poly}{\mathrm{poly}}
\begin{document}

\begin{titlepage}

\title{Near-Optimal Adaptive Policies for \\
Serving Stochastically Departing Customers}
\author{%
Danny Segev\thanks{Department of Statistics and Operations Research, School of Mathematical Sciences, Tel Aviv University, Tel Aviv 69978, Israel. Email: {\tt segevdanny@tauex.tau.ac.il}. The work on this project is supported by Israel Science Foundation grant 1407/20.}%
}
\date{}
\maketitle

\thispagestyle{empty}

\begin{abstract}
We consider a multi-stage stochastic optimization problem originally introduced by \cite{CyganEGMS13}, studying how a single server should prioritize stochastically departing customers. In this setting, our objective is to determine an adaptive service policy that maximizes the expected total reward collected along a discrete planning horizon, in the presence of customers who are independently departing between one stage and the next with known stationary probabilities. In spite of its deceiving structural simplicity, we are unaware of non-trivial results regarding the rigorous design of optimal or truly near-optimal policies at present time.

Our main contribution resides in proposing a quasi-polynomial-time approximation scheme for adaptively serving impatient customers. Specifically, letting $n$ be the number of underlying customers, our algorithm identifies in $O( n^{ O_{ \eps }( \log^2 n ) } )$ time an adaptive service policy whose expected reward is within factor $1 - \eps$ of the optimal adaptive  reward. Our method for deriving this approximation scheme synthesizes various stochastic analyses in order to investigate how the adaptive optimum is affected by alteration to several instance parameters, including the reward values, the departure probabilities, and the collection of customers itself.
\end{abstract}

\bigskip \noindent {\small {\bf Keywords}: Impatient customers, quasi-PTAS, probabilistic coupling, dynamic programming.}

\end{titlepage}

\thispagestyle{empty}
\tableofcontents

\newpage
\setcounter{page}{1}

\section{Introduction}

One fundamental issue in classical and present-day  stochastic systems is that of providing timely service to queued customers under high capacity utilization. Since  customers are impatient in various ways, it is important for decision-makers to devise prioritization policies for allocating limited capacity. In this context, customers' impatience may manifest in a multitude of interrelated phenomena, such as balking, where customers could choose to avoid joining a crowded queue in the first place, or reneging, where they may join a given queue and subsequently leave without receiving service, often due to frustration and dissatisfaction, or due to having access to alternative options. These well-documented behavior patterns were shown to have major economic consequences in several operational settings, including data centers, healthcare, ride-hailing, call centers, and routing. To better appreciate these application domains, we refer the reader to a number of surveys on this topic \citep{harris1987modeling,aksin2007modern,wang2010queueing,armony2015patient,batt2015waiting,yan2020dynamic} and to the references therein.

In this paper, we revisit one of the structurally simplest multi-stage stochastic optimization problems representing customer impatience, for which computationally efficient methods are still mostly unknown. Specifically, we study a model formulation originally introduced by \cite{CyganEGMS13}, who examined how a single server should prioritize heterogeneous stochastically departing customers. Somewhat informally, at the beginning of the planning horizon, $n$ customers are waiting in a queue to be potentially served along a sequence of discrete stages. In each such stage, the decision-maker picks one of the currently-available customers, say $i \in [n]$, to be served next, and collects a corresponding value $r_i$ as a reward. Subsequently, each of the remaining customers $j\neq i$ may independently depart from the system with probability $p_j$ prior to the next stage. This process repeats itself until all customers have either been served or departed. Our objective is to determine an adaptive service policy that maximizes the expected total reward collected along the entire planning horizon. Section~\ref{subsec:model_desc} presents a complete mathematical description of this model and delves deeper into its system dynamics. 

As explained in Section~\ref{subsec:known-results}, in spite of its deceiving structural simplicity, we are unaware of any non-trivial results regarding the rigorous design of optimal or truly near-optimal policies in this context. To provide some intuition around the technical obstacles hiding under the surface, it is worth noting that, for the purpose of maximizing the expected total reward, one should  carefully account for customer heterogeneity in devising a prioritization policy, i.e., how much we value serving different customers versus how patient they are. All other things being equal, it would be preferable to prioritize either highly impatient customers (with large departure probabilities $p_i$-s) or high-value customers (with large rewards $r_i$-s). However, since these parameters could be completely unrelated, impatient customers generally do not coincide with high-value ones, meaning that the design of optimal or near-optimal  adaptive service policies requires resolving a complex tradeoff between these attributes. As we proceed to explain next, these issues can easily be circumvented by means of dynamic programming, with a state space description that keeps track of the precise identity of each available customer at any point in time. However, this approach suffers from the rather obvious curse of dimensionality, being exponential in the number of underlying customers.   

In the remainder of this exposition, we dedicate Section~\ref{subsec:model_desc} to describing the above-mentioned model formulation in greater detail. Next, in Section~\ref{subsec:known-results}, we review known results in this context and highlight our motivating research questions. Section~\ref{subsec:contributions} provides a formal account of our main results and sheds light on selected technical ideas. Finally, we briefly review several lines of closely related literature in Section~\ref{subsec:related}.

\subsection{Model description} \label{subsec:model_desc}

\paragraph{Input parameters and service policies.} We are given a finite collection of customers, which will be referred to as $1, \ldots, n$. Each customer $i \in [n]$ is associated with a non-negative reward $r_i$ as well as with a departure probability of $p_i$ whose precise meaning will be explained shortly. Given these parameters, the random process we consider evolves along a sequence of discrete stages according to system dynamics defined through adaptive service policies. To formalize this notion, for any stage $t \in \bbN$, let us make use of ${\cal A}_t$ to denote the set of customers available at the beginning of stage $t$; we refer to every possible pair $(t,{\cal A}_t) \in \bbN \times 2^{[n]}$ as a state. An adaptive service policy is simply a function ${\cal S} : \bbN \times 2^{[n]} \to [n]$ that, given any state $(t,{\cal A}_t)$ with ${\cal A}_t \neq \emptyset$, decides on a single customer ${\cal S}({t,{\cal A}_t})$ to be served at that time, picked out of the currently available set of customers ${\cal A}_t$.

\paragraph{System dynamics.} At the beginning of stage 1, all customers are  present in the system, meaning that ${\cal A}_1 = [n]$. Then, at each stage $t$, when the set ${\cal A}_t$ of currently available customers is non-empty, the following sequence of steps occurs:
\begin{enumerate}
    \item {\em Service step}: We first let the policy ${\cal S}$ pick one of the available customers, ${\cal S}({t,{\cal A}_t})$. Once we provide service to the latter, our cumulative total reward is incremented by the corresponding reward, $r_{ {\cal S}({t,{\cal A}_t})}$, and this customer leaves the system.

    \item {\em Departure step}: Out of the remaining set, ${\cal A}_t \setminus \{ {\cal S}({t,{\cal A}_t}) \}$, each customer $i$ independently decides whether to depart from the system (with probability $p_i$) or to keep waiting to be potentially served in subsequent stages (with probability $1 - p_i$).
    
    \item {\em State update:} Moving forward, ${\cal D}_t$ will stand for the random subset of customers who decided to depart, given the current state $(t,{\cal A}_t)$ and the identity ${\cal S}({t,{\cal A}_t})$ of the customer who has just been served. As such, we proceed to stage $t+1$ with the set of customers
    \[ {\cal A}_{t+1} ~~=~~ {\cal A}_t \setminus \left( \{ {\cal S}({t,{\cal A}_t}) \} \cup {\cal D}_t \right) \ . \]
\end{enumerate}
This random process terminates as soon as we arrive at a state $(t, {\cal A}_t)$ in which the available set of customers ${\cal A}_t$ is empty.

\paragraph{Reward function and objective.} For an adaptive service policy ${\cal S} : \bbN \times 2^{[n]} \to [n]$ and a state $(t, {\cal A}_t)$, let $\rev_{\cal S}(t, {\cal A}_t)$ be the expected cumulative reward attained by the policy ${\cal S}$ along the process in question, starting at state $(t, {\cal A}_t)$. It is easy to verify that, when ${\cal A}_t \neq \emptyset$, this function can be recursively expressed as
\begin{equation} \label{eqn:recursive_rev}
\rev_{\cal S}(t, {\cal A}_t) ~~=~~ r_{ {\cal S}({t,{\cal A}_t}) } + \ex{\rev_{\cal S}(t+1, {\cal A}_t \setminus ( \{ {\cal S}({t,{\cal A}_t}) \} \cup {\cal D}_t ) ) } \ ,
\end{equation}
where the expectation above is taken over the randomness in generating the set of departing customers ${\cal D}_t$. In the opposite case where ${\cal A}_t = \emptyset$, we clearly have $\rev_{\cal S}(t, {\cal A}_t) = 0$. Our goal is to compute an adaptive service policy ${\cal S}$ whose expected total reward $\rev( {\cal S} )$ is maximized. The latter function stands for the expected reward with regards to the initial state where all customers are present at the beginning of stage 1, meaning that $\rev( {\cal S} ) = \rev_{\cal S}(1,{\cal A}_1)$. 

\subsection{Known results and open questions} \label{subsec:known-results}

As further elaborated below, the original work of \cite{CyganEGMS13} presented a host of algorithmic results, providing constant-factor approximation guarantees through easy-to-implement service policies. Interestingly, the authors established a clear separation between the type of approximations that can be attained relative to two distinct benchmarks:
\begin{enumerate}
    \item The {\em offline optimum}, where our clairvoyant adversary knows in advance how customers' departure decisions will be realized. In other words, this benchmark takes the perspective of competitive analysis.
    
    \item The {\em adaptive optimum}, namely, the maximum-possible expected total reward collected by any adaptive service policy. This benchmark corresponds to the optimal value of the dynamic program~\eqref{eqn:recursive_rev}.
\end{enumerate}

\paragraph{The online benchmark.} In regard to this measure, \cite{CyganEGMS13} observed that constant-factor competitiveness can be established by exploiting connections to the vertex-weighted online matching problem. More precisely, one can re-interpret customers as offline vertices on one side of a bipartite graph; similarly, stages are analogous to vertices that arrive in online manner on the other side of this graph. When stage $t$ arrives, its corresponding vertex will be connected by newly added edges to all customers who have not departed yet. Given this reduction, a competitive ratio of $1 - \frac{ 1 }{ e } \approx 0.632$ for serving impatient customers follows from the work of \cite{AggarwalGKM11}, who attained this ratio for vertex-weighted online matching. Reciprocally, classical families of hard instances for the online matching problem cannot be captured in the opposite direction, due to the specific randomization behind customers' departure decisions. Hence, it is still unknown whether $(1 - \frac{ 1 }{ e })$-competitiveness can be breached  for this problem. In any case, such improvements would not be dramatic, in light of an upper bound of roughly $0.648$ on the achievable competitive ratio due to \cite{CyganEGMS13}.

\paragraph{The adaptive benchmark.} Moving on to comparisons against the adaptive optimum, which is the primary focus of our work, the main result of \cite{CyganEGMS13} resides in devising a slightly-adaptive $0.709$-approximate service policy, which is polynomially computable. In fact, the latter approximation is attained relative to an LP relaxation, which forms an upper bound on the adaptive optimum. In a nutshell, starting from an optimal fractional solution, their policy first computes a  customer-to-stage assignment by means of randomized rounding. Then, in each stage, the highest-reward customer assigned to this stage is served; moreover, when no customer was served in the previous stage, the second-highest-reward customer is served as well. Both choices are, of course, conditional on the current availability of these customers, and can eventually be implemented as a valid policy, serving at most one customer per stage. Importantly, this result implies that the best-achievable polynomial-time approximation is strictly better than the best-possible competitive ratio. We mention in passing that, even though improved guarantees may potentially be extracted along similar lines, there is still an inherent constant-factor gap between the adaptive optimum and the LP-based one. In support of this claim,  \cite{CyganEGMS13} derived an upper bound of $1 - \frac{1}{2e} \approx 0.816$ on this gap, showing that the LP benchmark is too loose to devise service policies whose expected reward is arbitrarily close to optimal.

\paragraph{Research questions.} To our knowledge, since the original work of \cite{CyganEGMS13}, improved approximation ratios for this setting have yet to be established. In fact, quoting Anupam Gupta's talk at the Simons Institute \citeyearpar{Gupta16simons}, little is known about the computational aspects of the model in question, including whether this problem is PSPACE-hard or not. Hence, our work is motivated by the following open questions:
\begin{quote}
    \begin{itemize}
        \item {\em Approximability.} Can the adaptive optimum be approximated within any degree of accuracy, say by allowing slightly super-polynomial time algorithms?
        
        \item {\em Structure of value function.} Can we rigorously study how the adaptive optimum is affected by various alterations to the instance structure and input parameters? Could such properties be harnessed for computational purposes?
        
        \item {\em Algorithmic ideas.} Due to the inherent limitations of current LP-based methods, are there alternative techniques that would be useful in this context?
    \end{itemize}
\end{quote}

\subsection{Contributions and techniques} \label{subsec:contributions}

The main contribution of this paper resides in proposing a quasi-polynomial-time approximation scheme (QPTAS) for adaptively serving impatient customers. In other words, for any accuracy level $\eps > 0$, our algorithm identifies in slightly super-polynomial time an adaptive service policy whose expected reward is within factor $1 - \eps$ of the best-possible expected reward attainable by any such policy. The precise performance guarantees of this approach are described in the next theorem, where to avoid cumbersome expressions, we make use of $O_{ \eps }$ to suppress polynomial dependencies on $\frac{ 1 }{ \eps }$, meaning that $O_{ \eps }( f(n) ) = \poly( \frac{ 1 }{ \eps } ) \cdot f(n)$.

\begin{theorem} \label{thm:main_result}
For any accuracy level $\eps \in (0, \frac{ 1 }{ 4 })$, there is a deterministic $O( n^{ O_{ \eps }( \log^2 n ) } )$-time algorithm for computing an adaptive service policy whose expected total reward is within factor $1 - \eps$ of optimal. 
\end{theorem}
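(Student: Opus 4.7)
The plan is to prove Theorem 1.1 via a cascade of structural reductions that progressively simplify the instance until the adaptive optimum can be computed by a compressed dynamic program of size $n^{O_{\eps}(\log^2 n)}$, where each reduction incurs only a multiplicative $(1 - O(\eps))$ loss in the adaptive reward. To begin, I would discretize the rewards: discard every customer whose reward is below $\frac{\eps}{n} \cdot r_{\max}$ (whose cumulative contribution is at most $\eps \cdot r_{\max} \le \eps \cdot \rev(\mathcal{S}^*)$ by a linearity-of-expectation argument, since under any policy each customer is served at most once), then round the remaining $r_i$-s down to the nearest power of $1 + \eps$. This step is benign: the adaptive optimum is linear in the reward vector, so a $(1-\eps)$-multiplicative scaling scales the optimum by the same factor. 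We are left with $O_{\eps}(\log n)$ distinct reward values.

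Next I would round the departure probabilities $p_i$ into $O_{\eps}(\log n)$ distinguished levels, for instance geometrically spaced in $(0,1)$ with an additional bin for very patient customers. The analysis here is considerably more delicate than for rewards: the $p_i$-s enter the recursion~\eqref{eqn:recursive_rev} nonlinearly and their effects compound across stages. To bound the effect of these perturbations on $\rev(\mathcal{S}^*)$, I would construct a stage-by-stage probabilistic coupling between the original and perturbed departure processes, bounding the expected reward discrepancy by the expected number of customers whose departure decisions disagree, weighted by the residual reward they could still generate. A telescoping argument over the at-most-$n$ stages, combined with the fact that the per-stage coupling error can be driven to $O(\eps/n)$ by a sufficiently fine discretization, should yield the desired $(1 - O(\eps))$ bound.

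The combination of the two previous steps clusters customers into $K = O_{\eps}(\log^2 n)$ \emph{types}, where members of a common type are exchangeable under any optimal policy, so the state of the dynamic program can be summarized by a type-count vector $\bs{c} \in \bbN^K$. Moreover, since the expected residual reward attributable to any customer decays geometrically in the number of remaining stages to its departure, one can truncate the effective horizon at $T = O_{\eps}(\log n)$ stages while losing at most an $\eps$-fraction of the optimum (after separately handling the very-patient types, whose customers can effectively be postponed and scheduled in greedy-by-reward order). Inside the truncated horizon, only the top $O_{\eps}(\log n)$ copies of any type can possibly be served, so each coordinate of $\bs{c}$ can be capped at $O_{\eps}(\log n)$. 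The reduced DP thus has $n^{O_{\eps}(\log^2 n)}$ states, each Bellman update being a polynomial-time expectation over type-level departures, which delivers the claimed running time.

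I expect the probability rounding to be the main technical obstacle. Reward rounding piggybacks cleanly on linearity, horizon truncation is a standard tail-type argument, and the compressed DP is mechanical once the reductions are in place; but controlling how small perturbations in the $p_i$-s propagate through the decisions of an adaptive policy that may be highly sensitive to these probabilities calls for a carefully designed coupling, since the value function is not obviously Lipschitz in the departure parameters.
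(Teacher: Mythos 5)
Your overall architecture (reward rounding, probability rounding, type-count dynamic program) matches the paper's, and the reward-rounding step is handled correctly. But there is a genuine gap in the step you yourself flag as the crux. You propose to control the probability rounding by a stage-by-stage coupling whose ``per-stage coupling error can be driven to $O(\eps/n)$ by a sufficiently fine discretization,'' and then telescope over at most $n$ stages. These two requirements are incompatible: with only $O_{\eps}(\log n)$ geometrically spaced probability levels, the rounding perturbs each $p_i$ multiplicatively by a factor $1+\delta$ with $\delta = \Theta(\poly(\eps))$, so the per-stage, per-customer disagreement probability under any coupling is $|p_i^{\uparrow}-p_i| = \Theta(\delta p_i)$, not $O(\eps/n)$. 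Driving the per-stage error down to $O(\eps/n)$ forces an additive grid of width $\eps/n$ (think of $p_i$ near $\frac12$), i.e.\ $\Omega(n/\eps)$ distinct levels, which destroys the $\log^2 n$ exponent of the DP. With logarithmically many levels, the per-stage errors compound: $(1-p_i^{\uparrow})^{n}/(1-p_i)^{n}$ can be $\exp(-\Omega(\delta n))$, so a naive disagreement-counting/telescoping bound loses almost everything on later stages. The paper's resolution is qualitatively different: it couples the rounded-up process against a rounded-\emph{down} process while \emph{skipping} an $\eps$-fraction of ``milestone'' stages, so that the extra departures accumulated between consecutive milestones are exactly absorbed by a compensating departure at the milestone; near-preservation of reward then follows because within each inter-milestone window of length $<1/\eps$ the two survival probabilities differ by only a $(1-\eps)$ factor (Lemma~\ref{lem:comp_prob_short}). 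Your proposal contains no mechanism of this kind, and without one the step fails.

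Two secondary issues. First, you omit the preprocessing that removes quitters ($p_i>1-\frac{\eps}{n}$) and stickers ($p_i<\frac{\eps}{n^2}$); without it the geometric grid on $(0,1)$ either has too many levels or incurs unbounded relative error at the extremes, so the ``additional bin for very patient customers'' needs to be developed into an actual reduction (the paper does this via class-ordered policies). Second, your horizon truncation at $T=O_{\eps}(\log n)$ is both unnecessary and incorrect as stated: after reducing to average customers, departure probabilities can still be as small as $\frac{\eps}{n^2}$, so residual rewards do \emph{not} decay geometrically on a $\log n$ timescale (an instance with $n$ unit-reward customers all having $p_i=\frac{\eps}{n^2}$ requires $\Omega(n)$ stages to collect a constant fraction of the optimum). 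Fortunately this step is not needed: the count-vector state space already has only $n^{O_{\eps}(\log^2 n)}$ states even with coordinates ranging up to $n$ and a horizon of $n+1$ stages, which is how the paper obtains the claimed running time.
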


\paragraph{Main technical ideas.} Our method for deriving the above-mentioned approximation scheme synthesizes various stochastic analyses, eventually leading to an approximate dynamic programming formulation. At a high level, we investigate how the adaptive optimum is affected by various alteration to several instance parameters, including the reward values, the departure probabilities, and the collection of customers itself. As further explained below, our main findings show that, with a negligible loss in optimality, we can restrict attention to instances satisfying the following properties:
\begin{enumerate}
    \item {\em Reasonably patient customers:} The departure probabilities $\{ p_i \}_{i \in [n]}$ are $\poly(n,\frac{1}{\eps})$-bounded away from both $0$ and $1$.
    
    \item {\em Few rewards:} The collection of rewards $\{ r_i \}_{i \in [n]}$ consists of $O_{\eps}( \log n )$ distinct values.
    
    \item {\em Few probabilities:} The collection of departure probabilities $\{ p_i \}_{i \in [n]}$ consists of $O_{\eps}( \log n )$ distinct values.
\end{enumerate}
Based on properties~2 and~3, we argue that an optimal adaptive service policy can be exactly computed in $O( n^{ O_{ \eps }( \log^2 n ) } )$ time by means of dynamic programming. 

To keep this introduction concise, Section~\ref{sec:overview} is devoted to a step-by-step technical overview of our approach, whereas the remainder of the current section is meant to instill some preliminary motivation. Specifically, in Section~\ref{subsec:reduce_average}, we enforce property~1 by partitioning customers into three classes: ``quitters'', ``stickers'', and ``average''. Informally, quitters are very likely to leave within a single stage, whereas stickers stay up until very late stages with a high probability. As intuition would suggest, we prove the existence of near-optimal policies that are class-ordered, meaning that they start by potentially serving a single quitter in stage~1, while deferring the service of all stickers until the latest possible stages. Consequently, we reduce arbitrarily structured problem instances to those solely consisting of average customers. 

Next, in Sections~\ref{subsec:round_values} and~\ref{subsec:round_prob}, we respectively enforce properties~2 and~3 by appropriately rounding the customer rewards and departure probabilities. As it turns out, accounting for the effects of reward-rounding is rather straightforward, since the overall customer departure process remains probabilistically identical. That said, due to considering rounded departure probabilities, we are in fact altering the latter process, and rigorously analyzing the effects of this alteration appears to be a challenging question. Our main technical contribution resides in proposing a sensitivity analysis, proving that even though we are considering a modified process, we could be losing only an $O(\eps)$-fraction of the optimal expected reward. This result is formalized in Section~\ref{subsec:rev_preserve}, with its complete derivation presented in Section~\ref{sec:proof_thm_rev_diff_direction}. Technically speaking, our analysis is based on probabilistic coupling ideas, which enable us to compare the reward processes for two distinct instantiations of the departure probabilities. In particular, for any policy in the original departure process, we carefully construct a corresponding randomized policy in the modified process that yields comparable expected total rewards. The latter policy employs a simulated state-system for the original process, conditional on its own history of observed departures, to ``align'' the two processes and to facilitate the comparison of their expected rewards. We believe that the technical ideas behind our probabilistic coupling may be of broader interest.

\subsection{Related work} \label{subsec:related}

We conclude this section by positioning our contributions within the literature on service operations with impatient customers. To this end, it is instructive to briefly mention related work at the intersection of stochastic modeling and optimization, and to explain the uniqueness of our results in relation to existing analytical findings.

\paragraph{Queuing models with reneging.}  The issue of reneging (or abandonments) has motivated long-standing lines of research in queuing theory, tracing back to the work of \citet{haight1959queueing}. In the simplest setting, one considers an M/M/1 queue with reneging, where customers independently wait up to an exponentially distributed patience random variables and depart if they have not been served thus far. Much of the existing literature along these lines focuses on describing how simple priority rules (or service disciplines) perform. For example, under the first-come-first-served rule, the steady-state distribution of queue length, waiting times, and customer abandonments can be derived in closed form, even for more general queuing models \citep{ancker1963some}. Hence, various performance metrics can be asymptotically analyzed, depending on the amount of extra capacity endowed to the system \citep{garnett2002designing,zeltyn2005call}. The dynamic control problem, where the objective is to choose how to prioritize over different classes of customers waiting, has also received a great deal of attention.  Without abandonments, the celebrated $c\mu$ priority rule is known to minimize expected holding costs in many scenarios \citep{klimov1975time,baras1985two,buyukkoc1985cmu,van1995dynamic}; here, customer classes differ in their holding costs, indexed by $c$, and in their mean service times, indexed by $\mu$. That said, the optimality of index policies breaks in the presence of customer abandonments. Most existing near-optimality results in this context are derived under appropriately defined asymptotic regimes, such as a fluid limit with an overload condition \citep{atar2010cmu}, or a Brownian approximation \citep{AtaT13}. In addition, these models differ from our problem formulation in that their goal is to minimize a cost function that blends waiting times and customer abandonments, rather than to maximize total welfare from the valuations of eventually-served customers.

\paragraph{Dynamic matching.} Concurrently, serving impatient customers is a fundamental issue in matching markets. These settings are generally modelled through a stochastic process that specifies the arrivals of agents of different types, with an underlying compatibility graph that describes the feasible matches between agent types and their potential rewards. A centralized platform decides on how to match these agents on-the-fly. In the absence of abandonments, one wishes to minimize average waiting times, while achieving the maximum-possible throughput \citep{tsitsiklis2017flexible,anderson2017efficient}. Optimal policies in non-asymptotic regimes are notoriously difficult to characterize, with only a few exceptions that further exploit simple graph structures \citep{cadas2019optimal}. Adding an important layer of complexity, the presence of stochastic abandonments, which endow the market participants with heterogeneous patience levels, influences the market thickness and its resulting throughput. This issue is studied  in a rich recent literature on dynamic stochastic matching (see, e.g.,  \citet{anderson2017efficient,akbarpour2020thickness,ozkan2020dynamic}), where the standard assumption is that agents have independent exponentially-distributed patience levels. Closer to our setting, the reward-maximization version of this problem admits constant-factor approximation algorithms based on linear programming relaxations~\citep{collina2020dynamic,aouad2022dynamic}. To our knowledge, none of these models admits efficient approximation schemes, even with further restrictions on the underlying graph \citep{banerjee,kessel2022stationary}. Moreover, although dynamic stochastic matching models on specialized graphs bear certain resemblance with our setting, there are substantial differences in terms of structural assumptions, e.g., finite vs.\ infinite horizon and discrete vs.\ continuous time.
\section{Technical Overview} \label{sec:overview}

In what follows, we provide a high-level overview of our algorithmic ideas and their analysis. Specifically, Section~\ref{subsec:reduce_average} describes our customer classification method, followed by proving that class-ordered policies are sufficiently strong to attain near-optimal expected rewards. Sections~\ref{subsec:round_values} and~\ref{subsec:round_prob} are dedicated to explaining how customer rewards and departure probabilities are rounded, and to establishing a number of auxiliary technical claims in this context. In Section~\ref{subsec:rev_preserve}, we succinctly analyze the effects of these alterations, leaving most finer details to be discussed in Section~\ref{sec:proof_thm_rev_diff_direction}. Finally, Sections~\ref{subsec:alg_consequences} and~\ref{subsec:alg_rounded} discuss the algorithmic consequences of this analysis, leading to a deterministic $O( n^{ O_{ \eps }( \log^2 n ) } )$-time dynamic programming approach for computing a $(1-\eps)$-approximate service policy.

\subsection{Preliminary step: Reduction to average customers} \label{subsec:reduce_average}

We start off by arguing that customers with ``very small'' or ``very large'' departure probabilities can be separately served by employing simple non-adaptive priority rules, ensuring that their expected reward contribution nearly matches the analogous quantity with respect to an optimal adaptive policy. Specifically, since these regimes of departure probabilities turn out to be easy to handle, we proceed by devising a general reduction from arbitrarily structured instances to ones where all customers are associated with ``average'' departure probabilities. The cut-off values that separate between these classes are defined next.
 
\paragraph{Customer classification.} Given an error parameter $\eps \in (0, \frac{ 1 }{ 4 })$, we begin by classifying customers into three types -- stickers, quitters, and average -- depending on the magnitude of their departure probabilities. At least intuitively, stickers depart within a single stage with very low probability, quitters depart with very high probability, and those who do not fall into these two extremes are called average. Specifically, we say that customer $i$ is a sticker when $p_i < \frac{ \eps }{ n^2 }$. On the other hand, customer $i$ is a quitter when $p_i > 1 - \frac{ \eps }{ n }$. Finally, customers with $p_i \in [\frac{ \eps }{ n^2 }, 1 - \frac{ \eps }{ n }]$ will be referred to as average. We denote the corresponding subsets of customers as ${\cal C}_{\mystick}$, ${\cal C}_{\myquit}$, and ${\cal C}_{\myavg}$, respectively.
 
\paragraph{Class-ordered policies are near optimal.} We say that an adaptive policy ${\cal S} : \bbN \times 2^{[n]} \to [n]$ is class-ordered when it satisfies the next two conditions:
\begin{enumerate}
    \item Quitters can only be served at stage 1.

    \item Stickers are served only in stages $n+1,\ldots,2n$, by picking the highest-reward sticker who is still available in each stage.
\end{enumerate}
The reason for choosing the notion of ``class ordered'' to describe such policies is that we first serve quitters, then average customers, and finally stickers. Interestingly, by property~1, any class-ordered policy can serve at most one quitter. Moreover, property~2 requires us to consider an extended definition of adaptive service policies where, to enable serving customers in stages $n+1,\ldots,2n$, we allow policies not to serve any customer at any given stage. Notation-wise, we make use of ${\cal S}(t,{\cal A}_t) = \perp$ to indicate that the policy ${\cal S}$ does not serve any customer in state $(t,{\cal A}_t)$. 

With these definitions, the next claim shows that class-ordered policies are capable of attaining near-optimal rewards. The proof of this result appears in Appendix~\ref{app:proof_lem_lazy_exists}, noting that we make use of ${\cal S}^*$ to denote a fixed optimal service policy.

\begin{lemma} \label{lem:lazy_exists}
There exists a class-ordered policy ${\cal S}^{ \myco }$ with $\rev( {\cal S}^{ \myco } ) \geq (1 - 6\eps) \cdot \rev( {\cal S}^* )$.
\end{lemma}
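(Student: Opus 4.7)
My plan is to construct ${\cal S}^{\myco}$ via a parallel-simulation device and then compare it to ${\cal S}^*$ through a probabilistic coupling. Specifically, ${\cal S}^{\myco}$ would internally maintain a simulated state $\tilde{\cal A}_t$ that tracks ${\cal S}^*$'s trajectory on the same realization of departures; at each stage $t \leq n$, it computes $c^*_t = {\cal S}^*(t, \tilde{\cal A}_t)$ and serves $c^*_t$ provided that $c^*_t$ is an average customer (for any such $t$) or a quitter (only at $t=1$), and $c^*_t$ is still available in the real state ${\cal A}_t$; otherwise it plays $\perp$. At stages $t \in \{n+1, \ldots, 2n\}$, the policy greedily serves the highest-reward sticker still available in ${\cal A}_t$. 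By construction ${\cal S}^{\myco}$ is class-ordered.

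The analysis rests on a coupling via independent $B_{i,t} \sim \mathrm{Bernoulli}(p_i)$ random variables that drive the departures in both processes simultaneously. Under this coupling, the simulated $\tilde{\cal A}_t$ coincides pointwise with the actual ${\cal S}^*$-trajectory, so that $\tilde c^*_t = c^*_t$ throughout. Letting $X^*_i$ and $X^{\myco}_i$ denote the indicators that customer $i$ is eventually served under the respective policies, I would decompose $\rev({\cal S}^*) - \rev({\cal S}^{\myco}) = \sum_i r_i \cdot ( \pr{X^*_i = 1} - \pr{X^{\myco}_i = 1} )$ and bound each of the three class contributions separately.

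The main technical step, and what I expect to be the principal obstacle, is establishing the pointwise dominance $X^{\myco}_i \geq X^*_i$ for every $i \in {\cal C}_{\myavg}$. Suppose ${\cal S}^*$ serves such an $i$ at some stage $t$. Since customers are served at most once, $c^*_s \neq i$ for all $s < t$, hence ${\cal S}^{\myco}$ does not try to serve $i$ before stage $t$ either. Conditional on this, whether $i$ survives to stage $t$ in the real process is entirely determined by the same coins $B_{i,1}, \ldots, B_{i,t-1}$ governing its survival in the simulated process, and the two outcomes coincide. It follows that $i \in {\cal A}_t$, so ${\cal S}^{\myco}$ indeed serves $i$ at stage $t$. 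Consequently, the average-customer contribution to the reward gap is non-positive.

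The remaining two class losses reduce to direct probabilistic estimates, leveraging that $\rev({\cal S}^*) \geq r_{\max}$. For quitters, $X^*_i = 1 > X^{\myco}_i$ forces ${\cal S}^*$ to serve $i$ at some stage $t \geq 2$, which requires $i$ to survive past stage $1$, an event of probability at most $1 - p_i \leq \eps/n$; summing over the (at most $n$) quitters yields an aggregate quitter-loss of at most $\eps \cdot r_{\max} \leq \eps \cdot \rev({\cal S}^*)$. For stickers, ${\cal S}^{\myco}$ serves them by reward rank during the $n$ reserved stages, so the $k$-th ranked sticker is served whenever it survives to stage $n+k$, an event of probability at least $(1-p_i)^{2n-1} \geq 1 - 2\eps/n$; summing over the (at most $n$) stickers gives an additional loss of at most $2\eps \cdot r_{\max} \leq 2\eps \cdot \rev({\cal S}^*)$. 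Combining the three estimates and absorbing constants yields $\rev({\cal S}^{\myco}) \geq (1 - 6\eps) \cdot \rev({\cal S}^*)$, as required.
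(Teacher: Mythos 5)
Your proposal is correct and follows essentially the same route as the paper's proof: the parallel-simulation device is exactly the paper's ``marking'' construction (marked customers play the role of your simulated services), and both arguments then decompose the reward by customer class, show the average-customer contribution is preserved via the coupling of departure coins, and bound the quitter and sticker losses by elementary survival-probability estimates against $r_{\max} \leq \rev({\cal S}^*)$. The only cosmetic difference is that you phrase the comparison as a pointwise coupling while the paper settles for the distributional identity between the unmarked set and ${\cal S}^*$'s available set, and your constants are in fact slightly sharper.
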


As an aside, it is not difficult to verify that such extended policies, where customers may not be served in certain stages, can easily be converted to our original notion of service policies, without any loss in their expected reward. To this end, whenever our policy reaches a state $(t,{\cal A}_t)$ in which ${\cal S}(t,{\cal A}_t) = \perp$, we can simulate its departures $\hat{\cal D}_t \sim {\cal D}_t$ and choose a random action according to the distribution of the next state, ${\cal S}(t+1,{\cal A}_t\setminus \hat{\cal D}_t)$. In other words, our policy treats the system state as being reset to $(t+1,{\cal A}_t\setminus \hat{\cal D}_t)$. This procedure can be repeated for any such state, without loss in the expected reward.

\paragraph{Reduction: Ending up with only average customers?} In what follows, we devise a polynomial-time reduction from our original setting to one comprised of only average customers. To this end, for any subset of average customers $A \subseteq {\cal C}_{\myavg}$, let ${\cal I}^A$ be a newly defined instance, whose initial set of available customers is precisely $A$; all other model ingredients remain unchanged. Now, let us assume that, given any subset $A \subseteq {\cal C}_{\myavg}$, we can compute a $(1-\eps)$-approximate service policy ${\cal S}^{\approx A}$ with respect to ${\cal I}^A$. In other words, letting ${\cal S}^{*A}$ be an optimal policy for the latter instance, $\rev_{ {\cal I}^A }( {\cal S}^{\approx A} ) \geq (1 - \eps) \cdot \rev_{ {\cal I}^A }( {\cal S}^{*A} )$, with the convention that subscripts of $\rev_{\cdot}$ will indicate the instance being considered.

In this setting, we initially guess the identity of customer $i_1 = {\cal S}^{ \myco }(1,{\cal A}_1)$, namely, the one being served by the class-ordered policy ${\cal S}^{ \myco }$ in stage~1. Knowing who this customer is, the policy ${\cal S}$ we define for our original instance operates as follows:
\begin{itemize}
    \item {\em Stage 1}: Here, we duplicate the service decision made by ${\cal S}^{ \myco }$, meaning that ${\cal S}(1, {\cal A}_1) = i_1$.
    
    \item {\em Stages $2, \ldots, n$}: Let $A \subseteq {\cal C}_{\myavg}$ be the set of available average customers at the beginning of stage $2$. Namely, $A$ is the specific realization of ${\cal A}_2 \cap {\cal C}_\myavg$ we observe at this time, where ${\cal A}_2 = {\cal A}_1 \setminus ( \{ i_1 \} \cup {\cal D}_1 )$. Then, across stages $2, \ldots, n$, our policy simply ignores all non-average customers, and employs the approximate policy ${\cal S}^{\approx A}$.
    
    \item {\em Stages $n+1, \ldots, 2n$:} We revert back to following the class-ordered policy ${\cal S}^{\myco}$. That is, only stickers will be served, picking the highest-reward sticker who is still available at each stage.
\end{itemize}
The next claim, whose proof appears in Appendix~\ref{app:reduction}, shows that the policy we have just proposed is indeed near-optimal. 

\begin{lemma} \label{lem:reduction_average}
$\rev({\cal S})\geq (1-7\eps)\cdot \rev({\cal S}^*)$.
\end{lemma}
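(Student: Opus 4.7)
My plan is to decompose the expected total reward of both ${\cal S}$ and ${\cal S}^{\myco}$ into the three stage-blocks that the construction of ${\cal S}$ already respects---stage $1$, stages $2,\ldots,n$, and stages $n+1,\ldots,2n$---and to compare the two policies block by block. Once I establish that ${\cal S}$ matches ${\cal S}^{\myco}$ in the first and third blocks and attains at least a $(1-\eps)$-fraction of its reward in the middle block, the claim will follow by chaining with Lemma~\ref{lem:lazy_exists}:
\[
\rev({\cal S}) ~\geq~ (1-\eps)\cdot \rev({\cal S}^{\myco}) ~\geq~ (1-\eps)(1-6\eps)\cdot \rev({\cal S}^*) ~\geq~ (1-7\eps)\cdot \rev({\cal S}^*),
\]
where the last step uses $(1-\eps)(1-6\eps) = 1 - 7\eps + 6\eps^2 \geq 1 - 7\eps$.

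The stage-$1$ comparison is immediate by construction, since ${\cal S}(1,{\cal A}_1)={\cal S}^{\myco}(1,{\cal A}_1)=i_1$ and hence both policies collect the same immediate reward. For stages $n+1,\ldots,2n$, I would rely on two observations. First, neither policy serves a sticker during stages $1,\ldots,n$: the class-ordered ${\cal S}^{\myco}$ by definition, while ${\cal S}$ serves $i_1$ (a quitter or average) at stage $1$ and then defers to ${\cal S}^{\approx A}$, whose domain is the instance ${\cal I}^A\subseteq {\cal C}_{\myavg}$. Consequently each sticker undergoes exactly $n$ independent Bernoulli$(p_i)$ survival trials in either process, so the random set of surviving stickers entering stage $n+1$ has the same distribution under ${\cal S}$ and under ${\cal S}^{\myco}$. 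Combined with the fact that both policies then apply the identical greedy rule of picking the highest-reward available sticker, the contribution of this block is distributionally identical across the two policies.

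The substantive work is the middle block. Here I would condition on the realized average subset $A={\cal A}_2\cap {\cal C}_{\myavg}$ at the beginning of stage $2$. Under ${\cal S}^{\myco}$, throughout stages $2,\ldots,n$ no sticker may be served (class-order condition~2) and no quitter may be served (condition~1 confines quitters to stage $1$, and $i_1$ is already gone), so all reward collected in this window comes from averages. Because customer departures are mutually independent, the average sub-process evolves independently of the sticker sub-process, and therefore the restriction of ${\cal S}^{\myco}$ to its average-customer decisions in stages $2,\ldots,n$ is itself a (possibly extended) adaptive policy on the instance ${\cal I}^A$; its conditional expected reward contribution in this window equals the value of that projected policy on ${\cal I}^A$, and so is at most $\rev_{{\cal I}^A}({\cal S}^{*A})$. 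Meanwhile, ${\cal S}$ in this same window runs ${\cal S}^{\approx A}$ verbatim, so its conditional contribution equals $\rev_{{\cal I}^A}({\cal S}^{\approx A}) \geq (1-\eps)\cdot \rev_{{\cal I}^A}({\cal S}^{*A})$. Taking expectation over $A$ yields the $(1-\eps)$-bound for the middle block.

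The step that requires the most care---and in my view the main obstacle---is the projection argument above: one must verify that restricting ${\cal S}^{\myco}$ to its average-customer actions really does produce a non-anticipative policy on ${\cal I}^A$, rather than one that covertly exploits sticker information. This follows from mutual independence of departure decisions, which makes the sticker trajectory exogenous randomness that can be integrated out without altering the conditional distribution of the average sub-process; after integration, the resulting policy is measurable with respect to the averages-only filtration and is thus feasible for ${\cal I}^A$. Aggregating the three block comparisons then yields $\rev({\cal S})\geq (1-\eps)\cdot \rev({\cal S}^{\myco})$, and composing with Lemma~\ref{lem:lazy_exists} produces the claimed $(1-7\eps)$-guarantee.
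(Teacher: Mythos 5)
Your proposal is correct and follows essentially the same route as the paper: the identical three-block decomposition (stage $1$ matched exactly, stages $2,\ldots,n$ compared via conditioning on $A={\cal A}_2\cap{\cal C}_{\myavg}$ and the chain $\rev_{{\cal I}^A}({\cal S}^{\approx A})\geq(1-\eps)\rev_{{\cal I}^A}({\cal S}^{*A})\geq(1-\eps)\cdot(\text{middle-block value of }{\cal S}^{\myco})$, stages $n+1,\ldots,2n$ matched exactly), followed by composing with Lemma~\ref{lem:lazy_exists}. Your explicit justification of the projection step---that the averages-only restriction of ${\cal S}^{\myco}$ is a feasible non-anticipative policy for ${\cal I}^A$ by independence of the departure processes---is a point the paper leaves implicit, and it is handled correctly.
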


\paragraph{Intermediate summary.} Given this result, our reduction can be employed to convert any black-box quasi-PTAS for the restricted class of instances with only average customers into a quasi-PTAS for arbitrarily-structured instances, potentially including customers of all three classes. Hence, going forward, we restrict attention to instances exclusively formed by average customers. For notational convenience, we assume that ${\cal C}_{\myavg} = [n]$, whereas ${\cal C}_{\mystick} ={\cal C}_{\myquit} = \emptyset$. 
 
\subsection{Step 1: Rounding customer rewards} \label{subsec:round_values}

In what follows, we explain how to round the collection of customer rewards $r_1, \ldots, r_n$, ending up with only $O( \frac{ 1 }{ \eps } \log \frac{ n }{ \eps } )$ distinct values while losing a negligible fraction of the optimal expected reward. It is worth pointing out that analyzing the effects of this rounding procedure on our expected reward will be rather straightforward. In essence, the overall customer departure process remains probabilistically identical; the only difference would be that, whenever a customer is served, we collect a slightly smaller reward.

\paragraph{The rounded rewards.} To this end, let $r_{\max} = \max_{i \in [n]} r_i$ be the largest reward of any customer. For each customer $i \in [n]$, we define a rounded-down reward $\tilde{r}_i$ depending on how his/her original reward $r_i$ relates to $r_{\max}$:
\begin{itemize}
\item For customers with $r_i \in [\frac{ \eps }{ n } \cdot r_{\max}, r_{\max}]$, we set $\tilde{r}_i$ as the result of rounding $r_i$ down to the nearest power of $1 + \eps$.

\item For the remaining customers, with $r_i \leq \frac{ \eps }{ n } \cdot r_{\max}$, their rounded-down reward is set as $\tilde{r}_i = 0$.
\end{itemize}
One can easily verify that this rounding procedure indeed creates $O( \frac{ 1 }{ \eps } \log \frac{ n }{ \eps } )$ distinct values.

\paragraph{Reward effects.} To account for the resulting loss in expected reward, let $\rev^{ (r) }( \cdot )$ represent our original expected reward function, defined with $\{ r_i \}_{i \in [n]}$ as customer rewards, and let $\rev^{ (\tilde{r}) }( \cdot )$ be the analogous  function, defined with respect to $\{ \tilde{r}_i \}_{i \in [n]}$. Lemma~\ref{lem:effect_rnd_v} below, whose proof is provided in Appendix~\ref{app:proof_lem_effect_rnd_v}, compares the expected rewards $\rev^{ (\tilde{r}) }( {\cal S} )$ and $\rev^{ (r) }( {\cal S} )$ of every adaptive service policy ${\cal S}$. Below, ${\cal S}^*$ designates a fixed optimal service policy with respect to the original expected reward function $\rev^{ (r) }( \cdot )$. 

\begin{lemma} \label{lem:effect_rnd_v}
For every adaptive service policy ${\cal S}$,
\[ (1 - \eps) \cdot \rev^{ (r) }( {\cal S} ) - \eps \cdot \rev^{ (r) }( {\cal S}^* ) ~~\leq~~ \rev^{ (\tilde{r}) }( {\cal S} ) ~~\leq~~ \rev^{ (r) }( {\cal S} ) \ . \]
\end{lemma}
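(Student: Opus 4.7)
The plan is to exploit the fact that the customer departure dynamics depend only on the probabilities $\{p_i\}_{i \in [n]}$ and on the policy's service decisions, but not on the reward values at all. Consequently, for any fixed adaptive policy ${\cal S}$, the random sequence of served customers along the planning horizon has exactly the same distribution whether we work with the original rewards $\{r_i\}_{i \in [n]}$ or with the rounded rewards $\{\tilde r_i\}_{i \in [n]}$. Both $\rev^{(r)}({\cal S})$ and $\rev^{(\tilde r)}({\cal S})$ can therefore be written as expectations over the same random sum, only with $r$ replaced by $\tilde r$. This immediately gives the upper bound $\rev^{(\tilde r)}({\cal S}) \le \rev^{(r)}({\cal S})$, since $\tilde r_i \le r_i$ for every $i \in [n]$ by construction, and reduces the lower bound to a pointwise comparison of $\tilde r_i$ to $r_i$.

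For the lower bound, I would decompose the per-customer gap $r_i - \tilde r_i$ according to reward magnitude. For a \emph{high-reward} customer with $r_i \in [\frac{\eps}{n} \cdot r_{\max}, r_{\max}]$, rounding down to the nearest power of $1+\eps$ yields $\tilde r_i \ge \frac{r_i}{1+\eps} \ge (1-\eps) r_i$, and so $r_i - \tilde r_i \le \eps r_i$. For a \emph{low-reward} customer with $r_i \le \frac{\eps}{n} \cdot r_{\max}$, we have $\tilde r_i = 0$, so its per-customer loss is at most $\frac{\eps}{n} \cdot r_{\max}$. Pathwise, the policy serves at most $n$ customers in total, so summing the low-reward deficits over a single realization gives a total additive loss of at most $\eps \cdot r_{\max}$, while the sum of the high-reward deficits contributes at most $\eps \cdot \rev^{(r)}({\cal S})$ in expectation. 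Combining these two estimates yields
\[
\rev^{(\tilde r)}({\cal S}) \;\geq\; (1-\eps)\cdot \rev^{(r)}({\cal S}) \;-\; \eps \cdot r_{\max}.
\]

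The final step is to upgrade the additive slack $\eps \cdot r_{\max}$ into the desired $\eps \cdot \rev^{(r)}({\cal S}^*)$. This follows from the trivial observation that there exists a policy attaining expected reward at least $r_{\max}$: namely, the one that serves an argmax-reward customer in stage~$1$ and collects exactly $r_{\max}$ there, while accruing only non-negative rewards thereafter. Hence $\rev^{(r)}({\cal S}^*) \ge r_{\max}$, and plugging this inequality into the previous display produces precisely the bound asserted by the lemma.

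I do not anticipate any genuine obstacle here, since the departure process is reward-independent and the rounding is merely a pointwise perturbation of the integrand on a fixed probability space. The only mildly delicate point is transferring the pathwise bound on the cumulative low-reward deficit to expectation, which is immediate from the deterministic upper bound of $n$ on the number of customers served in any realization combined with the non-negativity of rewards.
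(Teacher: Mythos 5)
Your proposal is correct and follows essentially the same route as the paper's proof: the paper merges your two cases into the single pointwise bound $\tilde{r}_i \geq \frac{r_i}{1+\eps} - \frac{\eps}{n}\cdot r_{\max}$ and sums against the service probabilities $\sum_t \prpar{{\cal S}(t,{\cal A}_t)=i} \leq 1$, but the ingredients — reward-independence of the departure dynamics, the $(1+\eps)$-rounding loss, the at-most-$n$ served customers absorbing the $\frac{\eps}{n} r_{\max}$ truncation, and $\rev^{(r)}({\cal S}^*) \geq r_{\max}$ — are identical.
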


\paragraph{Intermediate summary.} Going forward, we utilize the rounded customer rewards $\{\tilde{r}\}_{i\in [n]}$ in place of the original ones, $\{r_i\}_{i\in [n]}$. In light of Lemma~\ref{lem:effect_rnd_v}, any $(1-\eps)$-approximate service policy with respect to this modified instance yields a $(1-3\eps)$-approximation for the optimal expected reward in the original instance. For simplicity, we will continue to denote the customer rewards by $\{r_i\}_{i\in [n]}$, despite utilizing their rounded counterparts $\{\tilde{r}_i\}_{i\in [n]}$. 

\subsection{Step 2: Rounding departure probabilities} \label{subsec:round_prob}

We proceed by describing our rounding procedure for the collection of departure probabilities $p_1, \ldots, p_n$, ending up with only $O( \frac{ 1 }{ \eps^2 } \log \frac{ n }{ \eps } )$ distinct values. In sharp contrast to step~1, one can easily observe that, due to considering rounded-up probabilities, we are in fact altering the customer departure process. Rigorously analyzing the effects of this alteration on our expected reward turns out to be highly non-trivial, and therefore, we  separately study this question in Section~\ref{subsec:rev_preserve}.

\paragraph{The rounded probabilities.} With respect to the departure probability $p_i$ of each customer $i \in [n]$, we define two counterpart probabilities, an up-rounding $p^{\uparrow}_i$ and a down-rounding $p^{\downarrow}_i$; by inspecting their exact definitions below, one can easily verify that $p_i \in [p^{\downarrow}_i, p^{\uparrow}_i]$. For this purpose, we remind the reader that the reduction outlined in Section~\ref{subsec:reduce_average} allows us to focus on problem instances consisting of only average customers, meaning that $p_i \in [\frac{ \eps }{ n^2 }, 1 - \frac{ \eps }{ n }]$ for all $i\in [n]$. Consequently, letting $\delta = \frac{\eps^2}{16}$, the rounded probabilities $p^{\uparrow}_i$ and $p^{\downarrow}_i$ are determined based on the following case disjunction:
\begin{itemize}
    \item When $p_i \in [\frac{ \eps }{ n^2 }, \frac{ \eps }{ 4 }]$, we set $p^{\uparrow}_i = U( p_i )$ and $p^{\downarrow}_i = \frac{ U( p_i ) }{ 1 + \delta }$, where $U(\cdot)$ is an operator that rounds its argument up to the nearest power of $1 + \delta$.

        \item When $p_i \in (\frac{ \eps }{ 4 }, 1 - \frac{ \eps }{ n }]$, we set $p^{\uparrow}_i = 1 - D(1 - p_i)$ and $p^{\downarrow}_i = 1 - (1 + \delta) \cdot D(1 - p_i)$. In this case, the operator $D(\cdot)$ rounds its argument down to the nearest power of $1 + \delta$.
\end{itemize}
It is worth pointing out that we will make extensive algorithmic use of rounded-up probabilities, whereas the rounded-down ones are introduced solely for purposes of analysis and their intended role will become clear in subsequent sections.

\paragraph{Property 1: Number of different values.} The first important observation is that, although the original set of departure probabilities $\{ p_i \}_{i \in [n]}$ may consist of $n$ distinct values, their rounded counterparts $\{ p^{ \uparrow }_i \}_{i \in [n]}$ and $\{ p^{ \downarrow }_i \}_{i \in [n]}$ only take $O( \frac{ 1 }{ \eps^2 } \log \frac{ n }{ \eps } )$ distinct values. Recalling that $\delta = \frac{ \eps^2 }{ 16 }$, this claim can be verified by noting that when $p_i \in [\frac{ \eps }{ n^2 }, \frac{ \eps }{ 4 }]$, there are only $O( \frac{ 1 }{ \delta } \log n )$ powers of $1 + \delta$ to which $U(\cdot)$ can map any probability $p_i$ within this range. Similarly, when $p_i \in (\frac{ \eps }{ 4 }, 1 - \frac{ \eps }{ n }]$, there are only $O( \frac{ 1 }{ \delta } \log \frac{ n }{ \eps } )$ powers of $1 + \delta$ to which $D(\cdot)$ can map $1 - p_i$.

\paragraph{Property 2: Departures over multiple stages.} Taking the viewpoint of a single customer, we proceed by establishing a particularly useful property that will be utilized by our subsequent analysis. As explained below, when the appropriate corrections are made, the probability of any average customer to remain in the system over any sequence of successive stages when departures are governed by $\{ p^{ \uparrow }_i \}_{i \in [n]}$ can be related to the analogous probability with respect to $\{ p^{ \downarrow }_i \}_{i \in [n]}$. To formalize this notion, we consider two scenarios, one where the latter sequence is short and the other where it is of arbitrary length.

Let us call a sequence of $\Delta$ successive stages ``short'' when $\Delta < \frac{ 1 }{ \eps }$. The first property we establish shows that the probability $( 1 - p^{ \uparrow }_i )^{ \Delta }$ of an average customer $i$ not to depart along such a sequence with respect to the rounded-up departure probability, $p^{ \uparrow }_i$, nearly matches the analogous probability $( 1 - p^{ \downarrow }_i )^{ \Delta }$ with respect to the rounded-down  probability, $p^{ \downarrow }_i$. The proof of this claim appears in Appendix~\ref{app:proof_lem_comp_prob_short}.

\begin{lemma} \label{lem:comp_prob_short}
For every customer $i \in {\cal C}_{\myavg}$ and for every $\Delta \in [0, \frac{ 1 }{ \eps })$,
\[ ( 1 - p^{ \uparrow }_i )^{ \Delta } ~~\geq~~ (1 - \eps) \cdot ( 1 - p^{ \downarrow }_i )^{ \Delta } \ . \]
\end{lemma}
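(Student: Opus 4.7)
The plan is to split the argument into the two cases that appear in the definition of $p_i^{\uparrow}$ and $p_i^{\downarrow}$, reduce each one to a clean inequality about the ratio $(1 - p_i^{\uparrow})/(1 - p_i^{\downarrow})$, and then raise to the $\Delta$-th power using either $(1-x)^\Delta \geq 1 - x\Delta$ or the bound $(1+x)^\Delta \leq e^{x\Delta}$. Throughout, I would keep in mind that the gain over the target $(1 - \eps)$ slack is comfortable because $\delta = \eps^2/16$ is much smaller than $\eps$, while $\Delta < 1/\eps$.

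First I would dispense with the ``large probabilities'' regime, i.e.\ $p_i \in (\eps/4, 1 - \eps/n]$. By definition, $1 - p_i^{\uparrow} = D(1-p_i)$ and $1 - p_i^{\downarrow} = (1+\delta) \cdot D(1-p_i)$, so the per-stage survival ratio equals
\[
\frac{1 - p_i^{\uparrow}}{1 - p_i^{\downarrow}} \;=\; \frac{1}{1+\delta},
\]
independently of $p_i$. Hence $(1-p_i^{\uparrow})^{\Delta}/(1-p_i^{\downarrow})^{\Delta} = (1+\delta)^{-\Delta} \geq e^{-\delta\Delta} \geq e^{-\eps/16}$. To finish, I would verify that $e^{-\eps/16} \geq 1 - \eps$ for every $\eps \in (0, \tfrac{1}{4})$, which is a short Taylor-expansion check: $e^{-\eps/16} \geq 1 - \eps/16 \geq 1 - \eps$.

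Next I would handle the ``small probabilities'' regime $p_i \in [\eps/n^2, \eps/4]$, where the definitions give $p_i^{\uparrow} = (1+\delta)\, p_i^{\downarrow}$. Writing $q = p_i^{\downarrow}$, the ratio per stage becomes
\[
\frac{1 - (1+\delta)q}{1-q} \;=\; 1 - \frac{\delta q}{1-q}.
\]
Since $q \leq p_i^{\uparrow} = U(p_i) \leq (1+\delta)\cdot \eps/4 < \eps/3$ (as $\delta < 1$), the denominator $1-q$ is bounded below by a constant, so $\delta q/(1-q) \leq 2\delta q \leq 2\delta\cdot\eps/3$. Applying Bernoulli's inequality to the $\Delta$-th power yields
\[
\left( 1 - \frac{\delta q}{1-q} \right)^{\!\Delta} \;\geq\; 1 - \Delta\cdot \frac{\delta q}{1-q} \;\geq\; 1 - \frac{1}{\eps}\cdot 2\delta\cdot \frac{\eps}{3} \;=\; 1 - \frac{2\delta}{3} \;\geq\; 1 - \eps,
\]
where the final inequality is immediate from $\delta = \eps^2/16 \ll \eps$.

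The only real obstacle is being careful about the upper bound on $p_i^{\downarrow}$ in Case 1, since the rounding operator $U(\cdot)$ can push $p_i$ just past $\eps/4$; I would handle this by explicitly tracking the $(1+\delta)$ factor and noting that $\eps/3$ comfortably dominates it for $\eps < 1/4$. With that, both cases yield the claimed inequality, completing the proof.
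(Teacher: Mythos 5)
Your proof is correct and follows essentially the same route as the paper's: the same case split according to the two rounding regimes, with the same underlying estimate that the per-stage multiplicative gap between $1-p^{\uparrow}_i$ and $1-p^{\downarrow}_i$ is of order $\delta$ (resp.\ $\delta p^{\downarrow}_i$), so that over $\Delta < \frac{1}{\eps}$ stages the accumulated loss is $O(\delta/\eps)=O(\eps)$. One small caveat: Bernoulli's inequality $(1-x)^{\Delta}\geq 1-\Delta x$ holds only for $\Delta\geq 1$ and reverses for $\Delta\in(0,1)$, so since the lemma allows real $\Delta\in[0,1)$ you should add the trivial observation that in that range $(1-x)^{\Delta}\geq 1-x\geq 1-2\delta\geq 1-\eps$.
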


Now, when the sequence of stages is of arbitrary length, elementary examples demonstrate that the probabilities $( 1 - p^{ \uparrow }_i )^{ \Delta }$ and $( 1 - p^{ \downarrow }_i )^{ \Delta }$ could be exponentially far apart. For instance, when $p_i = \frac{ 1 }{ 2 }$, our rounding approach could set $p^{ \uparrow }_i = 1 - \frac{ 1 }{ 1 + \delta } \cdot \frac{ 1 }{ 2 }$ and $p^{ \downarrow }_i = \frac{ 1 }{ 2 }$, in which case $\frac{ ( 1 - p^{ \downarrow }_i )^n }{ ( 1 - p^{ \uparrow }_i )^n } = \exp ( \Omega( \delta n ) ) = \exp ( \Omega( \eps^2 n ) )$. Motivated by this observation, the next property argues that these probabilities are still comparable when we are cutting an $\eps$-fraction of the $p^{ \uparrow }$-related sequence. The proof of this claim appears in Appendix~\ref{app:proof_lem_comp_prob_lengthy}.

\begin{lemma} \label{lem:comp_prob_lengthy}
For every customer $i \in {\cal C}_{\myavg}$ and for every $\Delta \geq 0$,
\[ ( 1 - p^{ \uparrow }_i )^{(1 - \eps) \Delta } ~~\geq~~ ( 1 - p^{ \downarrow }_i )^{ \Delta } \ . \]
\end{lemma}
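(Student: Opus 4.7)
The plan is to reduce the claim to its $\Delta = 1$ specialization: since $x \mapsto x^\Delta$ is monotone nondecreasing on $[0,1]$ for every $\Delta \geq 0$, once we establish the baseline inequality
\[ (1 - p^{\uparrow}_i)^{1-\eps} ~~\geq~~ 1 - p^{\downarrow}_i , \]
raising both sides to the $\Delta$-th power yields the general statement. Taking logarithms of this baseline (valid since $p^{\uparrow}_i \in (0,1)$ for every average customer) and rearranging, it is equivalent to
\[ \eps \cdot \bigl| \ln(1 - p^{\uparrow}_i) \bigr| ~~\geq~~ \ln \frac{1 - p^{\downarrow}_i}{1 - p^{\uparrow}_i} , \]
which is the form I would actually verify, via a case analysis mirroring the two regimes in the definitions of $p^{\uparrow}_i$ and $p^{\downarrow}_i$.

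The easy case is $p_i \in (\eps/4, 1 - \eps/n]$, where the construction forces $1 - p^{\downarrow}_i = (1 + \delta)(1 - p^{\uparrow}_i)$. The right-hand side above then equals exactly $\ln(1 + \delta) \leq \delta = \eps^2/16$. For the left-hand side, the monotonicity $p^{\uparrow}_i \geq p_i > \eps/4$ combined with the elementary bound $|\ln(1-x)| \geq x$ yields at least $\eps^2/4$, which comfortably dominates.

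The delicate case, which I expect to be the main obstacle, is $p_i \in [\eps/n^2, \eps/4]$. Here $p^{\downarrow}_i = p^{\uparrow}_i / (1+\delta)$, so the two rounded probabilities differ only by an $O(\delta \cdot p^{\uparrow}_i)$ amount, and a careful first-order analysis becomes unavoidable. The plan is to compute exactly
\[ \frac{1 - p^{\downarrow}_i}{1 - p^{\uparrow}_i} ~~=~~ 1 + \frac{\delta \cdot p^{\uparrow}_i}{(1 + \delta)(1 - p^{\uparrow}_i)} , \]
upper-bound its logarithm by $\ln(1+z) \leq z$, and lower-bound the left-hand side by $\eps \cdot p^{\uparrow}_i$ via $|\ln(1-x)| \geq x$. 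After cancelling the common factor $p^{\uparrow}_i$, the target inequality reduces to $\eps \cdot (1 + \delta) \cdot (1 - p^{\uparrow}_i) \geq \delta$, which holds with room to spare: since $p^{\uparrow}_i \leq (1+\delta) \cdot \eps/4$, the factor $1 - p^{\uparrow}_i$ stays close to $1$, and plugging in $\delta = \eps^2/16$ with $\eps < 1/4$ leaves an order-of-magnitude slack. The key takeaway is that the choice $\delta = \eps^2/16$ is precisely calibrated so that the second-order correction introduced by the rounding gap is dominated by the first-order slack that the $\eps$ in the exponent contributes through $|\ln(1 - p^{\uparrow}_i)|$.
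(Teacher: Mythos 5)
Your proposal is correct and follows essentially the same route as the paper's proof: a case analysis over the two rounding regimes of $p_i$, with first-order logarithmic/exponential estimates showing that the rounding gap $\delta = \frac{\eps^2}{16}$ is dominated by the slack $\eps \cdot |\ln(1 - p^{\uparrow}_i)| \geq \eps \cdot p^{\uparrow}_i$ contributed by the exponent. The only cosmetic difference is your upfront reduction to $\Delta = 1$ (valid here because both sides are pure $\Delta$-th powers), whereas the paper carries $\Delta$ through the chain of inequalities.
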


\subsection{Reward near-preservation bounds} \label{subsec:rev_preserve}

\paragraph{The rounded-up process.} Now let us consider a natural alteration of the random process described in Section~\ref{subsec:model_desc}, where the original departure probabilities $\{ p_i \}_{i \in [n]}$ are substituted by their rounded-up counterparts $\{ p^{ \uparrow }_i \}_{i \in [n]}$; all other model ingredients remain unchanged. In this context, the analytical question we examine is motivated by the intuitive concern that one may be losing much of the expected reward, due to customers who are now departing at a faster rate. Put in concrete terms, letting $\rev^{ (p) }( \cdot )$ represent our original expected reward function, defined with $\{ p_i \}_{i \in [n]}$ as departure probabilities, we use $\rev^{ (p^{ \uparrow }) }( \cdot )$ to denote its analogous reward function, defined with respect to $\{ p^{ \uparrow }_i \}_{i \in [n]}$. Then, the specific questions we wish to address, whose algorithmic implications will be examined in Section~\ref{subsec:alg_consequences}, can be succinctly stated as follows:
\begin{quote}
\begin{enumerate}
\item {\em Given an optimal policy ${\cal S}^*$ for the original process, is there a corresponding policy ${\cal S}^{ \uparrow }$ for the rounded-up process with $\rev^{ (p^{ \uparrow }) }( {\cal S}^{ \uparrow } ) \approx \rev^{ (p) }( {\cal S}^* )$?} 

\item {\em Conversely, given a policy ${\cal S}$ for the rounded-up process, is there a corresponding policy ${\cal S}^{ \downarrow }$ for the original process with $\rev^{ (p) }( {\cal S}^{ \downarrow } ) \approx \rev^{ (p^{ \uparrow }) }( {\cal S} )$?}
\end{enumerate}
\end{quote}

\paragraph{Answer 1: Negligible reward loss in rounding up.} Our main technical result answers question~1 in the affirmative by showing that, despite considering customers with higher departure rates, we can nearly match the expected reward $ \rev^{ (p) }( {\cal S}^* )$ in the rounded-up process using a carefully constructed policy ${\cal S}^{ \uparrow }$. Specifically, we dedicate Section~\ref{sec:proof_thm_rev_diff_direction} to proving the next result.

\begin{theorem} \label{thm:rev_diff_direction}
There exists an adaptive service policy ${\cal S}^{ \uparrow } : \bbN \times 2^{[n]} \to [n]$ satisfying
\[ \rev^{ (p^{ \uparrow }) }( {\cal S}^{ \uparrow } ) ~~\geq~~ (1 - 2\eps) \cdot \rev^{ (p) }( {\cal S}^* ) \ . \]
\end{theorem}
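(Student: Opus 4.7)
My plan is to construct a randomized adaptive policy ${\cal S}^{\uparrow}$ for the rounded-up process and to compare $\rev^{(p^{\uparrow})}({\cal S}^{\uparrow})$ against $\rev^{(p)}({\cal S}^*)$ via a probabilistic coupling between the two processes. Concretely, on a joint probability space I would introduce independent Uniform-$[0,1]$ variables $U_{i,t}$ for every customer $i \in [n]$ and every stage $t \in \bbN$, and declare $i$ to depart after stage $t$ in the original process when $U_{i,t} \leq p_i$ and in the rounded-up process when $U_{i,t} \leq p^{\uparrow}_i$. Since $p_i \leq p^{\uparrow}_i$, departures in the original process form a subset of those in the rounded-up process; thus, as long as both processes serve the same customers in every preceding stage, we maintain the monotone inclusion ${\cal A}^B_t \subseteq {\cal A}^A_t$ between the rounded-up and original available sets.

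\textbf{Construction of ${\cal S}^{\uparrow}$.} I would then define ${\cal S}^{\uparrow}$ to internally maintain a simulation of the original $A$-process, updated from its own observed history in the rounded-up process. Whenever $B$ records a non-departure of customer $i$ at stage $t$, the coupling automatically propagates a non-departure in the simulated $A$; whenever $B$ records a departure, the simulation draws a fresh Bernoulli with success probability $p_i / p^{\uparrow}_i$ to resolve whether the original $A$-process would also have had a departure at that stage. At each stage, the policy queries ${\cal S}^*$ on the simulated $A$-state, $i_t := {\cal S}^*(t,\tilde{\cal A}_t)$, serves $i_t$ in $B$ when $i_t \in {\cal A}^B_t$, and serves no one otherwise; in either case, $i_t$ is removed from $\tilde{\cal A}_t$ in the simulation. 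By construction, the marginal distribution of $(\tilde{\cal A}_t)_{t \geq 1}$ coincides with that of the true $A$-trajectory under ${\cal S}^*$, giving $\ex{\sum_t r_{i_t}} = \rev^{(p)}({\cal S}^*)$, while the actual $B$-reward earned by ${\cal S}^{\uparrow}$ equals $\ex{\sum_t r_{i_t}\,\bbInd[i_t \in {\cal A}^B_t]}$.

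\textbf{Reward bound.} The theorem therefore reduces to bounding the loss $\ex{\sum_t r_{i_t}\,\bbInd[i_t \in \tilde{\cal A}_t \setminus {\cal A}^B_t]}$ by $2\eps \cdot \rev^{(p)}({\cal S}^*)$. The event $\{i_t \in \tilde{\cal A}_t \setminus {\cal A}^B_t\}$ captures exactly the scenario in which $i_t$ survives in the simulated $A$-process but has already departed in $B$, which under the coupling requires some $U_{i_t, s}$ with $s < t$ to land in the gap interval $(p_{i_t}, p^{\uparrow}_{i_t}]$. To control the aggregate contribution of these events, I would partition the stage horizon into a short-horizon regime with $t < 1/\eps$, where Lemma~\ref{lem:comp_prob_short} (combined with $p^{\downarrow}_i \leq p_i$) delivers a clean multiplicative $(1-\eps)$-comparison between $(1-p^{\uparrow}_i)^{t-1}$ and $(1-p_i)^{t-1}$, and a long-horizon regime with $t \geq 1/\eps$, where Lemma~\ref{lem:comp_prob_lengthy} delivers the analogous comparison after an $(1-\eps)$-fractional time truncation that absorbs the exponential blow-up described right before Lemma~\ref{lem:comp_prob_lengthy}. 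Weighing these per-stage survival comparisons against the rewards $r_{i_t}$ and summing over stages should deliver the desired $2\eps$-factor loss.

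\textbf{Main obstacle.} The delicate step will be the last one. The survival inequalities of Lemmas~\ref{lem:comp_prob_short} and~\ref{lem:comp_prob_lengthy} are stated for marginal probabilities of a single customer, whereas the loss term is governed by the joint distribution of $(i_t, {\cal A}^B_t)$ conditional on the entire history of both the simulated $A$-trajectory and the observed $B$-departures. Translating the marginal survival bounds into the conditional statements needed for the loss analysis will likely require a tower-expectation argument that cleanly separates the randomness producing the identity $i_t$ (through ${\cal S}^*$ acting on $\tilde{\cal A}_t$) from the randomness governing its conditional survival in $B$, and that further stratifies stages by the short/long threshold $1/\eps$. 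Orchestrating these considerations cohesively, while tracking the correlations between ${\cal S}^*$'s adaptive choices and the coupled departure realizations, is what I expect to occupy the bulk of the proof.
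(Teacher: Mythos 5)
There is a genuine gap, and it is exactly the point your construction glosses over: the time realignment between the two processes. In your coupling, both the simulated original process and the actual rounded-up process run on the \emph{same} clock --- the customer $i_t = {\cal S}^*(t,\tilde{\cal A}_t)$ is to be served at stage $t$ of the rounded-up process. Under the monotone coupling via the uniforms $U_{i,s}$, the conditional probability that $i_t$ is still available in $B$ given that it survived in the simulation is $\prod_{s<t} \frac{1-p^{\uparrow}_{i_t}}{1-p_{i_t}} \leq (1+\delta)^{-(t-1)} \approx e^{-\delta t}$ in the regime $p_i > \eps/4$, which for $t = \Theta(n)$ is exponentially small. So the loss term you need to bound is not $O(\eps)$ of the optimum: your policy would collect essentially nothing from customers that ${\cal S}^*$ serves in late stages. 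Lemma~\ref{lem:comp_prob_lengthy} cannot rescue this as stated in your plan, because it compares survival over $(1-\eps)\Delta$ stages of the rounded-up process against $\Delta$ stages of the other process; to invoke it, the policy itself must serve the customer intended for stage $\Delta$ already at stage roughly $(1-\eps)\Delta$, i.e., the rounded-up clock must run ahead. Your phrase ``after an $(1-\eps)$-fractional time truncation'' names the right quantity but does not build the truncation into the policy or the coupling, and that is where the real work lies.

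The paper's proof is organized around precisely this mechanism. It designates every $\frac{1}{\eps}$-th stage (shifted by a parameter $\gamma$) as a milestone, maps stage $t$ of the rounded-up process to the $t$-th \emph{regular} stage $\mu(t) > t$ of the other process, and constructs the coupling so that within each inter-milestone window (of length $< \frac{1}{\eps}$, where Lemma~\ref{lem:comp_prob_short} applies) departures in the slower process imply departures in the faster one, while the surplus departures accumulated by the rounded-up process are ``paid back'' by a single compensating catch-up variable at the milestone. This keeps the per-window survival gap at $1-\eps$ uniformly over the horizon (Lemma~\ref{lem:stay}), instead of letting it compound. The second factor of $(1-\eps)$ comes from averaging over the random shift $\gamma$: the $\eps$-fraction of the benchmark policy's stages that fall on milestones are simply skipped. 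Two further points worth noting: the paper first passes from ${\cal S}^*$ under $p$ to a dominating policy under the rounded-down probabilities $p^{\downarrow}$ (via Theorem~\ref{thm:rev_diff_easy}) so that the coupling ratios take the clean form $p^{\downarrow}_i/p^{\uparrow}_i$, and it must verify a non-anticipativity property (Lemma~\ref{lem:no_future}) ensuring the simulated history does not peek at future rounded-up departures --- a concern your proposal raises in its last paragraph but which is secondary to the missing realignment.
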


At least intuitively, the main technical hurdle in establishing this result lies in the cumulative effects of our rounding errors on customers' departure probabilities. As can be concluded from Lemma~\ref{lem:comp_prob_short}, along the first few stages, the probability that any given customer remains available in the rounded-up process does not deviate by much from the analogous probability with respect to the original process. However, as we progress to later stages, small per-stage rounding errors accumulate over time, creating an exponential gap between these probabilities, since customers are now departing at a higher rate. Therefore, it is  unclear whether there exists a policy ${\cal S}^{ \uparrow }$ in the rounded-up process that is capable of reward-wise competing against the optimal policy ${\cal S}^*$ in the original process. As further explained in Section~\ref{sec:proof_thm_rev_diff_direction}, the proof of Theorem~\ref{thm:rev_diff_direction} is based on the construction of a probabilistic coupling, which quite surprisingly, enables us to ``realign'' the two departure processes and concede only a small loss of reward in expectation.

\paragraph{Answer 2: No reward loss in restoring.} Now, in the opposite direction, it turns out that when moving from the rounded-up process back to the original one, we can actually preserve the expected reward of any service policy. At least intuitively, this property proceeds by observing that, when departure probabilities are decreased, customers are more likely to remain available up to any given stage, which may only increase the expected reward of any given policy, assuming it is suitably adapted. We formalize this intuition in the next theorem, whose constructive proof is provided in Appendix~\ref{app:proof_thm_rev_diff_easy}.

\begin{theorem} \label{thm:rev_diff_easy}
Let $\{ p_i^- \}_{i \in [n]}$ and $\{ p_i^+ \}_{i \in [n]}$ be two collections of departure probabilities, with $p_i^- \leq p_i^+$ for all $i \in [n]$. Then, for any adaptive policy ${\cal S} : \bbN \times 2^{[n]} \to [n]$, there exists a policy ${\cal S}^{ \downarrow } : \bbN \times 2^{[n]} \to [n]$ satisfying $\rev^{ (p^-) }( {\cal S}^{ \downarrow } ) \geq \rev^{ (p^+) }( {\cal S} )$.
\end{theorem}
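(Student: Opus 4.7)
The plan is to prove Theorem~\ref{thm:rev_diff_easy} by coupling the two processes on a shared probability space and running ${\cal S}$ on a \emph{shadow} $p^+$-process that is simulated inside ${\cal S}^{\downarrow}$. Concretely, I would take independent uniforms $U_{i,t} \in [0,1]$ for each customer $i$ and stage $t$, and declare that $i$ departs at stage $t$ in the $p^-$-process iff $U_{i,t} < p_i^-$ and in the $p^+$-process iff $U_{i,t} < p_i^+$. Since $p_i^- \leq p_i^+$, every $p^-$-departure is also a $p^+$-departure, so under this coupling the surviving set in the $p^+$-process is always contained in the analogous set for the $p^-$-process.

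Building on this coupling, I would define ${\cal S}^{\downarrow}$ as a (initially randomized and history-dependent) policy that maintains a shadow state ${\cal A}_t^+$, initialized at $[n]$, and preserves the invariant ${\cal A}_t^+ \subseteq {\cal A}_t$, where ${\cal A}_t$ is the actual state in the $p^-$-process. At each stage $t$ with ${\cal A}_t^+ \neq \emptyset$, the policy serves $j = {\cal S}(t, {\cal A}_t^+)$, which lies in ${\cal A}_t^+ \subseteq {\cal A}_t$ and is therefore a legal action; when ${\cal A}_t^+ = \emptyset$ but ${\cal A}_t \neq \emptyset$, ${\cal S}^{\downarrow}$ may serve arbitrarily (any extra reward is only a bonus). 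The delicate point is the shadow-state update: ${\cal S}^{\downarrow}$ only observes which customers actually depart in the $p^-$-process, namely those with $U_{i,t} < p_i^-$, not the full value $U_{i,t}$. However, conditional on $i \in {\cal A}_t^+ \setminus \{j\}$ surviving the $p^-$-step (so $U_{i,t} \geq p_i^-$), the probability that $i$ would have departed in the $p^+$-process is $(p_i^+ - p_i^-)/(1 - p_i^-)$. Hence ${\cal S}^{\downarrow}$ simulates the shadow departures using independent coins with these probabilities, and a straightforward induction on $t$ confirms that the joint distribution of $({\cal A}_t^+, {\cal A}_t)$ produced this way agrees with the one induced by the true coupling.

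With the construction in place, the customers served by ${\cal S}^{\downarrow}$ in the $p^-$-process contain, stage by stage, those served by ${\cal S}$ in the $p^+$-process, with identical per-stage rewards; hence $\rev^{(p^-)}({\cal S}^{\downarrow}) \geq \rev^{(p^+)}({\cal S})$. To meet the deterministic Markovian form ${\cal S}^{\downarrow} : \bbN \times 2^{[n]} \to [n]$ demanded by the theorem, I would appeal to the classical finite-horizon MDP fact that the supremum of expected reward over all randomized history-dependent policies is attained by a deterministic Markovian one, obtained via backward induction on the Bellman value function in the $p^-$-process; any such optimal Markovian policy automatically inherits the reward bound from our randomized construction. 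The step I expect to require the most care is verifying the internal shadow-state simulation, i.e., checking that conditioning on the history of observed $p^-$-departures and then tossing independent $(p_i^+ - p_i^-)/(1 - p_i^-)$ coins on the surviving shadow customers yields precisely the law of the $p^+$-survivors at stage $t+1$. Everything else, namely the invariant ${\cal A}_t^+ \subseteq {\cal A}_t$, the matching of rewards, and the Markovian reduction, is routine bookkeeping.
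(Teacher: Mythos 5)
Your proposal is correct and takes essentially the same route as the paper: the paper's proof maintains a set of ``eliminated'' customers, each surviving customer being added independently with probability $(p_i^+ - p_i^-)/(1-p_i^-)$, and has ${\cal S}^{\downarrow}$ serve ${\cal S}(t, {\cal A}_t \setminus {\cal E}_t)$ --- which is exactly your shadow state ${\cal A}_t^+$ --- proving by induction (Lemma~\ref{lem:same_prob_elim}) that ${\cal A}_t \setminus {\cal E}_t$ has the law of the state of ${\cal S}$ in the $p^+$-process. The final reduction from a randomized, memory-augmented policy back to a deterministic Markovian one is invoked in both arguments in the same (lightly justified) way.
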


\subsection{Algorithmic consequences} \label{subsec:alg_consequences}

Along the transformations described in Sections~\ref{subsec:round_values} and~\ref{subsec:round_prob}, we have shown that our resulting rounded-up process is guaranteed to satisfy two structural properties. First, the customer rewards $\{ r_i \}_{i \in [n]}$ take only $O( \frac{ 1 }{ \eps } \log \frac{ n }{ \eps } )$ distinct values, and second, the departure probabilities $\{ p^{ \uparrow }_i \}_{i \in [n]}$ take only $O( \frac{ 1 }{ \eps^2 } \log \frac{ n }{ \eps } )$ distinct values. Based on these characteristics, we argue that an optimal adaptive policy in this setting can be computed in quasi-polynomial time by means of dynamic programming. This result is formally stated in Theorem~\ref{thm:solve_rounded}, whose proof appears in Section~\ref{subsec:alg_rounded}. In a nutshell, one is no longer required to keep track of the precise identity of each available customer, as in the general-purpose state description $(t, {\cal A}_t)$ of Section~\ref{subsec:model_desc}. Instead, we have created an ideal setting, where customers are segmented into $O_{ \eps }( \log^2 n )$ reward-probability classes, and due to their identical probabilistic role, it suffices to know the number of customer available out of each such class. 

\begin{theorem} \label{thm:solve_rounded}
An optimal adaptive policy for the rounded-up process can be computed in $O( n^{ O_{ \eps }( \log^2 n ) } )$ time.
\end{theorem}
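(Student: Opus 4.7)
The plan is to formulate the problem as a compact dynamic program whose state is a vector of counts over a small number of customer classes, rather than the full subset $\mathcal{A}_t \subseteq [n]$. After the two rounding steps of Sections~\ref{subsec:round_values} and~\ref{subsec:round_prob}, the rounded instance contains only $O(\frac{1}{\eps}\log\frac{n}{\eps})$ distinct reward values and $O(\frac{1}{\eps^2}\log\frac{n}{\eps})$ distinct rounded-up departure probabilities. I would therefore partition the customers into $K = O_\eps(\log^2 n)$ classes, where all customers in class $k \in [K]$ share a common reward $\rho_k$ and a common rounded-up departure probability $\pi_k$. The key observation is that customers within the same class are probabilistically exchangeable throughout the system dynamics: swapping two such customers affects neither the distribution of future departures nor the rewards collected. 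By a standard swap/averaging argument applied to the recursion~\eqref{eqn:recursive_rev}, there is an optimal policy that is symmetric with respect to this exchangeability, so it suffices to track the count vector $\mathbf{n} = (n_1,\ldots,n_K) \in \mathbb{Z}_{\geq 0}^K$ with $n_k$ denoting the number of currently available customers in class $k$.

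Next, I would write a Bellman equation over states of the form $(t, \mathbf{n})$ with $t \in \{1,\ldots,2n\}$ and $\sum_k n_k \leq n$, namely
\[
V(t,\mathbf{n}) ~=~ \max_{k \,:\, n_k \geq 1}\Bigl\{ \rho_k + \sum_{\mathbf{d}} \Pr[\mathbf{d} \mid \mathbf{n} - e_k]\cdot V(t+1,\, \mathbf{n} - e_k - \mathbf{d}) \Bigr\},
\]
with $V(t,\mathbf{0}) = 0$, where $e_k$ denotes the $k$-th standard basis vector, the outer maximum also includes the ``serve nothing'' action $\perp$ to accommodate the extended policies introduced in Section~\ref{subsec:reduce_average}, and the departure distribution factorizes as
\[
\Pr[\mathbf{d} \mid \mathbf{n}'] ~=~ \prod_{k=1}^K \binom{n'_k}{d_k} \pi_k^{d_k} (1-\pi_k)^{n'_k - d_k}.
\]
The answer is then $V(1, \mathbf{n}^{\mathrm{init}})$, with the optimal action stored at each state to reconstruct a policy. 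Computing these binomial products in advance is routine and contributes only a $\poly(n)$ factor.

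For the running time, since $\sum_k n_k \leq n$ the number of admissible count vectors is at most $\binom{n+K}{K} \leq (n+1)^K = n^{O_\eps(\log^2 n)}$, and multiplying by the $O(n)$ range of $t$ stays within this bound. At each state the maximum is over $K+1$ actions, and the expected continuation value involves summing over at most $\prod_k (n_k+1) \leq (n+1)^K$ possible departure vectors $\mathbf{d}$. Aggregating yields a total running time of $O\bigl(n \cdot (n+1)^K \cdot (K+1) \cdot (n+1)^K\bigr) = n^{O_\eps(\log^2 n)}$, as claimed.

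The conceptual content is light, so the only genuine obstacle is formally justifying the exchangeability reduction from $2^{[n]}$-indexed subsets to count vectors, particularly in the presence of the $\perp$ action. I would verify this by induction on $n - t$ using a pairwise swap: if an optimal policy treats two same-class customers asymmetrically at some state, the policy that averages these two decisions achieves the same expected reward, and iterating this operation yields a fully symmetric optimum that depends only on $\mathbf{n}$. Everything else — enumerating states, evaluating the Bellman equation, and bounding the running time — is a routine bookkeeping exercise.
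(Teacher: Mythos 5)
Your proposal is correct and follows essentially the same route as the paper: partition customers into reward--probability classes, observe that same-class customers are probabilistically exchangeable so a count vector suffices as the state, and evaluate the resulting Bellman recursion by enumerating the (product-of-binomials) departure vectors, giving $n^{O_\eps(\log^2 n)}$ states and per-state work. The only cosmetic differences (your inclusion of the $\perp$ action and the explicit swap-averaging justification of symmetry) are harmless and do not change the argument.
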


Motivated by this result, our algorithmic approach is rather straightforward. We first create all input ingredients related to the rounded-up process, and subsequently employ Theorem~\ref{thm:solve_rounded} to obtain an optimal policy ${\cal S}^{ \uparrow* }$ for the resulting instance. We then utilize Theorem~\ref{thm:rev_diff_easy}, producing an adaptive service policy $\tilde{\cal S}$ with respect to the original process, such that $\rev^{ (p) }( \tilde{\cal S} ) \geq \rev^{ (p^{ \uparrow }) }( {\cal S}^{ \uparrow* } )$, noting that the original departure probabilities $\{ p_i \}_{i \in [n]}$ are dominated by their rounded-up counterparts $\{ p^{ \uparrow }_i \}_{i \in [n]}$. We conclude that $\tilde{\cal S}$ is in fact a near-optimal policy with respect to the original instance, since
\begin{eqnarray*}
\rev^{ (p) }( \tilde{\cal S} ) & \geq & \rev^{ (p^{ \uparrow }) }( {\cal S}^{ \uparrow* } ) \\
& \geq & \rev^{ (p^{ \uparrow }) }( {\cal S}^{ \uparrow } ) \\
& \geq & (1 - 2\eps) \cdot \rev^{ (p) }( {\cal S}^* ) \ .
\end{eqnarray*}
Here, the second inequality is implied by the optimality of ${\cal S}^{ \uparrow* }$, meaning in particular that its expected reward (in $\rev^{ (p^{ \uparrow }) }$-terms) is at least as large as that of the policy ${\cal S}^{ \uparrow }$, whose existence has been established in Theorem~\ref{thm:rev_diff_direction}. The third inequality is precisely the one stated in the latter theorem.

\subsection{Proof of Theorem~\ref{thm:solve_rounded}} \label{subsec:alg_rounded}

Let $\{ r_{(\psi)} \}_{\psi \in \Psi}$ be the collection of distinct values taken by the customer rewards $r_1, \ldots, r_n$. Similarly, let $\{ p_{ (\lambda) } \}_{ \lambda \in \Lambda }$ be the set of distinct values taken by the departure probabilities  $p^{ \uparrow }_1, \ldots,  p^{ \uparrow }_n$. In what follows, we argue that one can compute an optimal service policy in $O( n^{ O( |\Psi| \cdot |\Lambda| ) } )$ time by means of dynamic programming. According to the discussion in Section~\ref{subsec:alg_consequences}, we know that $|\Psi| = O( \frac{ 1 }{ \eps } \log \frac{ n }{ \eps } )$ and $|\Lambda| = O( \frac{ 1 }{ \eps^2 } \log \frac{ n }{ \eps } )$, immediately leading to the $O( n^{ O_{ \eps }( \log^2 n ) } )$ running time stated in Theorem~\ref{thm:solve_rounded}.

\paragraph{State description.} For this purpose, for every $\psi \in \Psi$ and $\lambda \in \Lambda$, let us make use of ${\cal C}_{\psi, \lambda}$ to denote the collection of customers with reward $r_{(\psi)}$ and departure probability $p_{ (\lambda) }$. The important observation is that, within any such reward-probability class, all customers play precisely the same probabilistic role in our model. Consequently, in order to compute an optimal service policy, the exact identity of each available customer within any given reward-probability class is an overly refined state description, and instead, it suffices to keep track only of the combined number of such customers. Motivated by this observation, each state $(t, {\cal N}_t)$ of our dynamic program consists of the next two parameters:
\begin{itemize}
    \item The current stage index, $t$, taking one of the values $1, \ldots, n+1$.
    
    \item A $(|\Psi| \cdot |\Lambda|)$-dimensional count vector, ${\cal N}_t$, with the convention that ${\cal N}_{t, (\psi, \lambda)}$ stands for the number of available customers at the beginning of stage $t$ out of the reward-probability class ${\cal C}_{\psi, \lambda}$. Clearly, there are only $\prod_{ \psi \in \Psi, \lambda \in \Lambda } (|{\cal C}_{\psi, \lambda}| + 1) = O( n^{ O( |\Psi| \cdot |\Lambda| ) } )$ count vectors that are relevant to our purposes.
\end{itemize}

\paragraph{Value function and recursive equations.} For every state $(t, {\cal N}_t)$, our value function $\rev(t, {\cal N}_t)$ specifies the maximum-possible expected cumulative reward that can be attained along the process in question, starting at state $(t, {\cal N}_t)$. To express this function in recursive form, we first define terminal states as those where there are no available customers (i.e., ${\cal N}_t = \vec{0}$), in which case one clearly has $\rev(t, \vec{0}) = 0$. Now, for general states $(t, {\cal N}_t)$ with ${\cal N}_t \neq \vec{0}$, suppose that the optimal policy with respect to this state decides to serve a customer belonging to class ${\cal C}_{\psi^*, \lambda^*}$. Then, our immediate reward is $r_{(\psi^*)}$, and we proceed to stage $t+1$ with the random count vector ${\cal N}_t - e_{ \psi^*, \lambda^*} - {\cal D}_{ {\cal N}_t - e_{ \psi^*, \lambda^*} }$. To parse the latter expression, we note that:
\begin{itemize}
    \item The vector $e_{ \psi^*, \lambda^*}$ is simply a standard unit vector, by which we decrement ${\cal N}_t$ due to the customer who has just been served.
    
    \item The vector ${\cal D}_{ {\cal N}_t - e_{ \psi^*, \lambda^*} }$ is random, representing the set of departing customers at stage $t$. More precisely, given that the departure probabilities in each class ${\cal C}_{\psi, \lambda}$ are uniformly equal to $p_{ (\lambda) }$, and since customer departures are independent, it follows that $({\cal D}_{ {\cal N}_t - e_{ \psi^*, \lambda^*} })_{(\psi, \lambda)} \sim \mathrm{Binomial}(({\cal N}_t - e_{ \psi^*, \lambda^*})_{(\psi, \lambda)}, p_{ (\lambda) } )$; moreover, $\{ {\cal D}_{ {\cal N}_t - e_{ \psi^*, \lambda^*} } \}_{(\psi, \lambda)}$ are mutually independent. 
\end{itemize}
Based on this discussion, we infer that the optimal policy consists in picking, out of the still-active reward-probability classes ${\cal C}_{\psi, \lambda}$, one that maximizes $r_{(\psi)} + \expar{\rev(t+1, {\cal N}_t - e_{ \psi, \lambda} - {\cal D}_{ {\cal N}_t - e_{ \psi, \lambda} } ) }$, where the expectation here is taken over the randomness in ${\cal D}_{ {\cal N}_t - e_{ \psi, \lambda} }$. In other words, 
\[ \rev(t, {\cal N}_t) ~~=~~ \max_{ \MyAbove{ \psi \in \Psi, \lambda \in \Lambda: }{{\cal N}_{t, (\psi, \lambda)} \geq 1 } } \left\{   r_{(\psi)} + \ex{\rev(t+1, {\cal N}_t - e_{ \psi, \lambda} - {\cal D}_{ {\cal N}_t - e_{ \psi, \lambda} } ) } \right\} \ . \]

\paragraph{Running time.} From a running time perspective, a straightforward implementation of our dynamic program operates in $O( n^{ O( |\Psi| \cdot |\Lambda| ) } )$ time. Indeed, the state space over which we compute the function $\rev$ of size $O( n^{ O( |\Psi| \cdot |\Lambda| ) } )$. In addition, based on the recursive equations above, the main bottleneck in evaluating each such state is that of computing its inner expectation, $\expar{\rev(t+1, {\cal N}_t - e_{ \psi, \lambda} - {\cal D}_{ {\cal N}_t - e_{ \psi, \lambda} } ) }$. However, one can implement this procedure in $O( n^{ O( |\Psi| \cdot |\Lambda| ) } )$ time by enumerating over the support of ${\cal D}_{ {\cal N}_t - e_{ \psi, \lambda} }$, namely,
\[ \ex{\rev(t+1, {\cal N}_t - e_{ \psi, \lambda} - {\cal D}_{ {\cal N}_t - e_{ \psi, \lambda} } ) } ~~=~~ \sum_{D \leq {\cal N}_t - e_{ \psi, \lambda}} \pr{ {\cal D}_{ {\cal N}_t - e_{ \psi, \lambda} } = D } \cdot \rev(t+1, {\cal N}_t - e_{ \psi, \lambda} - D ) \ . \]
We mention in passing that probabilities of the form appearing above, $\prpar{ {\cal D}_{ {\cal N}_t - e_{ \psi, \lambda} } = D }$, can easily be derived in $n^{ O(1) }$ time, since each coordinate of ${\cal D}_{ {\cal N}_t - e_{ \psi, \lambda} }$ follows a known Binomial distribution (trivially upper-bounded by $n$), and since these coordinates are mutually independent.
\section{Reward Preservation of the Rounded-Up Process} \label{sec:proof_thm_rev_diff_direction}

This section is dedicated to presenting a coupling-based proof of Theorem~\ref{thm:rev_diff_direction}, showing that there exists an adaptive service policy ${\cal S}^{ \uparrow }$ for the rounded-up process whose expected reward nearly matches that of the optimal policy ${\cal S}^*$ for our original process. For ease of exposition, we first describe the overall structure of our proof.

\subsection{Proof outline} \label{subsec:proof_outline}

As it turns out, direct comparisons between expected rewards in the original and rounded-up processes are rather convoluted. Hence, following the discussion in Section~\ref{subsec:rev_preserve}, our analysis will utilize an intermediate alteration of the original process, in which the departure probabilities $\{ p_i \}_{i \in [n]}$ are substituted by their rounded-down counterparts $\{ p^{ \downarrow }_i \}_{i \in [n]}$; all other model ingredients remain unchanged. In this context, we make use of $\rev^{ (p^{ \downarrow }) }( \cdot )$ to denote the resulting expected reward function. Now, by Theorem~\ref{thm:rev_diff_easy}, we observe that with respect to the optimal policy ${\cal S}^{ * }$ for the original process, there exists a corresponding policy ${\cal S}^{ *\downarrow }$ for the rounded-down process whose expected reward in $\rev^{ (p^{ \downarrow }) }$-terms dominates that of ${\cal S}^{ * }$ in $\rev^{ (p) }$-terms. In other words, $\rev^{ (p^{ \downarrow}) }( {\cal S}^{ * \downarrow } ) \geq \rev^{ (p) }( {\cal S}^{ * } )$.  For this reason, in order to prove Theorem~\ref{thm:rev_diff_direction}, it suffices to argue that the rounded-up process admits a policy ${\cal S}^{ \uparrow }$ satisfying
$\rev^{ (p^{ \uparrow }) }( {\cal S}^{ \uparrow } ) \geq (1 - 2\eps) \cdot \rev^{ (p^{ \downarrow}) }( {\cal S}^{ *\downarrow } )$.

In the remainder of this section, we prove the existence of such a policy ${\cal S}^{\uparrow}$, operating in the rounded-up process. Informally speaking, our specific design of the policy ${\cal S}^{\uparrow}$ ensures that we are approximately imitating ${\cal S}^{ *\downarrow }$, while making-up for the rounding errors between $p^{\uparrow}$ and $p^{\downarrow}$ by skipping carefully-selected stages. The main technical ingredient employed to define the policy ${\cal S}^{\uparrow}$ and to lower-bound its expected reward is a probabilistic coupling between the rounded-down and rounded-up processes, which is presented in Sections~\ref{subsec:prelim2}-\ref{subsec:coupling_prop}. Based on this coupling, in Section~\ref{subsec:up_policy}, we specify how our policy ${\cal S}^{\uparrow}$ operates, essentially duplicating the service decisions made by ${\cal S}^{ *\downarrow }$ with respect to simulated system states. By exploiting this construction, in Section~\ref{subsec:analysis}, we argue that $\rev^{ (p^{ \uparrow }) }( {\cal S}^{ \uparrow } ) \geq (1 - 2\eps) \cdot \rev^{ (p^{ \downarrow}) }( {\cal S}^{ *\downarrow } )$, thereby concluding the proof of Theorem~\ref{thm:rev_diff_direction}.

\subsection{Preliminary definitions and notation} \label{subsec:prelim2}

In the upcoming discussion, we introduce  a number of auxiliary definitions that will serve as key ingredients of our coupling construction. We further describe an alternative way of viewing how customers make their departure decisions within the rounded-down and rounded-up processes; these are primarily meant to simplify certain parts of our analysis. 

\paragraph{Milestones and stage alignment.} To begin, assuming without loss of generality that $\frac{ 1 }{ \eps }$ takes an integer value, let $\gamma \in \{ 1, \ldots, \frac{1}{\eps} \}$ be a so-called shifting parameter, whose precise choice will be specified in Section~\ref{subsec:analysis}. As a side note, all arguments up until then work for any possible value of $\gamma$ within $\{ 1, \ldots, \frac{1}{\eps} \}$. Given this parameter, we define a special set of stages, ${\cal M}_{\gamma} = \{ t_0, t_1, \ldots \}$, referred to as milestones; every other stage is said to be regular. Here, the $0$-th milestone corresponds to $t_0 = 0$, which is clearly not a concrete stage but rather a convenient notation for subsequent definitions. Then, for every $k \in \bbN$, the $k$-th milestone corresponds to stage $t_k = \frac{k-1}{\eps} + \gamma$. In addition, we define the mapping $\mu: \bbN \rightarrow \bbN \setminus {\cal M}_{\gamma}$ that assigns each stage $t\in \bbN$ to the regular stage $\mu(t) = t+k$, where $k$ is the unique integer for which $t + k \in (t_k,t_{k+1})$. Put differently, $\mu(t)$ is simply the $t$-th regular stage. Since $\mu$ is clearly bijective, each regular stage $\tau\in \bbN \setminus {\cal M}_{\gamma}$ can be reciprocally mapped to a unique stage $t = \mu^{-1}(\tau)$ for which $\mu(t) = \tau$.

\paragraph{The $\bs{({{\cal X}^{\downarrow}},{\cal X}^{\uparrow})}$-representation.} We proceed by developing succinct representations of the rounded-down and rounded-up processes using Bernoulli random variables. Specifically, to capture the rounded-down process, we define a collection ${{\cal X}^{\downarrow}} = \{X^{\downarrow}_{i,t}\}_{i,t}$ of mutually independent Bernoulli random variables, where $\prpar{X^{\downarrow}_{i,t}=1} = p_i^{\downarrow}$ for every customer $i\in [n]$ and every stage $t \geq 1$. The outcome $X^{\downarrow}_{i,t}\in \{0,1\}$ will be interpreted as an indication of whether customer $i$ departs at the end of stage $t$ in the rounded-down process, if s/he is still available then. When this customer has already left the system before stage $t$, the information provided by $X^{\downarrow}_{i,t}$ will be ignored.

With this definition, the state variables $\{({\cal A}_t, {\cal D}_t)\}_{t\geq 1}$ introduced in Section~\ref{subsec:model_desc} can be recursively expressed as a function of ${{\cal X}^{\downarrow}}$ and the sequential service actions taken by the policy in question. For example, focusing on the policy ${\cal S}^{ *\downarrow }$, we denote by $(t,{\cal A}^{{\cal X}^{\downarrow}}_t)$  the system state at the beginning of stage $t$, with ${\cal A}^{{\cal X}^{\downarrow}}_t$ being the subset of remaining customers. As such, the next customer to be served is ${\cal S}^{ *\downarrow }(t,{\cal A}^{{\cal X}^{\downarrow}}_t) \in {\cal A}^{{\cal X}^{\downarrow}}_t$, and each remaining customer $i \in {\cal A}^{{\cal X}^{\downarrow}}_t \setminus \{ {\cal S}^{ *\downarrow }({t,{\cal A}^{{\cal X}^{\downarrow}}_t}) \}$ decides whether to depart from the system or to keep waiting for his/her turn to be served according to the Bernoulli outcome $X_{i,t}^{\downarrow}$. In other words, the random subset of departing customers is ${\cal D}^{{\cal X}^{\downarrow}}_t = \{ i\in {\cal A}^{{\cal X}^{\downarrow}}_t\setminus \{ {\cal S}^{ *\downarrow }({t,{\cal A}^{{\cal X}^{\downarrow}}_t})\}: X_{i,t}^{\downarrow} = 1\}$. When stage $t+1$ begins, we are left with the set of customers ${\cal A}^{{\cal X}^{\downarrow}}_{t+1} = {\cal A}^{{\cal X}^{\downarrow}}_{t} \setminus ({\cal D}^{{\cal X}^{\downarrow}}_t \cup \{{\cal S}^{ *\downarrow }({t,{\cal A}^{{\cal X}^{\downarrow}}_t})\})$.

Similarly, we develop an analogous representation of the rounded-up process. For this purpose, we define ${\cal X}^{\uparrow}  = \{X^{\uparrow}_{i,t}\}_{i,t}$ as a collection of mutually independent Bernoulli random variables, where $\prpar{X^{\uparrow}  _{i,t}=1} =p^{\uparrow}_i$ for every customer $i\in [n]$ and every stage $t \geq 1$. Here, $X^{\uparrow}_{i,t}\in \{0,1\}$ indicates whether customer $i$ departs at the end of stage $t$ in the rounded-up process, if  is still available then. For any given policy, we will denote by $(t,{\cal A}^{{\cal X}^{\uparrow}}_t)$ the system state at the beginning of stage $t$, where ${\cal A}^{{\cal X}^{\uparrow}}_t \subseteq[n]$ stands for the subset of customers remaining at that point in time.

\subsection{Probabilistic coupling} \label{subsec:coupling}

In this section, we construct a probabilistic coupling $({{\cal Y}^{\downarrow}},{\cal Y}^{\uparrow})$ between the unrelated random variables ${{\cal X}^{\downarrow}}$ and ${\cal X}^{\uparrow}$, meaning that ${{\cal Y}^{\downarrow}}\sim {{\cal X}^{\downarrow}}$ and ${{\cal Y}^{\uparrow}}\sim {{\cal X}^{\uparrow}}$. From a policy-design standpoint, we will define ${\cal Y}^{\downarrow}$ and ${\cal Y}^{\uparrow}$ in a correlated way, which will be exploited to develop our new service policy ${\cal S}^{\uparrow}$ in Section~\ref{subsec:up_policy}. Intuitively, the probabilistic relationship between ${{\cal Y}^{\downarrow}}$ and ${\cal Y}^{\uparrow}$ aims to ``align'' the rounded-down and rounded-up processes and to allow for a simpler comparison of their corresponding expected rewards. The natural misalignment arises since each customer $i \in [n]$ leaves at different rates in these two processes, with a departure probability of $p_i^{\downarrow}$ in the rounded-down process and with probability $p^{\uparrow}_i$ in the rounded-up one. At a high level, our coupling compensates for this gap by ``skipping'' milestone stages.

\paragraph{Initial sampling.} We start off by directly defining ${\cal Y}^{\uparrow} = \{Y^{\uparrow}_{i,t}\}_{i,t}$ as a collection of mutually independent Bernoulli random variables with the same distribution as ${{\cal X}^{\uparrow}}$, meaning that $\prpar{Y^{\uparrow}_{i,t} = 1} =p_i^{\uparrow}$ for all $i\in[n]$ and $t\geq1$. Given ${\cal Y}^{\uparrow}$, our method for generating ${\cal Y}^{\downarrow}$ will be based on specifying the conditional probability of ${\cal Y}^{\downarrow}$ relative to the outcomes of ${\cal Y}^{\uparrow}$.  For this purpose, we define several auxiliary collections of Bernoulli random variables -- ${\cal Z}$, ${\cal W}$, and ${\cal V}$ -- which will be utilized to construct ${{\cal Y}^{\downarrow}}$ later on; these are all independent of ${\cal Y}^{\uparrow}$. 
\begin{itemize}
    \item First, we generate a collection of mutually independent Bernoulli random variables ${\cal Z} = \{Z_{i,\tau}\}_{(i,\tau) \in [n] \times (\bbN \setminus {\cal M}_{\gamma})}$, each with a success probability of $\prpar{Z_{i,\tau} = 1}= \frac{ p^{\downarrow}_i }{ p^{\uparrow}_i }$. 
    
    \item Second, independently of ${\cal Z}$, we  generate a collection of mutually independent Bernoulli random variables ${\cal W} = \{W_{i,\tau}\}_{(i,\tau) \in [n] \times {\cal M}_{\gamma}}$, with a success probability of  $\prpar{W_{i,\tau} = 1} = \frac{ p_{i}^{\downarrow} - \xi_{i,\tau} }{ 1-\xi_{i,\tau} }$. Here, $\xi_{i,\tau} = 1- ( \frac{ 1 - p_{i}^{\uparrow} }{ 1-p_{i}^{\downarrow} } )^{{t}_{k} - {t}_{k-1} - 1}$, where $k\geq 1$ is the unique index for which $\tau = t_k$. In this case, it is easy to verify that the latter probability is indeed well-defined, since
    \[ 0~~\leq~~ \xi_{i,\tau} ~~\leq~~ 1 - \left(\frac{1-p_i^{\uparrow}}{1-p_i^{\downarrow}}\right)^{\frac{1}{\eps}-1} ~~\leq~~ 1 - \frac{(1-p_i^{\downarrow})^{\frac{1}{\eps}}}{(1-p_i^{\downarrow})^{\frac{1}{\eps}-1}} ~~=~~  p_{i}^{\downarrow}\ , \]
    where the third inequality follows from Lemma~\ref{lem:comp_prob_lengthy}, instantiated with $\Delta = \frac{1}{\eps}$.
    
    \item Finally, independently of ${\cal Z}$ and ${\cal W}$, we generate a collection of mutually independent Bernoulli random variables ${\cal V} = \{V_{i,\tau}\}_{(i,\tau) \in [n] \times {\cal M}_{\gamma}}$ that follow the same distribution as ${{\cal X}^{\downarrow}}$. Namely, each such variable has a success probability of $\prpar{V_{i,\tau} = 1}= p^{\downarrow}_i$.
\end{itemize}

\paragraph{Inductive construction.} For the purpose of specifying ${\cal Y}^{\downarrow} = \{Y^{\downarrow}_{i,\tau}\}_{i,\tau}$, we consider two distinct cases, depending on whether stage $\tau$ is a milestone or not:
\begin{enumerate}
    \item {\em When stage ${\tau}$ is regular (i.e., ${\tau \notin {\cal M}_{\gamma}}$):} In this case, we define
    \begin{equation} \label{eq:regular}
	Y^{\downarrow}_{i,\tau} ~~=~~ Y^{\uparrow}_{i,\mu^{-1}(\tau)} \cdot Z_{i,\tau} \ .
	\end{equation}
	At least intuitively, $Y^{\downarrow}_{i,\tau}$ is defined in a way that enforces the following property: When customer $i$ departs in stage $\mu^{-1}(\tau)$ of the rounded-up process (i.e., $Y^{\uparrow}_{i, \mu^{-1}(\tau)} = 1$), s/he departs in stage $\tau$ of the rounded-up process with conditional probability $\prpar{ Z_{i,\tau} = 1 } = \frac{ p^{\downarrow}_i }{p^{\uparrow}_i}$, which is exactly the correct ratio of departure rates between these two processes. Conversely, when customer $i$ does not depart in stage $\mu^{-1}(\tau)$ of the rounded-up process, s/he does not depart in stage $\tau$ of the rounded-down process as well.
	
	\item {\em When stage ${\tau}$ is a milestone (i.e., ${\tau \in {\cal M}_{\gamma}}$):} In this case, $\tau = t_k$ for some $k\geq 1$, and we set 
	\begin{eqnarray} \label{eq:x-iu}
	Y^{\downarrow}_{i,\tau} ~~=~~
	\begin{dcases}
	1 - \left(1-W_{i,\tau}\right)\cdot \prod_{s= {t}_{k-1}+1}^{{t}_{k}-1}(1-Y^{\uparrow}_{i,\mu^{-1}(s)}) \qquad & \text{if } Y^{\downarrow}_{i,1} = \cdots = Y^{\downarrow}_{i,\tau-1} = 0  \\
	{V}_{i,\tau} & \text{otherwise}
	\end{dcases}
	\end{eqnarray}
    To gain some preliminary intuition for this definition, one should observe that our specific form of correlation between $Y^{\downarrow}_{i,\tau}$ and $Y^{\uparrow}_{i,\mu^{-1}({t}_{k-1}+1)},\ldots, Y^{\uparrow}_{i,\mu^{-1}({t}_{k}-1)}$ enforces the following property: When customer $i$ departs at one of the stages $\mu^{-1}({t}_{k-1}+1), \ldots, \mu^{-1}({t}_{k}-1)$ of the rounded-up process, s/he departs in the rounded-down process no later than stage $t_{k}$. As such, this definition is meant to ensure that the extra departures incurred by the rounded-up process at stages $\mu^{-1}({t}_{k-1}+1), \ldots, \mu^{-1}({t}_{k}-1)$ are ``compensated'' in the rounded-down process by departures at the current milestone, $t_k$. 
\end{enumerate}

\subsection{Correctness and properties of the coupling construction} \label{subsec:coupling_prop}

We first establish a non-anticipative property relating ${{\cal Y}^{\downarrow}}$ to ${\cal Y}^{\uparrow}$. Specifically, the next claim, whose proof is provided in Appendix~\ref{app:proof_lem_no_future}, shows that our approach for generating the rounded down process does not utilize ``future information'' about the rounded-up process.  To formalize this notion, let us introduce some auxiliary notation to represent the history of any given stochastic process, ${\cal P} = \{P_t\}_{t\in {\bb N}}$. To this end, for every stage $t\geq 1$, the sequence of random variables $P_1, \ldots, P_t$ up to this stage will be denoted by ${\cal P}^{\leq t}$. In complement, ${\cal P}^{> t}$ will stand for the sequence of random variables $P_{t+1}, P_{t+2},\ldots$ from stage $t+1$ onward. 

\begin{lemma} \label{lem:no_future}
There exist deterministic mappings $\nu_1(\cdot)$ and $\nu_2(\cdot)$ such that
\[ {{\cal Y}^{\downarrow}}^{\leq \tau} ~~=~~ 
\begin{dcases}
\nu_1({\cal Z}^{\leq \tau},{\cal W}^{\leq \tau},{\cal V}^{\leq \tau},{{\cal Y}^{\uparrow}}^{\leq  \mu^{-1}(\tau)}) \qquad & \text{\rm if } \tau\notin {\cal M}_\gamma \\
 \nu_2({\cal Z}^{\leq \tau},{\cal W}^{\leq \tau},{\cal V}^{\leq \tau},{{\cal Y}^{\uparrow}}^{\leq  \mu^{-1}(\tau-1)}) \qquad & \text{\rm if } \tau\in {\cal M}_\gamma  
\end{dcases} \]
In particular, ${{\cal Y}^{\downarrow}}^{\leq \tau}$ is independent of $({\cal Z}^{> \tau},{\cal W}^{> \tau},{\cal V}^{> \tau},{{\cal Y}^{\uparrow}}^{>  \mu^{-1}(\tau)})$  for every regular stage $\tau \geq 1$. Similarly, ${{\cal Y}^{\downarrow}}^{\leq \tau}$ is independent of $({\cal Z}^{> \tau},{\cal W}^{> \tau},{\cal V}^{> \tau},{{\cal Y}^{\uparrow}}^{>  \mu^{-1}(\tau-1)})$ for every milestone $\tau \in {\cal M}_{\gamma}$.
\end{lemma}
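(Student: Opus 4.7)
I plan to prove both parts of the statement simultaneously by strong induction on $\tau$, directly unpacking the definitions~\eqref{eq:regular} and~\eqref{eq:x-iu}. The inductive hypothesis will assert that for every stage $\sigma < \tau$, the history ${{\cal Y}^{\downarrow}}^{\leq \sigma}$ is already a deterministic function of the appropriate truncations of ${\cal Z},{\cal W},{\cal V}$, and ${\cal Y}^{\uparrow}$, with the ${\cal Y}^{\uparrow}$-index bounded by $\mu^{-1}(\sigma)$ when $\sigma \notin {\cal M}_\gamma$ and by $\mu^{-1}(\sigma-1)$ when $\sigma \in {\cal M}_\gamma$. The base case $\tau = 1$ is handled by directly reading off the relevant formula, depending on whether $\gamma > 1$ (regular) or $\gamma = 1$ (milestone, in which case $Y^{\downarrow}_{i,1} = W_{i,1}$).

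For the inductive step at a regular stage $\tau \notin {\cal M}_\gamma$, formula~\eqref{eq:regular} writes $Y^{\downarrow}_{i,\tau}$ as the product $Y^{\uparrow}_{i,\mu^{-1}(\tau)} \cdot Z_{i,\tau}$, whose two factors clearly lie within ${{\cal Y}^{\uparrow}}^{\leq \mu^{-1}(\tau)}$ and ${\cal Z}^{\leq \tau}$ respectively. Combined with the inductive hypothesis applied to ${{\cal Y}^{\downarrow}}^{\leq \tau-1}$, whose ${\cal Y}^{\uparrow}$-index is bounded by $\mu^{-1}(\tau-1) \leq \mu^{-1}(\tau)$ irrespective of whether $\tau-1$ is regular or a milestone, the composition yields the desired mapping $\nu_1$.

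For the inductive step at a milestone $\tau = t_k \in {\cal M}_\gamma$, formula~\eqref{eq:x-iu} branches on whether ${{\cal Y}^{\downarrow}}^{\leq \tau-1}$ vanishes on coordinate $i$ (a condition expressible via the inductive hypothesis), and then writes $Y^{\downarrow}_{i,\tau}$ in terms of $W_{i,\tau}$ together with $\{ Y^{\uparrow}_{i,\mu^{-1}(s)} \}_{s = t_{k-1}+1}^{t_k - 1}$, or in terms of $V_{i,\tau}$. The crucial bookkeeping point is that each stage $s$ in the product range is regular with $\mu^{-1}(s) \leq \mu^{-1}(t_k - 1) = \mu^{-1}(\tau-1)$, so every ${\cal Y}^{\uparrow}$-variable appearing sits within ${{\cal Y}^{\uparrow}}^{\leq \mu^{-1}(\tau-1)}$. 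Moreover, $\tau - 1 = t_k - 1$ is itself regular because consecutive milestones are exactly $\frac{1}{\eps} \geq 5$ stages apart when $\eps \in (0, \tfrac{1}{4})$, so the inductive hypothesis already expresses ${{\cal Y}^{\downarrow}}^{\leq \tau-1}$ with ${\cal Y}^{\uparrow}$-dependence capped at $\mu^{-1}(\tau-1)$, matching the asserted bound for the milestone case. This produces the mapping $\nu_2$.

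The independence assertion follows essentially for free from the structural part. By construction, the four collections ${\cal Y}^{\uparrow}, {\cal Z}, {\cal W}, {\cal V}$ are mutually independent, and each consists internally of mutually independent Bernoulli variables. Hence the past tuple appearing as the argument of $\nu_1$ or $\nu_2$ is jointly independent of its complementary future tuple, and since ${{\cal Y}^{\downarrow}}^{\leq \tau}$ is a deterministic function of the past tuple, it inherits this independence. The main obstacle throughout is purely one of bookkeeping: verifying that the indices $\mu^{-1}(s)$ occurring in the milestone formula are all dominated by $\mu^{-1}(\tau-1)$, which is precisely the reason the statement can afford to drop back by one stage on the ${\cal Y}^{\uparrow}$-side at milestones. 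No probabilistic subtlety beyond the initial independence of the auxiliary collections is needed.
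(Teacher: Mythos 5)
Your proposal is correct and follows essentially the same route as the paper: unroll the recursive definitions~\eqref{eq:regular} and~\eqref{eq:x-iu} stage by stage (the paper phrases this via per-stage mappings $\pi_1,\pi_2$ rather than an explicit induction, but the content is identical), check that every ${\cal Y}^{\uparrow}$-index encountered is capped at $\mu^{-1}(\tau)$ or $\mu^{-1}(\tau-1)$ as appropriate, and deduce independence from the mutual independence of the auxiliary collections. The only cosmetic slip is invoking $\mu^{-1}(\tau-1)$ when $\tau-1$ might be a milestone (where $\mu^{-1}$ is undefined and the inductive bound is actually $\mu^{-1}(\tau-2)$, which is even smaller), but this does not affect the argument.
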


We proceed to verify that our coupling construction is indeed valid, meaning that the marginal distributions of ${\cal Y}^{\uparrow}$ and ${\cal Y}^{\downarrow}$ are identical to those of ${\cal X}^{\uparrow}$ and ${\cal X}^{\downarrow}$, respectively. As mentioned in Section~\ref{subsec:coupling}, the rounded-up process ${\cal Y}^{\uparrow}$ was directly defined as a collection of mutually independent Bernoulli random variables with the exact same distribution as ${{\cal X}^{\uparrow}}$, implying that the relation ${\cal Y}^{\uparrow} \sim {{\cal X}^{\uparrow}}$ is straightforward. In the remainder of this section, we prove that such a relation also holds in regard to the rounded-down process, arguing that ${{\cal Y}^{\downarrow}}$ and ${{\cal X}^{\downarrow}}$ follow precisely the same distribution.

\begin{lemma}  \label{lem:marginal}
${{\cal Y}^{\downarrow}}\sim {{\cal X}^{\downarrow}}$.
\end{lemma}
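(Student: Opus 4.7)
The plan is to reduce the claim to a one-customer problem and then induct on the milestone index. Independence across the customer index $i$ is immediate: each $Y^{\downarrow}_{i,\tau}$ is constructed from customer-$i$-indexed input variables drawn from ${\cal Y}^{\uparrow}, {\cal Z}, {\cal W}, {\cal V}$, and all four of these collections are mutually independent across $i$. Hence it suffices to fix an arbitrary $i \in [n]$ and show that $\{Y^{\downarrow}_{i,\tau}\}_{\tau \geq 1}$ consists of i.i.d. Bernoulli$(p_i^{\downarrow})$ random variables. I would prove this by induction on the milestone index $k$, with the inductive hypothesis that $(Y^{\downarrow}_{i,1}, \ldots, Y^{\downarrow}_{i,t_{k-1}})$ is a vector of i.i.d. Bernoulli$(p_i^{\downarrow})$ random variables.

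For the inductive step I work through block $k$, namely stages $t_{k-1}+1, \ldots, t_k$. The regular-stage variables are handled immediately: each $Y^{\downarrow}_{i,s} = Y^{\uparrow}_{i,\mu^{-1}(s)} \cdot Z_{i,s}$ is the product of two independent Bernoullis with success probability $p_i^{\uparrow} \cdot (p_i^{\downarrow}/p_i^{\uparrow}) = p_i^{\downarrow}$, these products use disjoint input variables across distinct $s$, and by Lemma~\ref{lem:no_future} the fresh input variables entering block $k$ are independent of everything that determined $(Y^{\downarrow}_{i,1}, \ldots, Y^{\downarrow}_{i,t_{k-1}})$. The non-trivial step is the milestone variable $Y^{\downarrow}_{i,t_k}$, which I analyze by splitting on the event $E$ that $Y^{\downarrow}_{i,s} = 0$ for all $s < t_k$. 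On $\bar E$, definition~\eqref{eq:x-iu} gives $Y^{\downarrow}_{i,t_k} = V_{i,t_k}$, which is Bernoulli$(p_i^{\downarrow})$ and independent of $({\cal Y}^{\uparrow}, {\cal Z}, {\cal W})$; since every specific non-all-zero history $(Y^{\downarrow}_{i,s})_{s<t_k}$ is a deterministic function of these collections alone, the independence of $V_{i,t_k}$ from that history is automatic.

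The key computation is on $E$. Here $Y^{\downarrow}_{i,t_k} = 1 - (1 - W_{i,t_k}) \prod_{s=t_{k-1}+1}^{t_k-1}(1 - Y^{\uparrow}_{i,\mu^{-1}(s)})$, and conditioning on $E$ enforces $Y^{\uparrow}_{i,\mu^{-1}(s)} Z_{i,s} = 0$ for each regular $s$ in the block. Marginalizing out $Z_{i,s}$ yields
\[
\prpar{Y^{\uparrow}_{i,\mu^{-1}(s)} = 0 \mid Y^{\uparrow}_{i,\mu^{-1}(s)} Z_{i,s} = 0} ~~=~~ \frac{1 - p_i^{\uparrow}}{1 - p_i^{\downarrow}},
\]
and these conditional events remain independent across $s$ because they involve disjoint input variables. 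Multiplying gives
\[
\pr{\prod_{s=t_{k-1}+1}^{t_k-1}\!\!\!(1 - Y^{\uparrow}_{i,\mu^{-1}(s)}) = 1 \,\bigg|\, E} ~~=~~ \left(\frac{1 - p_i^{\uparrow}}{1 - p_i^{\downarrow}}\right)^{\!t_k - t_{k-1} - 1} ~~=~~ 1 - \xi_{i,t_k},
\]
exactly as the definition of $\xi_{i,t_k}$ was calibrated to produce. Combining this with $\prpar{W_{i,t_k} = 0} = (1 - p_i^{\downarrow})/(1 - \xi_{i,t_k})$ and the independence of $W_{i,t_k}$ from the other input collections yields $\prpar{Y^{\downarrow}_{i,t_k} = 0 \mid E} = 1 - p_i^{\downarrow}$, as required. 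Joining the two branches of the split with the inductive hypothesis completes the joint factorization through stage $t_k$.

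The main obstacle I anticipate is carefully tracking which input variables remain independent of the prior history when entering block $k$, namely justifying that the fresh coordinates $\{Y^{\uparrow}_{i,\mu^{-1}(s)}\}_{s \in [t_{k-1}+1, t_k - 1]}$ together with $({\cal Z}^{>t_{k-1}}, {\cal W}^{>t_{k-1}}, {\cal V}^{>t_{k-1}})$ are jointly independent of the previously constructed $Y^{\downarrow}$-sequence. This is precisely the role of Lemma~\ref{lem:no_future}: conditioning on a particular history of $Y^{\downarrow}_{i,\cdot}$ through stage $t_{k-1}$ translates into conditioning only on the ``used'' coordinates of ${\cal Y}^{\uparrow}, {\cal Z}, {\cal W}, {\cal V}$, leaving the fresh coordinates of block $k$ with their prior (unconditional) distribution, which is what makes the calculation of $\prpar{Y^{\downarrow}_{i,t_k} = 0 \mid E}$ above clean.
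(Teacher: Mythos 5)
Your proof is correct and follows essentially the same route as the paper's: a per-customer reduction, induction that invokes Lemma~\ref{lem:no_future} to decouple the fresh inputs of each block from the already-constructed history, the product formula $p_i^{\uparrow}\cdot(p_i^{\downarrow}/p_i^{\uparrow})=p_i^{\downarrow}$ at regular stages, and at milestones the split on the all-zero history together with the key computation $\prpar{\prod_s(1-Y^{\uparrow}_{i,\mu^{-1}(s)})=1\mid E}=\bigl((1-p_i^{\uparrow})/(1-p_i^{\downarrow})\bigr)^{t_k-t_{k-1}-1}=1-\xi_{i,t_k}$, which is exactly the paper's chain of equalities~\eqref{eqn:verif_2cond_1}--\eqref{eqn:verif_2cond_4}. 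One trivial imprecision: on $\bar{E}$ a non-all-zero history may also depend on earlier coordinates of ${\cal V}$, not only on $({\cal Y}^{\uparrow},{\cal Z},{\cal W})$, but since $V_{i,t_k}$ is independent of those as well, your conclusion is unaffected.
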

\begin{proof}
In order to prove that ${{\cal Y}^{\downarrow}}$ and ${{\cal X}^{\downarrow}}$ are identically distributed, it suffices to show that for every stage $\tau \geq 1$, and for every realization $H$ of the history ${{\cal Y}^{\downarrow}}^{< \tau}$, we have $\prpar{ {Y}^{\downarrow}_{i,\tau} = 1 | {{\cal Y}^{\downarrow}}^{< \tau} = H} = \prpar{ X^{\downarrow}_{i,\tau} = 1 | {{\cal X}^{\downarrow}}^{<\tau}  = H} = p_i^{\downarrow}$. The desired relationship, ${{\cal Y}^{\downarrow}}\sim {{\cal X}^{\downarrow}}$, then follows from the chain rule. Our proof works by induction on $\tau$, where we consider three cases, depending on whether stage $\tau$ is milestone or not, and on whether the history $H$ has at least one departure.

\paragraph{Case~1: Stage $\boldsymbol{\tau}$ is regular.} According to case~1 of our construction, we know that $Y^{\downarrow}_{i,\tau} = Y^{\uparrow}_{i,\mu^{-1}(\tau)} \cdot Z_{i,\tau}$, and therefore 
\begin{eqnarray*}
\pr{\left.{Y}^{\downarrow}_{i,\tau} = 1\right|{{\cal Y}^{\downarrow}}^{<\tau} = H} &= & \pr{\left. \{Y^{\uparrow}_{i,\mu^{-1}(\tau)}= 1\} \wedge \left\{Z_{i,\tau} =1 \right\} \right|{{\cal Y}^{\downarrow}}^{<\tau} = H} \\
&= &\pr{Y^{\uparrow}_{i,\mu^{-1}(\tau)}= 1 } \cdot \pr{Z_{i,\tau} =1}  \\
& = &  p^{\uparrow}_i \cdot 	\frac{p^{\downarrow}_i}{p^{\uparrow}_i} \\
& = &  p^{\downarrow}_i \ ,
\end{eqnarray*}
where the second equality holds since $Z_{i,\tau}$ is independent of $(Y^{\uparrow}_{i,\mu^{-1}(\tau)},{{\cal Y}^{\downarrow}}^{<\tau})$, noting that: (i)~${{\cal Y}^{\downarrow}}^{<\tau} = \nu_1({\cal Z}^{< \tau},{\cal W}^{< \tau},{\cal V}^{< \tau},{{\cal Y}^{\uparrow}}^{<\mu^{-1}(\tau)})$ by Lemma~\ref{lem:no_future}, and (ii)~$Z_{i,\tau}$ is independent of $({\cal Z}^{< \tau},{\cal W},{\cal V},{{\cal Y}^{\uparrow}})$  by construction. Moreover,   Lemma~\ref{lem:no_future} implies that $Y^{\uparrow}_{i,\mu^{-1}(\tau)}$ is independent of ${{\cal Y}^{\downarrow}}^{<\tau}$. 

\paragraph{Case~2: Stage $\bs{\tau}$ is a milestone and $\bs{H \neq \vec{0}}$.} By recalling how case~2 of our  construction works when ${{\cal Y}^{\downarrow}}^{<\tau} \neq \vec{0}$, we have
\begin{eqnarray*}
\pr{\left.{Y}^{\downarrow}_{i,\tau} = 1\right|{{\cal Y}^{\downarrow}}^{<\tau} = H} & = & \pr{ {V}_{i,\tau} = 1 \left|{{\cal Y}^{\downarrow}}^{<\tau} = H \right. }  \\
& = & \pr{{V}_{i,\tau} = 1 } \\
& = & p^{\downarrow}_i \ ,
\end{eqnarray*}
where the  second equality follows from Lemma~\ref{lem:no_future}. 

\paragraph{Case~3: Stage $\bs{\tau}$ is a milestone and $\bs{H = \vec{0}}$.} Let $k$ be the unique index for which $\tau = t_k$. By consulting case~2 of our  construction when ${{\cal Y}^{\downarrow}}^{<\tau} = \vec{0}$, we have
\begin{eqnarray}
&&\pr{\left.{Y}^{\downarrow}_{i,\tau}= 1\right|{{\cal Y}^{\downarrow}}^{<\tau} = H} \nonumber\\
&& \quad  =~~ \pr{\left. \left(1-W_{i,\tau}\right)\cdot \prod_{s = t_{k-1}+1}^{t_k-1}(1-{Y}^{\uparrow}_{i,\mu^{-1}(s)}) = 0 \right|{{\cal Y}^{\downarrow}}^{<\tau} = \vec{0}} \nonumber\\
&& \quad  =~~ \pr{\left. \prod_{s = t_{k-1}+1}^{t_k-1}(1-Y^{\uparrow}_{i,\mu^{-1}(s)}) = 0 \right|{{\cal Y}^{\downarrow}}^{< \tau} = \vec{0} }\nonumber\\
&& \qquad ~~ \mbox{} + \pr{\left. \left\{W_{i,\tau} =1\right\} \wedge \left\{\prod_{s = t_{k-1}+1}^{t_k-1}(1-Y^{\uparrow}_{i,\mu^{-1}(s)}) = 1 \right\} \right|{{\cal Y}^{\downarrow}}^{< \tau} = \vec{0} }\nonumber\\
&& \quad  =~~ \pr{\left. \prod_{s = t_{k-1}+1}^{t_k-1}(1-Y^{\uparrow}_{i,\mu^{-1}(s)}) = 0 \right|{{\cal Y}^{\downarrow}}^{< \tau} = \vec{0} }\nonumber\\
&& \qquad ~~ \mbox{} + \pr{W_{i,\tau} =1}\cdot  \pr{\left. \prod_{s = t_{k-1}+1}^{t_k-1}(1-Y^{\uparrow}_{i,\mu^{-1}(s)}) = 1 \right|{{\cal Y}^{\downarrow}}^{< \tau} = \vec{0} } \label{eq:verif_2ind}\\
&& \quad  =~~ \xi_{i,\tau}+ \frac{p^{\downarrow}_i-\xi_{i,\tau}}{1-\xi_{i,\tau}} \cdot (1-\xi_{i,\tau}) \label{eq:verif_2cond} \\
&& \quad  =~~ p^{\downarrow}_i\nonumber \ .
\end{eqnarray}
Here, equality~\eqref{eq:verif_2ind} proceeds by noting that $W_{i,\tau}$ is independent of $({{\cal Y}^{\downarrow}}^{< \tau}, {{\cal Y}^\uparrow}^{\leq\mu^{-1}(\tau-1)})$. Indeed, by Lemma~\ref{lem:no_future}, we have ${{\cal Y}^{\downarrow}}^{< \tau} = \nu_2({\cal Z}^{< \tau},{\cal W}^{< \tau},{\cal V}^{< \tau},{{\cal Y}^{\uparrow}}^{<  \mu^{-1}(\tau-1)})$. By construction,  $W_{i,\tau}$ is independent of $({\cal Z},{\cal W}^{< \tau},{\cal V},{{\cal Y}^{\uparrow}})$, and thus, in particular, $W_{i,\tau}$ is independent of $(\nu_2({\cal Z}^{< \tau},{\cal W}^{< \tau},{\cal V}^{< \tau},{{\cal Y}^{\uparrow}}^{<  \mu^{-1}(\tau-1)}), {{\cal Y}^\uparrow}^{\leq\mu^{-1}(\tau-1)}) = ({{\cal Y}^{\downarrow}}^{< \tau}, {{\cal Y}^\uparrow}^{\leq\mu^{-1}(\tau-1)})$. To better understand equality~\eqref{eq:verif_2cond}, we observe that
\begin{eqnarray}
&& \pr{\left. \prod_{s = t_{k-1}+1}^{t_k-1}(1-Y^{\uparrow}_{i,\mu^{-1}(s)}) = 1 \right|{{\cal Y}^{\downarrow}}^{< \tau} = \vec{0} } \nonumber \\
&&\quad =~~  \pr{\left. \prod_{s = t_{k-1}+1}^{t_k-1}(1-Y^{\uparrow}_{i,\mu^{-1}(s)}) = 1 \right|{{\cal Y}_i^{\downarrow}}^{< \tau} = \vec{0}} \label{eqn:verif_2cond_1} \\
&&\quad =~~ \frac{\prpar{\{\prod_{s = t_{k-1}+1}^{t_k-1}(1-Y^{\uparrow}_{i,\mu^{-1}(s)}) = 1 \} \wedge \{{{\cal Y}_i^{\downarrow}}^{< \tau} = \vec{0} \} }}{\prpar{{{\cal Y}_i^{\downarrow}}^{< \tau} = \vec{0}}} \nonumber \\
&&\quad =~~ \frac{\prpar{ (\bigwedge_{s = t_{k-1}+1}^{t_k-1} \{ Y^{\uparrow}_{i,\mu^{-1}(s)} = 0 \} ) \wedge \{ {{\cal Y}_i^{\downarrow}}^{\leq t_{k-1}} = \vec{0} \}}}{\prpar{(\bigwedge_{s = t_{k-1}+1}^{ t_k-1} \{ Y^{\downarrow}_{i,s} = 0 \}) \wedge \{ {{\cal Y}_i^{\downarrow}}^{\leq t_{k-1}} = \vec{0} \}}} \label{eqn:verif_2cond_2} \\
&&\quad =~~ \frac{\prpar{\bigwedge_{s = t_{k-1}+1}^{t_k-1} \{ Y^{\uparrow}_{i,\mu^{-1}(s)} = 0 \}}}{\prpar{ \bigwedge_{s = t_{k-1}+1}^{ t_k-1} \{ Y^{\downarrow}_{i,s} = 0 \} }} \label{eqn:verif_2cond_3} \\
&&\quad =~~ \frac{\prpar{\bigwedge_{s = t_{k-1}+1}^{t_k-1} \{ Y^{\uparrow}_{i,\mu^{-1}(s)} = 0 \}}}{\prpar{ \bigwedge_{s = t_{k-1}+1}^{ t_k-1} \{ Y^{\uparrow}_{i,\mu^{-1}(s)} \cdot Z_{i,s} = 0 \} }} \label{eqn:verif_2cond_4} \\
&&\quad =~~ \frac{(1-p_i^{\uparrow})^{t_k - t_{k-1}-1}}{(1-p_i^{\downarrow})^{t_k - t_{k-1}-1}} \nonumber \\
&&\quad =~~ 1 -\xi_{i,\tau} \ . \nonumber
\end{eqnarray}
To justify the non-trivial transitions above, we first note that  equality~\eqref{eqn:verif_2cond_1} holds since the histories of different customers are mutually independent. To better understand equality~\eqref{eqn:verif_2cond_2}, we note that the event $\{Y^{\downarrow}_{i,t_{k-1}+1} = \cdots = Y^{\downarrow}_{i,t_{k}-1} =0\}$ is clearly a subset of $\{Y^{\uparrow}_{i,\mu^{-1}(t_{k-1}+1)} = \cdots = Y^{\uparrow}_{i,\mu^{-1}(t_k-1)} =0\}$, since by case~1 of our coupling construction, we have $Y^{\downarrow}_{i,s} = Y^{\uparrow}_{i,\mu^{-1}(s)} \cdot Z_{i,s}$ for every $t_{k-1}+1 \leq s \leq t_k - 1$. Finally, equality~\eqref{eqn:verif_2cond_3} follows by observing that  $(Y^{\uparrow}_{i,\mu^{-1}(t_{k-1}+1)},\ldots, Y^{\uparrow}_{i,\mu^{-1}(t_k-1)})$ is independent of 
${{\cal Y}^{\downarrow}}^{\leq t_{k-1}}$ based on Lemma~\ref{lem:no_future}, whereas $(Y^{\downarrow}_{i,t_{k-1}+1}, \ldots, Y^{\downarrow}_{i,t_k-1})$ is independent of ${{\cal Y}^{\downarrow}}^{\leq t_{k-1}}$ according to the induction hypothesis. Finally, equality~\eqref{eqn:verif_2cond_4} is obtained by recalling that $Y^{\downarrow}_{i,s} = Y^{\uparrow}_{i,\mu^{-1}(s)} \cdot Z_{i,s}$ for every $t_{k-1}+1 \leq s \leq t_k - 1$.
\end{proof}

\subsection{Designing the policy \texorpdfstring{\boldmath{${\cal S}^{\gamma\uparrow}$}}{}} \label{subsec:up_policy}

We are now ready to introduce the long-anticipated policy ${\cal S}^{\gamma\uparrow}$, attempting to mimic the actions of ${\cal S}^{ *\downarrow }$, while operating in the rounded-up process. We mention in passing that the superscript $\gamma$ reinforces the dependency of our policy on the shifting parameter $\gamma$; this relation will shortly arise by observing that ${\cal S}^{\gamma\uparrow}$ is very much affected by the set of milestones ${\cal M}_{ \gamma }$. For this purpose, we exploit the probabilistic coupling $({{\cal Y}^{\downarrow}}, {\cal Y}^{\uparrow})$, whose specifics have been provided and verified in Sections~\ref{subsec:coupling} and~\ref{subsec:coupling_prop}. At a high level, the policy ${\cal S}^{\gamma\uparrow}$ simulates the evolution of the rounded-down process through the joint distribution of $({{\cal Y}^{\downarrow}}, {\cal Y}^{\uparrow})$. In each stage $t$, we will attempt to pick precisely the same customer as  ${\cal S}^{ *\downarrow }$ does at stage $\mu(t)$, unless s/he has already departed, while skipping milestones along the way.

\paragraph{Information.} We formally specify our policy ${\cal S}^{\gamma\uparrow}$ by induction over the stage index $t\geq 1$. To this end, let $(t, {\cal A}^{{\cal Y}^{\uparrow}}_t)$ be the state reached by ${\cal S}^{\gamma\uparrow}$ at the beginning of stage $t$ of the rounded-up process, where ${\cal A}^{{\cal Y}^{\uparrow}}_t \subseteq [n]$ stands for the subset of remaining customers at that time.  As explained below, the policy ${\cal S}^{\gamma\uparrow}$ either picks the next customer to be served (out of ${\cal A}^{{\cal Y}^{\uparrow}}_t$), or chooses not to serve any customer (indicated by $\perp$). To arrive at this decision, our policy clearly has access to its own history of departures in the rounded-up process, ${{\cal Y}^{\uparrow}}^{\leq t-1}$. In addition, we further assume that ${\cal S}^{\gamma\uparrow}$ has already sampled the ``corresponding'' history of the rounded-down process, ${{\cal Y}^{\downarrow}}^{\leq \mu(t-1)-1}$. By reading through the upcoming paragraph, it is easy to verify that this assumption is preserved throughout our inductive construction.

\paragraph{Sampling.} Having already sampled ${{\cal Y}^{\downarrow}}^{\leq \mu(t-1)-1}$, as a preliminary operation, we explain how the policy ${\cal S}^{\gamma\uparrow}$ draws the additional departure outcomes $(Y_{i,s}^{\downarrow})_{i\in [n]}$ in stages $s=\mu(t-1),\ldots, \mu(t)-1$ of the rounded-down process. To this end, we distinguish between two cases, depending on whether stage $\mu(t)-1$ is regular or not.
\begin{enumerate}
\item {\em When $\mu(t)-1 \notin M_{\gamma}$:}	In this case, $\mu(t)-1 = \mu(t-1)$, meaning that we should be sampling the Bernoulli outcomes $\{Y^{\downarrow}_{i,\mu(t)-1}\}_{i\in [n]}$ of only one stage, $\mu(t)-1$. Since the latter stage is regular, we simply employ equation~\eqref{eq:regular} to compute   $Y^{\downarrow}_{i,\mu(t)-1} = Y^{\downarrow}_{i,\mu(t-1)} = Y^{\uparrow}_{i,t-1} \cdot Z_{i,\mu(t)-1}$ for each customer $i\in [n]$. Here, we should sample the auxiliary random variable $Z_{i,\mu(t)-1}$, noting that $Y^{\uparrow}_{i,t-1}$ is already known as part of the available history ${{\cal Y}^{\uparrow}}^{\leq t-1}$.  

\item {\em When $\mu(t)-1 \in M_{\gamma}$:}	In this case, we have $\mu(t)-1 = \mu(t-1) + 1$, meaning that we should actually be sampling the Bernoulli outcomes $\{Y^{\downarrow}_{i,\mu(t-1)}\}_{i\in [n]}$ and $\{Y^{\downarrow}_{i,\mu(t)-1}\}_{i\in [n]}$, corresponding to two successive stages, $\mu(t-1)$ and $\mu(t)-1$. Noting that stage $\mu(t-1)$ is regular, its sampling procedure is identical to the one described in case~1 above. By contrast, stage $\mu(t)-1$ is a milestone, and as such, there exists a unique index $k\geq 1$ for which $\mu(t)-1 = t_k$. Here, we employ equation~\eqref{eq:x-iu}:
\begin{eqnarray*} 
Y^{\downarrow}_{i,t_k} ~~=~~
    \begin{dcases}
    1 - \left(1-W_{i,t_k}\right)\cdot \prod_{s= {t}_{k-1}+1}^{{t}_{k}-1}(1-Y^{\uparrow}_{i,\mu^{-1}(s)}) \qquad & \text{if } Y^{\downarrow}_{i,1} = \cdots = Y^{\downarrow}_{i,t_k-1} = 0  \\
    {V}_{i,t_k} & \text{otherwise}
    \end{dcases} 	
    \end{eqnarray*}
To utilize this equation, we should sample the auxiliary variables $W_{i,\tau}$ and $V_{i,\tau}$, noting that the realizations of $Y^{\uparrow}_{i,\mu^{-1}(t_{k-1}+1)},\ldots,Y^{\uparrow}_{i,\mu^{-1}(t_{k}-1)}$  are already known as part of the available history ${{\cal Y}^{\uparrow}}^{\leq t-1}$, since  $\mu^{-1}(t_{k} - 1) = \mu^{-1}(\mu(t-1)) = t-1$. Additionally, the realizations of $Y^{\downarrow}_{i,1}, \ldots, Y^{\downarrow}_{i,t_k-2}$ are known from the previous stage, whereas  $Y^{\downarrow}_{i,t_k-1} = Y^{\downarrow}_{i,\mu(t-1)}$ has just been computed in the current stage. 
\end{enumerate}

\paragraph{Service action.} At this point, the history ${{\cal Y}^{\uparrow}}^{\leq t-1}$ of departures in the rounded-up process has been observed, and concurrently, the history ${{\cal Y}^{\downarrow}}^{\leq \mu(t)-1}$ of departures in the rounded-down process has already been sampled. Given this information, our policy can be specified with the exact knowledge of how the policy ${\cal S}^{ *\downarrow }$ operates until stage $\mu(t)$. In particular,  we can determine the subset ${\cal A}^{{{\cal Y}^{\downarrow}}}_{\mu(t)}$ of available customers in the rounded-down process at the beginning of stage $\mu(t)$. On the other hand, when employing our policy ${\cal S}^{\gamma\uparrow}$ in the rounded-up process, stage $t$ begins with ${\cal A}^{{\cal Y}^{\uparrow}}_t$ as the set of currently available customers, which could very well be different from ${\cal A}^{{{\cal Y}^{\downarrow}}}_{\mu(t)}$. That said, our policy ${\cal S}^{\gamma\uparrow}$ examines whether customer ${\cal S}^{ *\downarrow }(\mu(t), {\cal A}^{{{\cal Y}^{\downarrow}}}_{\mu(t)})$, who is precisely the one served by ${\cal S}^{ *\downarrow }$ in stage $\mu(t)$ of the rounded-down process, is still available or not. With an affirmative answer, this customer is served; otherwise, ${\cal S}^{\gamma\uparrow}$ does not serve any customer and proceeds to the next stage. To summarize,  
\begin{equation} \label{eqn:service_dec}
{\cal S}^{\gamma\uparrow}(t, {\cal A}^{{\cal Y}^{\uparrow}}_t) ~~=~~
\begin{dcases}
{\cal S}^{ *\downarrow }(\mu(t), {\cal A}^{{{\cal Y}^{\downarrow}}}_{\mu(t)}) \qquad  & \text{if } {\cal S}^{ *\downarrow }(\mu(t), {\cal A}^{{{\cal Y}^{\downarrow}}}_{\mu(t)})\in {\cal A}^{{\cal Y}^{\uparrow}}_t \\
\perp & \text{otherwise}
\end{dcases}
\end{equation}

It is important to point out that, conditional on being in state $(t, {\cal A}^{{\cal Y}^{\uparrow}}_t) $, the chosen action  ${\cal S}^{\gamma\uparrow}(t, {\cal A}^{{\cal Y}^{\uparrow}}_t)$ is random,  as it depends on the random set ${\cal A}^{{{\cal Y}^{\downarrow}}}_{\mu(t)}$ of available customers in the simulated rounded-down process. Our sampling procedure ensures that the resulting policy ${\cal S}^{\gamma\uparrow}$ is non-anticipative, since it only makes use of the information available so far with respect to the rounded-up process. This claim is formalized by Lemma~\ref{lem:no_future}, which implies that the information $({{\cal Y}^{\uparrow}}^{\leq t-1},{{\cal Y}^{\downarrow}}^{\leq \mu(t)-1})$, on which our service decision ${\cal S}^{\gamma\uparrow}(t, {\cal A}^{{\cal Y}^{\uparrow}}_t)$ was based, is independent of all future information ${{\cal Y}^{\uparrow}}^{> t-1}$ regarding the rounded-up process. 

\subsection{Concluding Theorem~\ref{thm:rev_diff_direction}}  \label{subsec:analysis}

Following the outline of Section~\ref{subsec:proof_outline}, we move on to establish its required performance guarantee, showing that with an appropriate choice of the shifting parameter $\gamma$, the expected reward $\rev^{ (p^{ \uparrow }) }( {\cal S}^{\gamma \uparrow} )$ of our policy ${\cal S}^{\gamma \uparrow}$ in the rounded-up process near matches the corresponding reward ${\cal R}^{(p^{\downarrow})}({\cal S}^{ * \downarrow})$ of ${\cal S}^{ * \downarrow}$ in the rounded-down policy.

\paragraph{The probability of duplicating $\bs{{\cal S}^{ * \downarrow }}$.} For this purpose, our key observation is that the policy ${\cal S}^{\gamma\uparrow}$ is capable of implementing the service decisions made by ${\cal S}^{ * \downarrow}$ with high probability, as we proceed to show next. To this end, we first argue that the customer served by ${\cal S}^{ * \downarrow}$ in stage $\mu(t)$ is also available in the rounded-up process at time $\mu^{-1}(t_{k-1}+1)$ with probability 1, where $t_{k-1}$ is the milestone that immediately precedes $\mu(t)$. 

\begin{lemma} \label{lem:lb-milestone}
For every stage $t\geq 1$ and for every customer $i\in [n]$,
\[ \prsub{({\cal Y}^{\downarrow},{\cal Y}^\uparrow)}{\left. i\in {\cal A}^{{\cal Y}^\uparrow}_{\mu^{-1}(t_{k-1}+1)} \right| {\cal S}^{ * \downarrow} (\mu(t), {\cal A}^{{\cal Y}^{\downarrow}}_{\mu(t)} ) = i } ~~=~~ 1 \ , \]
where $k$ be the unique milestone index for which $\mu(t) \in [t_{k-1} + 1,t_k-1]$.
\end{lemma}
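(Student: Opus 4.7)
The plan is to argue pathwise: fix any realization in the conditioning event $\{{\cal S}^{*\downarrow}(\mu(t), {\cal A}^{{\cal Y}^{\downarrow}}_{\mu(t)}) = i\}$ and show deterministically that $i \in {\cal A}^{{\cal Y}^{\uparrow}}_{\mu^{-1}(t_{k-1}+1)}$, which upgrades to conditional probability $1$. The edge case $k = 1$ is immediate, since $\mu^{-1}(t_0+1) = \mu^{-1}(1) = 1$ and ${\cal A}^{{\cal Y}^{\uparrow}}_1 = [n]$, so I focus on $k \geq 2$. On the conditioning event, customer $i$ is still available in the simulated rounded-down process at stage $\mu(t)$, which forces $Y^{\downarrow}_{i,s'} = 0$ for every $s' \in [1, \mu(t)-1]$; in particular $Y^{\downarrow}_{i,s'} = 0$ for every $s' \in [1, t_{k-1}]$.

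The heart of the argument is to propagate these rounded-down zeros back into the rounded-up process through the milestone branch of the coupling. For each milestone $t_j$ with $1 \leq j \leq k-1$, the hypothesis $Y^{\downarrow}_{i,1} = \cdots = Y^{\downarrow}_{i,t_j-1} = 0$ places us in the first branch of equation~\eqref{eq:x-iu}, so combining it with $Y^{\downarrow}_{i,t_j} = 0$ yields
\[ (1-W_{i,t_j}) \cdot \prod_{s=t_{j-1}+1}^{t_j-1} (1 - Y^{\uparrow}_{i,\mu^{-1}(s)}) ~~=~~ 1 \ . \]
This forces $Y^{\uparrow}_{i,\mu^{-1}(s)} = 0$ for every regular stage $s \in [t_{j-1}+1, t_j-1]$. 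Taking the union of these intervals over $j = 1, \ldots, k-1$ recovers exactly the regular stages in $[1, t_{k-1}-1]$, and applying $\mu^{-1}$ maps this set onto the contiguous range $[1, \mu^{-1}(t_{k-1}+1) - 1]$ of rounded-up stages, using the identity $\mu^{-1}(t_{k-1}+1) - 1 = \mu^{-1}(t_{k-1}-1)$ (the milestone $t_{k-1}$ is the only regular-stage gap between these two preimages). Hence $Y^{\uparrow}_{i,s} = 0$ for every $s < \mu^{-1}(t_{k-1}+1)$, meaning that $i$ has not departed from the rounded-up process by that stage.

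It remains to rule out the possibility that ${\cal S}^{\gamma\uparrow}$ has already served $i$. By definition~\eqref{eqn:service_dec}, ${\cal S}^{\gamma\uparrow}$ serves a non-$\perp$ customer at rounded-up stage $s$ only if ${\cal S}^{*\downarrow}$ serves the very same customer at rounded-down stage $\mu(s)$. For any $s \leq \mu^{-1}(t_{k-1}+1) - 1$ we have $\mu(s) \leq t_{k-1} - 1 < \mu(t)$, and since ${\cal S}^{*\downarrow}$ serves $i$ only at stage $\mu(t)$ (a served customer leaves the system and cannot be served again), $i$ cannot have been served by ${\cal S}^{\gamma\uparrow}$ before stage $\mu^{-1}(t_{k-1}+1)$. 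Combining the two observations, $i$ is neither departed nor served in the rounded-up process through stages $1, \ldots, \mu^{-1}(t_{k-1}+1) - 1$, giving $i \in {\cal A}^{{\cal Y}^{\uparrow}}_{\mu^{-1}(t_{k-1}+1)}$ on every sample point of the conditioning event. The most delicate step is the index bookkeeping in the previous paragraph, specifically verifying that the coupling's intentional ``skipping of milestones'' causes the union of intervals $\mu^{-1}([t_{j-1}+1, t_j-1])$ for $j = 1, \ldots, k-1$ to close up into a gap-free range ending exactly at $\mu^{-1}(t_{k-1}+1)-1$.
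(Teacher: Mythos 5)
Your proof is correct and follows essentially the same route as the paper's: on the conditioning event all rounded-down departure indicators of customer $i$ up to stage $\mu(t)-1$ vanish, the milestone branch of equation~\eqref{eq:x-iu} then forces the corresponding rounded-up indicators before stage $\mu^{-1}(t_{k-1}+1)$ to vanish, and the service rule~\eqref{eqn:service_dec} rules out $i$ having already been served. The only differences are cosmetic (you argue the propagation step directly where the paper argues by contradiction, and you spell out the interval bookkeeping more explicitly).
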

\begin{proof}
We begin by noting that, conditional on the event $\{{\cal S}^{ * \downarrow}(\mu(t), {\cal A}^{{\cal Y}^{\downarrow}}_{\mu(t)}) = i\}$, we clearly have $Y^{\downarrow}_{i,1} = \cdots = Y^{\downarrow}_{i,\mu(t)-1} = 0$ with probability 1. Otherwise, customer $i$ would not have been available at stage $\mu(t)$ of the rounded-down process, meaning that s/he could not have been served by ${\cal S}^{ * \downarrow}$ at state $(\mu(t), {\cal A}^{{\cal Y}^{\downarrow}}_{\mu(t)} )$. Given this observation, the desired result follows from the next two claims, both conditional on $\{{\cal S}^{ * \downarrow}(\mu(t), {\cal A}^{{\cal Y}^{\downarrow}}_{\mu(t)}) = i\}$:
\begin{itemize}
    \item {\em Customer $i$ has not departed yet, i.e., $Y^{\uparrow}_{i,1} = \cdots = Y^{\uparrow}_{i,\mu^{-1}(t_{k-1}+1)-1} = 0$}. Suppose by contradiction that there exists some stage $s \leq \mu^{-1}(t_{k-1}+1)-1$ for which $Y^{\uparrow}_{i,s} =1$, and let $\hat{k}$ be the milestone index for which $\mu(s) \in [t_{\hat{k}-1}+1,t_{\hat{k}}-1]$. Since $t_{\hat{k}}$ is a milestone, by the upper sub-case of equation~\eqref{eq:x-iu} in our coupling construction, we infer that $Y^{\downarrow}_{i,t_{\hat{k}}} = 1$. However, since $t_{\hat{k}} \leq t_{k-1} < \mu(t)$, we have just obtained a contradiction to the observation that $Y^{\downarrow}_{i,1} = \cdots = Y^{\downarrow}_{i,\mu(t)-1} = 0$.
    
    \item {\em Customer $i$ has not been served yet, i.e., ${\cal S}^{\gamma \uparrow}(s, {\cal A}^{{\cal Y}^\uparrow}_{s})\neq i$ for every stage $s \leq t-1$}. It is worth pointing out that this claim is stronger than what is required to show that $i\in {\cal A}^{{\cal Y}^\uparrow}_{\mu^{-1}(t_{k-1}+1)}$, since $\mu^{-1}(t_{k-1}+1) \leq t$; however, it will become handy in our subsequent analysis. Again, suppose by contradiction that there exists some stage $s \leq t-1$ for which ${\cal S}^{\gamma \uparrow}(s, {\cal A}^{{\cal Y}^\uparrow}_{s}) = i$. As explained in Section~\ref{subsec:up_policy} (see equation~\eqref{eqn:service_dec}), a necessary condition for the policy ${\cal S}^{\gamma \uparrow}$ to make this decision is that ${\cal S}^{ *\downarrow }(\mu(s), {\cal A}^{{{\cal Y}^{\downarrow}}}_{\mu(s)}) = i$. However, the latter decision of ${\cal S}^{ *\downarrow }$ implies in particular that customer $i$ is no longer available starting at stage $\mu(s)+1 \leq \mu(t)$ of the rounded-down process, in contradiction to having ${\cal S}^{ * \downarrow}(\mu(t), {\cal A}^{{\cal Y}^{\downarrow}}_{\mu(t)}) = i$.
\end{itemize}
\end{proof}

Building on Lemma~\ref{lem:lb-milestone}, we prove that customer ${\cal S}^{ *\downarrow } (\mu(t), {\cal A}^{{{\cal Y}^{\downarrow}}}_{\mu(t)} )$ is also available at time $t$ with conditional probability at least $1-\eps$.

\begin{lemma} \label{lem:stay}
For every stage $t\geq 1$ and for every customer $i\in [n]$,
\begin{eqnarray*}
\prsub{({{\cal Y}^{\downarrow}},{\cal Y}^{\uparrow})}{ i\in {\cal A}^{{\cal Y}^{\uparrow}}_t \left| {\cal S}^{ *\downarrow } (\mu(t), {\cal A}^{{{\cal Y}^{\downarrow}}}_{\mu(t)} ) = i \right. } ~~\geq~~ 1-\eps \ .
\end{eqnarray*}
\end{lemma}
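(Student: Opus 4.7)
The plan is to reduce the probability in question to a quantity that Lemma~\ref{lem:comp_prob_short} can bound directly. Set $A := \{{\cal S}^{ *\downarrow }(\mu(t),{\cal A}^{{\cal Y}^{\downarrow}}_{\mu(t)}) = i\}$, and let $k$ be the milestone index with $\mu(t) \in [t_{k-1}+1, t_k-1]$. From Lemma~\ref{lem:lb-milestone}, combined with the ``not served yet'' claim extracted from its proof (second bullet), conditional on $A$ customer $i$ lies in ${\cal A}^{{\cal Y}^{\uparrow}}_{\mu^{-1}(t_{k-1}+1)}$ with probability one and has not been served by ${\cal S}^{\gamma \uparrow}$ at any stage $\leq t-1$. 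Consequently, the event $\{i \in {\cal A}^{{\cal Y}^{\uparrow}}_t\}$ coincides (up to a null set on $A$) with $B := \{Y^{\uparrow}_{i,s} = 0 \text{ for all } s \in [\mu^{-1}(t_{k-1}+1),\, t-1]\}$, so it suffices to prove $\pr{B \,|\, A} \geq 1 - \eps$.

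My next step is to strip the policy-dependent part of the conditioning. I would decompose $A = A_1 \cap A_2$, where $A_1 := \{Y^{\downarrow}_{i,\tau} = 0 \text{ for all } \tau \leq \mu(t)-1\}$ captures $i$'s own rounded-down history, and $A_2$ captures the residual policy constraint. On $A_1$, customer $i$'s contribution to the rounded-down evolution is frozen, so ${\cal A}^{{\cal Y}^{\downarrow}}_{\mu(t)}$ (and hence the indicator of $A_2$) is a deterministic function of the other customers' histories $H_{-i} := (Y^{\downarrow}_{j,\tau})_{j \neq i,\,\tau \leq \mu(t)-1}$. By the coupling construction in Section~\ref{subsec:coupling}, the ``$i$-indexed'' block of randomness $(Y^{\uparrow}_{i,\cdot}, Z_{i,\cdot}, W_{i,\cdot}, V_{i,\cdot})$, which generates both $A_1$ and $B$, is jointly independent of the ``$(-i)$-indexed'' randomness generating $H_{-i}$. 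A routine conditional-independence argument then collapses the conditioning to $\pr{B \,|\, A} = \pr{B \,|\, A_1}$.

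The remaining calculation exploits the explicit coupling on regular stages. For each regular $\tau \in [t_{k-1}+1,\mu(t)-1]$, equation~\eqref{eq:regular} gives $Y^{\downarrow}_{i,\tau} = Y^{\uparrow}_{i,\mu^{-1}(\tau)} \cdot Z_{i,\tau}$, and these quantities are mutually independent across such $\tau$. A direct one-stage computation yields $\pr{Y^{\uparrow}_{i,\mu^{-1}(\tau)} = 0 \,|\, Y^{\downarrow}_{i,\tau} = 0} = (1 - p^{\uparrow}_i)/(1 - p^{\downarrow}_i)$. The additional milestone-stage constraints inside $A_1$ only determine $Y^{\uparrow}_{i,s}$ for $s \leq \mu^{-1}(t_{k-1}+1) - 1$ (this is precisely the content of Lemma~\ref{lem:lb-milestone}) and therefore impose no further restriction on $Y^{\uparrow}_{i,s}$ for $s$ in our range. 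Multiplying across the $\Delta := \mu(t) - t_{k-1} - 1 = t - \mu^{-1}(t_{k-1}+1)$ regular stages yields
\[
\pr{B \,|\, A_1} ~~=~~ \left( \frac{1 - p^{\uparrow}_i}{1 - p^{\downarrow}_i} \right)^{\Delta} \ .
\]
Since consecutive milestones are $\frac{1}{\eps}$ stages apart, $\Delta \leq \frac{1}{\eps} - 2 < \frac{1}{\eps}$, so Lemma~\ref{lem:comp_prob_short} gives $(1 - p^{\uparrow}_i)^{\Delta} \geq (1 - \eps) \cdot (1 - p^{\downarrow}_i)^{\Delta}$, which is exactly the desired $\pr{B \,|\, A} \geq 1 - \eps$.

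The main obstacle I anticipate is the conditional-independence step $\pr{B \,|\, A_1 \cap A_2} = \pr{B \,|\, A_1}$. Its rigorous justification requires carefully partitioning the auxiliary randomness $({\cal Z}, {\cal W}, {\cal V}, {\cal Y}^{\uparrow})$ into the ``$i$-indexed'' block governing $(Y^{\uparrow}_{i,\cdot}, {\cal Y}^{\downarrow}_i)$ -- and thus both $A_1$ and $B$ -- and the ``$(-i)$-indexed'' block governing $H_{-i}$ (hence $A_2$ given $A_1$), and then verifying by direct inspection of the coupling that these two blocks are globally independent. Once this independence is made explicit, the rest of the argument is a short algebraic calculation that plugs into Lemma~\ref{lem:comp_prob_short}.
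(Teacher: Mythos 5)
Your proposal is correct and follows essentially the same route as the paper: reduce, via Lemma~\ref{lem:lb-milestone} and the ``not served yet'' observation, to the event that customer $i$ incurs no rounded-up departure over the at most $\frac{1}{\eps}$ regular stages since the preceding milestone, strip the policy-dependent conditioning using independence across customers together with the non-anticipation property (Lemma~\ref{lem:no_future}), and evaluate the resulting conditional probability as $\bigl(\frac{1-p_i^{\uparrow}}{1-p_i^{\downarrow}}\bigr)^{\Delta}\geq 1-\eps$ via Lemma~\ref{lem:comp_prob_short}. The only difference is bookkeeping: you split $A=A_1\cap A_2$ and take a per-stage product, whereas the paper sums over realizations of customer $i$'s up-history and computes the same ratio in one step.
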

\begin{proof}
For notational convenience, since all probabilities in this proof are computed with respect to the coupling $({\cal Y}^{\downarrow},{\cal Y}^{\uparrow})$, we drop this reference below. Letting $k$ be the unique milestone index for which $\mu(t) \in [t_{k-1} + 1,t_k-1]$, we begin by observing that 
\begin{eqnarray}
&&\pr{i\in {\cal A}^{{\cal Y}^{\uparrow}}_t \left| {\cal S}^{ * \downarrow} (\mu(t), {\cal A}^{{\cal Y}^{\downarrow}}_{\mu(t)} ) = i \right. } \nonumber \\
&& \quad =~~ \pr{ \{ i\in {\cal A}^{{\cal Y}^{\uparrow}}_t \} \wedge \{ i\in {\cal A}^{{\cal Y}^\uparrow}_{\mu^{-1}(t_{k-1}+1)} \} \left| {\cal S}^{ * \downarrow} (\mu(t), {\cal A}^{{\cal Y}^{\downarrow}}_{\mu(t)} ) = i \right. } \label{eq:proof_lem_stay_eq0} \\
&& \quad =~~  \pr{\left. i\in {\cal A}^{{\cal Y}^\uparrow}_{\mu^{-1}(t_{k-1}+1)} \right| {\cal S}^{ * \downarrow} (\mu(t), {\cal A}^{{\cal Y}^{\downarrow}}_{\mu(t)} ) = i } \nonumber \\
&& \qquad~~ \mbox{} \cdot \pr{ i\in {\cal A}^{{\cal Y}^{\uparrow}}_t \left| \{ {\cal S}^{ * \downarrow} (\mu(t), {\cal A}^{{\cal Y}^{\downarrow}}_{\mu(t)} ) = i \} \wedge\{ i \in {\cal A}^{{\cal Y}^{\uparrow}}_{\mu^{-1}(t_{k-1}+1)} \} \right. } \nonumber\\
&& \quad =~~ \pr{ Y^{\uparrow}_{i,\mu^{-1}(t_{k-1}+1)} = \cdots =  Y^{\uparrow}_{i,t-1} = 0 \left| \{ {\cal S}^{ * \downarrow} (\mu(t), {\cal A}^{{\cal Y}^{\downarrow}}_{\mu(t)} ) = i \} \wedge\{ i \in {\cal A}^{{\cal Y}^{\uparrow}}_{\mu^{-1}(t_{k-1}+1)} \} \right. } \ . \label{eq:proof_lem_stay_eq05} 
\end{eqnarray}
Here, equality~\eqref{eq:proof_lem_stay_eq0} holds since, when customer $i$ is available at stage $t$ of the rounded-up process (i.e., $i\in {\cal A}^{{\cal Y}^{\uparrow}}_t$), s/he is necessarily available at any earlier stage, implying in particular that $i\in {\cal A}^{{\cal Y}^\uparrow}_{\mu^{-1}(t_{k-1}+1)}$. Equality~\eqref{eq:proof_lem_stay_eq05} follows from Lemma~\ref{lem:lb-milestone}, in conjunction with the observation that, conditional on $\{ {\cal S}^{ * \downarrow} (\mu(t), {\cal A}^{{\cal Y}^{\downarrow}}_{\mu(t)} ) = i \}\wedge\{ i \in {\cal A}^{{\cal Y}^{\uparrow}}_{\mu^{-1}(t_{k-1}+1)} \}$, the policy ${\cal S}^{\gamma\uparrow}$ does not serve customer $i$ before stage $t$, and thus, this customer is available at stage $t$ of the rounded-up process if and only if s/he does not depart in any of the stages $\mu^{-1}(t_{k-1}+1), \ldots, t-1$, namely, $Y^{\uparrow}_{i,\mu^{-1}(t_{k-1}+1)} = \cdots =  Y^{\uparrow}_{i,t-1} = 0$. 

To lighten our notation, let ${\cal H}^\uparrow_i$ be the collection of all sample path realizations of the process ${{\cal Y}_i^\uparrow}^{\leq \mu^{-1}(t_{k-1}+1)-1}$ that occur with non-zero probability, conditional on the event $\{ {\cal S}^{ * \downarrow} (\mu(t), {\cal A}^{{{\cal Y}^{\downarrow}}}_{\mu(t)}) = i \} \wedge \{ i \in {\cal A}^{{\cal Y}^{\uparrow}}_{\mu^{-1}(t_{k-1}+1)} \}$. We further note that $\prpar{{{\cal Y}_i^{\downarrow}}^{\leq \mu(t-1)} = h^\downarrow_i |  {\cal S}^{ * \downarrow} (\mu(t), {\cal A}^{{{\cal Y}^{\downarrow}}}_{\mu(t)} ) = i } = 0$ for all realizations $h^\downarrow_i \neq \vec{0}$ of the process ${{\cal Y}_i^{\downarrow}}^{\leq \mu(t-1)}$, or otherwise, customer $i$ would have departed prior to stage $\mu(t)$ of the rounded-down process. We proceed by arguing that the right-hand-side of equality~\eqref{eq:proof_lem_stay_eq05} is indeed lower-bounded by $1 - \eps$. To this end, note that
\begin{eqnarray} 
&& \pr{ Y^{\uparrow}_{i,\mu^{-1}(t_{k-1}+1)} = \cdots =  Y^{\uparrow}_{i,t-1} = 0 \left| \{ {\cal S}^{ * \downarrow} (\mu(t), {\cal A}^{{\cal Y}^{\downarrow}}_{\mu(t)} ) = i \} \wedge\{ i \in {\cal A}^{{\cal Y}^{\uparrow}}_{\mu^{-1}(t_{k-1}+1)} \} \right. } \nonumber \\
&& \quad =~~ \sum_{h^\uparrow_i \in {\cal H}^\uparrow_i } \pr{{{\cal Y}^{\uparrow}_i}^{\leq \mu^{-1}(t_{k-1}+1)-1} = h^\uparrow_i \left| \{ {\cal S}^{ * \downarrow} (\mu(t), {\cal A}^{{{\cal Y}^{\downarrow}}}_{\mu(t)} ) = i \} \wedge \{ i \in {\cal A}^{{\cal Y}^{\uparrow}}_{\mu^{-1}(t_{k-1}+1)} \}\right.} \nonumber \\
&& \qquad \qquad \qquad \mbox{} \cdot \pr{ Y^{\uparrow}_{i,\mu^{-1}(t_{k-1}+1)} = \cdots =  Y^{\uparrow}_{i,t-1} = 0 \left| ({{\cal Y}^{\uparrow}_i}^{\leq \mu^{-1}(t_{k-1}+1)-1}, {{\cal Y}_i^{\downarrow}}^{\leq \mu(t-1)}) = (h^\uparrow_i, \vec{0}) \right. }  \nonumber  \\
&& \quad =~~ \pr{ Y^{\uparrow}_{i,\mu^{-1}(t_{k-1}+1)} = \cdots = Y^{\uparrow}_{i,t-1} =0 \left| Y^{\downarrow}_{i,t_{k-1}+1} = \cdots = Y^{\downarrow}_{i,\mu(t-1)} =0 \right. } \label{eq:proof_lem_stay_eq4} \\
&& \quad =~~ \frac{ \prpar{ Y^{\uparrow}_{i,\mu^{-1}(t_{k-1}+1)} = \cdots = Y^{\uparrow}_{i,t-1} =0 } }{ \prpar{ Y^{\downarrow}_{i,t_{k-1}+1} = \cdots = Y^{\downarrow}_{i,\mu(t-1)} =0 } } \label{eq:proof_lem_stay_eq45} \\
&& \quad =~~ \frac{(1-p_i^{\uparrow})^{t - t_{k-1}-1}}{(1-p_i^{\downarrow})^{t - t_{k-1}-1}} \label{eq:proof_lem_stay_eq5}	\\
&& \quad \geq~~ 1-\eps	\ . \label{eq:proof_lem_stay_eq6}
\end{eqnarray}
Here, equality~\eqref{eq:proof_lem_stay_eq4} proceeds from Lemma~\ref{lem:no_future}, which implies in particular that $(Y^{\uparrow}_{i,\mu^{-1}(t_{k-1}+1)}, \ldots, Y^{\uparrow}_{i,t-1})$ is independent of $({{\cal Y}^{\uparrow}}^{\leq \mu^{-1}(t_{k-1}+1)-1},{{\cal Y}^{\downarrow}}^{\leq t_{k-1}})$. Equality~\eqref{eq:proof_lem_stay_eq45} holds since $Y^{\downarrow}_{i,\tau} = Y^{\uparrow}_{i,\mu^{-1}(\tau)} \cdot Z_{i,\tau}$ for every regular stage $\tau$, as prescribed in equation~\eqref{eq:regular}. To obtain equality~\eqref{eq:proof_lem_stay_eq5}, we recall that ${\cal Y}^{\uparrow} \sim {{\cal X}^{\uparrow}}$ and ${{\cal Y}^{\downarrow}}\sim {{\cal X}^{\downarrow}}$, as shown in Section~\ref{subsec:coupling_prop}. Therefore, $Y^{\uparrow}_{i,\mu^{-1}(t_{k-1}+1)}, \ldots, Y^{\uparrow}_{i,t-1}$ and 
$Y^{\downarrow}_{i,t_{k-1}+1}, \ldots, Y^{\downarrow}_{i,\mu(t-1)}$ are independent Bernoulli random variables with success probabilities $p^\uparrow_i$ and $p^\downarrow_i$, respectively. Finally, inequality~\eqref{eq:proof_lem_stay_eq6} follows by instantiating Lemma~\ref{lem:comp_prob_short} with $\Delta = t - t_{k-1}-1 \leq \frac{1}{\eps}-1$.
\end{proof}

\paragraph{Relating the expected rewards $\bs{\rev^{ (p^{ \uparrow }) }( {\cal S}^{\gamma \uparrow} )}$ and $\bs{{\cal R}^{(p^{\downarrow})}({\cal S}^{ * \downarrow})}$.} In light of the preceding discussion, it follows that in the rounded-up process, the policy ${\cal S}^{\gamma\uparrow}$ guarantees an expected reward of 
\begin{eqnarray}
{\cal R}^{(p^{\uparrow})} ({\cal S}^{\gamma\uparrow} ) & = & \sum_{t \geq 1} \sum_{i \in [n]} \prsub{{\cal X}^\uparrow}{{\cal S}^{\gamma\uparrow}(t, {\cal A}^{{\cal X}^\uparrow}_t) = i} \cdot r_i \nonumber \\
& = & \sum_{t \geq 1} \sum_{i \in [n]} \prsub{{\cal Y}^\uparrow}{{\cal S}^{\gamma\uparrow}(t, {\cal A}^{{\cal Y}^\uparrow}_t) = i} \cdot r_i \label{eqn:bound_rev_gamma_0}	\\
& = & \sum_{t \geq 1} \sum_{i \in [n]} \prsub{({{\cal Y}^{\downarrow}},{\cal Y}^\uparrow)}{ \{i \in {\cal A}^{{\cal Y}^\uparrow}_t\} \wedge \{{\cal S}^{ * \downarrow}(\mu(t),{\cal A}^{{{\cal Y}^{\downarrow}}}_{\mu(t)}) = i \}} \cdot r_i \label{eqn:bound_rev_gamma_1}\\
& = & \sum_{t \geq 1} \sum_{i \in [n]} \prsub{{{\cal Y}^{\downarrow}}}{ {\cal S}^{ * \downarrow}(\mu(t),{\cal A}^{{{\cal Y}^{\downarrow}}}_{\mu(t)}) = i} \cdot \prsub{({{\cal Y}^{\downarrow}},{\cal Y}^\uparrow)}{ i \in {\cal A}^{\cal Y}_t \left| {\cal S}^{ * \downarrow}(\mu(t),{\cal A}^{{{\cal Y}^{\downarrow}}}_{\mu(t)}) = i \right.}\cdot r_i \nonumber	\\
& \geq & (1 - \eps) \cdot \sum_{t \geq 1} \sum_{i \in [n]} \prsub{{{\cal Y}^{\downarrow}}}{ {\cal S}^{ * \downarrow}(\mu(t),{\cal A}^{{{\cal Y}^{\downarrow}}}_{\mu(t)}) = i}\cdot r_i \label{eqn:bound_rev_gamma_2}	\\
& = & (1 - \eps) \cdot \sum_{t \geq 1} \sum_{i \in [n]} \bbInd \left[ t \notin {\cal M}_{\gamma} \right] \cdot \prsub{{\cal Y}^{\downarrow}}{ {\cal S}^{ * \downarrow}(t,{\cal A}^{{\cal Y}^{\downarrow}}_{t}) = i}\cdot r_i \nonumber \\
& = & (1 - \eps) \cdot \sum_{t \geq 1} \sum_{i \in [n]} \bbInd \left[ t \notin {\cal M}_{\gamma} \right] \cdot \prsub{{{\cal X}^{\downarrow}}}{ {\cal S}^{ * \downarrow}\left(t,{\cal A}^{{{\cal X}^{\downarrow}}}_{t}\right) = i}\cdot r_i \ . \label{eqn:bound_rev_gamma_4}
\end{eqnarray}
Here, equalities~\eqref{eqn:bound_rev_gamma_0} and~\eqref{eqn:bound_rev_gamma_4} are respectively obtained by recalling that ${{\cal Y}^{\uparrow}}\sim {{\cal X}^{\uparrow}}$ and ${{\cal Y}^{\downarrow}}\sim {{\cal X}^{\downarrow}}$, as shown in Section~\ref{subsec:coupling_prop}. Equality~\eqref{eqn:bound_rev_gamma_1} holds since, by the way our policy ${\cal S}^{\gamma\uparrow}$ is defined in Section~\ref{subsec:up_policy}, we have ${\cal S}^{\gamma\uparrow}(t, {\cal A}^{\cal Y}_t) = i$ if and only if customer $i$ is the one served by ${\cal S}^{ * \downarrow}$ at stage $\mu(t)$ (namely, ${\cal S}^{ * \downarrow}(\mu(t),{\cal A}^{{\cal Y}^{\downarrow}}_{\mu(t)}) = i$) and this customer is still available in the rounded-up process (i.e., $i \in {\cal A}^{{\cal Y}^\uparrow}_t$). Inequality~\eqref{eqn:bound_rev_gamma_2} is a direct implication of Lemma~\ref{lem:stay}. 

\paragraph{Picking the shifting parameter.} We are now ready to specify how the shifting parameter $\gamma$ is determined: We simply pick $\gamma$ uniformly at random out of $\{ 1, \ldots, \frac{1}{\eps} \}$. In this case, taking expectations over the randomness in $\gamma$ on both sides of inequality~\eqref{eqn:bound_rev_gamma_4}, we observe that
\begin{eqnarray*}
\exsub{ \gamma }{ {\cal R}^{(p^{\uparrow})}({\cal S}^{\gamma\uparrow}) } & \geq & (1 - \eps) \cdot \sum_{t \geq 1} \sum_{i \in [n]} \prsub{ \gamma }{ t \notin {\cal M}_{\gamma} } \cdot \prsub{{{\cal X}^{\downarrow}}}{ {\cal S}^{ * \downarrow}(t,{\cal A}^{{{\cal X}^{\downarrow}}}_{t}) = i}\cdot r_i \\
& \geq & (1 - 2\eps) \cdot \sum_{t \geq 1} \sum_{i \in [n]} \prsub{{{\cal X}^{\downarrow}}}{ {\cal S}^{ * \downarrow}(t,{\cal A}^{{{\cal X}^{\downarrow}}}_{t}) = i}\cdot r_i \\
& = & (1-2\eps) \cdot {\cal R}^{(p^{\downarrow})}({\cal S}^{ * \downarrow})\ ,
\end{eqnarray*}
where the second inequality holds since $\prparsub{ \gamma }{ t \notin {\cal M}_{\gamma} } = 1 - \eps$ for every $t \geq 1$. Consequently, it follows that there exists a value of $\gamma$ for which the deterministic policy ${\cal S}^{\gamma\uparrow}$ attains an expected reward of at least $(1-2\eps) \cdot {\cal R}^{(p^{\downarrow})}({\cal S}^{ * \downarrow})$, which is precisely what we were opting to prove.
\section{Concluding Remarks}

We conclude this paper with a number of open questions for future research, all appearing to be highly non-trivial. As explained below, these prospective directions are intended to further narrow the residual gap between the knowns and unknowns, as well as to highlight several extensions where our current understanding is still incomplete.

\paragraph{Improved approximations.} Building on our main result, a quasi-polynomial time approximation scheme for adaptively serving impatient customers, one intriguing direction would be to examine whether this approach can be enhanced and sharpened, potentially ending up with a true PTAS. An outcome of this nature would be a very unique result, since polynomial-time approximation schemes are generally unknown for online stochastic matching problems, even in stylized settings. In this regard, the probabilistic analysis of Theorem~\ref{thm:rev_diff_direction} may play an important role, as it allows us to reduce the number of distinct customer classes to $O_\eps(\log^2 n)$. However, in order to arrive at a polynomial-sized dynamic program, one still needs to propose a fundamentally different encoding of its state space. Thus, we believe that improving on our main result would require substantial developments, perhaps including algorithmic ideas such as batched service decisions or more coarse state pruning methods. On a different front, it would be interesting to investigate whether the polynomial-time $0.709$-approximation due to \cite{CyganEGMS13} can be meaningfully breached. To this end, one potential avenue is that of exploring whether our dynamic programming ideas can be combined with their LP-based methods to obtain an improved approximation guarantee.

\paragraph{Model extensions.} Given the rare occurrence of a near-polynomial-time approximation scheme for a stochastic matching-type problem, one could explore whether our findings can be applied to closely related models. For example, one key feature of the model formulation studied in this paper is that its initial stock of customers is never replenished. Hence, it would be interesting to consider extended formulations, where new customers could be arriving over time, similarly to some of the stochastic models reviewed in Section~\ref{subsec:related}. As noted by \citet[Sec.~6]{CyganEGMS13}, competitive ratios better than $1-\frac{1}{e}$ can be obtained in certain parametric regimes:
\begin{quote}
``{\em Can we give algorithms for the case of user arrivals? (For this last problem, it is not difficult to adapt the algorithm from \citep{Jez11} to give an improvement over $1 - 1/e$ when the survival probabilities are bounded away from 1)}.''
\end{quote}
Yet another interesting extension is obtained by allowing multiple customers to be served in each stage. As a concrete example, suppose that each stage $t$ is associated with a capacity of $k_t$, standing for the maximum number of customers that can be served at that time. In such settings, it would be interesting to examine whether one could improve on the straightforward $(1 - \frac{ 1 }{ e })$-approximation that follows from the work of \cite{AggarwalGKM11} on vertex-weighted online matching, by a reduction along the lines of Section~\ref{subsec:known-results}.

\paragraph{Acknowledgements.}
I am deeply indebted to Ali Aouad (London Business School) for his continuous help throughout this project, especially in regard to the coupling-based proof of Theorem~\ref{thm:rev_diff_direction}. I would like to thank Sahil Singla (Georgia Tech) for introducing me to the work of \cite{CyganEGMS13}. Finally, I am grateful to Will Ma (Columbia University) for fruitful discussions during early stages of this work.

\bibliographystyle{plainnat}
\bibliography{BIB-Impatient}

\appendix
\section{Additional Proofs}

\subsection{Proof of Lemma~\ref{lem:lazy_exists}} \label{app:proof_lem_lazy_exists}

For ease of analysis, rather that forcing adaptive service policies to serve one of the available customers in ${\cal A}_t$ at any stage $t$, we allow an additional action, where instead some customer $i \in {\cal A}_t$ is ``marked''. To capture such actions, we keep utilizing our current notation ${\cal S}(t,{\cal A}_t) = i$ to indicate that customer $i$ is served, whereas ${\cal S}(t,{\cal A}_t) = \text{``mark $i$''}$ corresponds to marking this customer. It is important to point out that, when a customer is marked at stage $t$, s/he is by no means eliminated from the system. Rather, as explained below, our policy simply differentiates between marked and unmarked customers in ${\cal A}_t$, which will be denoted by $\mymark_t$ and $\myunmark_t$, respectively. For clarity, we explicitly specify these two sets within our state description, $(t, \myunmark_t, \mymark_t )$. As mentioned in Section~\ref{subsec:reduce_average}, extended policies that possibly decide not to serve any customer at any given stage can easily be translated back to our standard notion of an adaptive service policy with no deterioration in their expected revenue.

\paragraph{Constructing $\bs{{\cal S}^{\myco}}$.} In what follows, we define an extended policy ${\cal S}^{\myco}$ that imitates the optimal policy ${\cal S}^*$, except for dealing with quitters and stickers. Specifically, noting that ${\cal S}^*$ terminates within at most $n$ stages, our policy will be inductively constructed as follows.
\begin{itemize}
    \item {\em Stage 1:} In this case, ${\cal S}^{\myco}$ mimics ${\cal S}^*$, unless the latter serves a sticker, in which case we avoid serving this customer and mark him/her instead. In other words,
    \[ {\cal S}^{\myco}(1,[n],\emptyset) ~~=~~
    \begin{dcases}
    {\cal S}^{*}(1,[n]) \qquad & \text{if ${\cal S}^{*}(1,[n]) \in {\cal C}_{\myavg} \cup {\cal C}_{\myquit}$} \\
    \text{mark } {\cal S}^{*}(1,[n]) \qquad & \text{if ${\cal S}^{*}(1,[n]) \in {\cal C}_{\mystick}$}
    \end{dcases} \]
    
    \item {\em Stages $t = 2, \ldots, n$:} Here, ${\cal S}^{\myco}$ inspects the service decision ${\cal S}^*$ makes for currently unmarked customers. When the intended customer is average, s/he is served; otherwise, we avoid serving this customer and mark him/her instead. That is, 
    \[ {\cal S}^{\myco}(t, \myunmark_t, \mymark_t ) ~~=~~
    \begin{dcases}
    {\cal S}^{*}(t,\myunmark_t) \qquad & \text{if ${\cal S}^{*}(t,\myunmark_t) \in {\cal C}_{\myavg}$} \\
    \text{mark } {\cal S}^{*}(t,\myunmark_t) \qquad & \text{if ${\cal S}^{*}(t,\myunmark_t) \in {\cal C}_{\mystick} \cup {\cal C}_{\myquit}$}
    \end{dcases} \]
    
    \item {\em Stages $t = n+1, \ldots, 2n$:} In these stages, our policy picks the highest-reward sticker who is still available and unmarked. For this purpose, assuming without loss of generality that stickers are indexed in weakly-decreasing order of their rewards, we have
    \[ {\cal S}^{\myco}(t, \myunmark_t, \mymark_t ) ~~=~~
    \min ( \myunmark_t \cup {\cal C}_{\mystick}) \ . \]
\end{itemize}
By going through the construction above, one can easily verify that the policy ${\cal S}^{\myco}$ is indeed class-ordered. Yet another important observation is that, letting $(\myunmark_t, \mymark_t )$ and ${\cal A}_t^*$ be the random sets of available customers at stage $t$ when respectively employing the service policies ${\cal S}^{\myco}$ and ${\cal S}^*$, the sets $\myunmark_t$ and ${\cal A}_t^*$ are in fact identically distributed, for every $1 \leq t \leq n$. Moving forward, this property will become very useful.

\paragraph{Bounding the expected reward of $\bs{{\cal S}^{\myco}}$.} In order to relate the expected rewards $\rev({\cal S}^{\myco})$ and $\rev({\cal S}^{*})$, we first decompose these quantities into the individual contributions of the underlying customers, grouped by their class. More specifically, we have on the one hand
\begin{eqnarray*}
{\cal R}({\cal S}^*) & = & \underbrace{ \sum_{t \in [n]} \sum_{i \in {\cal C}_{\myavg}} \pr{{\cal S}^*(t, {\cal A}_t^*) = i} \cdot r_i }_{ {\cal R}_{ \myavg }({\cal S}^*) } + \underbrace{ \sum_{t \in [n]} \sum_{i \in {\cal C}_{\myquit}} \pr{{\cal S}^*(t, {\cal A}_t^*) = i} \cdot r_i }_{ {\cal R}_{ \myquit }({\cal S}^*) }\\
&& \mbox{} + \underbrace{ \sum_{t \in [n]} \sum_{i \in {\cal C}_{\mystick}} \pr{{\cal S}^*(t, {\cal A}_t^*) = i} \cdot r_i }_{ {\cal R}_{ \mystick }({\cal S}^*) } \ ,
\end{eqnarray*}
whereas on the other hand,
\begin{eqnarray*}
{\cal R}({\cal S}^{\myco}) & = & \underbrace{ \sum_{t \in [n]} \sum_{i \in {\cal C}_{\myavg}} \pr{{\cal S}^{\myco}(t, \myunmark_t, \mymark_t ) = i} \cdot r_i }_{ {\cal R}_{ \myavg }({\cal S}^{\myco}) } + \underbrace{ \sum_{i \in {\cal C}_{\myquit}} \pr{{\cal S}^{\myco}(1, [n], \emptyset) = i} \cdot r_i }_{ {\cal R}_{ \myquit }({\cal S}^{\myco}) }\\
&& \mbox{} + \underbrace{ \sum_{t \in [n+1,2n]} \sum_{i \in {\cal C}_{\mystick}} \pr{{\cal S}^{\myco}(t, \myunmark_t, \mymark_t ) = i} \cdot r_i }_{ {\cal R}_{ \mystick }({\cal S}^{\myco}) } \ .
\end{eqnarray*}
The next few claims compare the above-mentioned terms, eventually leading to
$\rev( {\cal S}^{ \myco } ) \geq (1 - 6\eps) \cdot \rev( {\cal S}^* )$, as desired.

\begin{claim}
${\cal R}_{ \myavg }({\cal S}^{\myco}) = {\cal R}_{ \myavg }({\cal S}^*)$.
\end{claim}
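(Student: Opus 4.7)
The claim follows directly from the observation, highlighted in the outer proof, that $\myunmark_t$ and ${\cal A}_t^*$ are identically distributed for every $1 \leq t \leq n$. I would first verify this distributional equality via a simple coupling: use the same collection of independent Bernoulli departure trials in both processes and induct on $t$. At $t=1$ both sets equal $[n]$. For the inductive step, observe that under the coupling $\myunmark_t = {\cal A}_t^*$, so the deterministic optimal policy picks the same customer in both, namely $i^* = {\cal S}^*(t,{\cal A}_t^*) = {\cal S}^*(t,\myunmark_t)$, and $i^*$ is removed from the unmarked/available set in each process (served in ${\cal A}_t^*$; either served or moved into $\mymark_t$ in the ${\cal S}^{\myco}$ process). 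The remaining customers in $\myunmark_t \setminus \{i^*\} = {\cal A}_t^* \setminus \{i^*\}$ then face identical coupled departure trials, so $\myunmark_{t+1} = {\cal A}_{t+1}^*$; departures of customers in $\mymark_t$ have no effect on $\myunmark_{t+1}$.

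Granted this, the identity is a one-line calculation. By construction of ${\cal S}^{\myco}$, for any $i \in {\cal C}_{\myavg}$ and any $1 \leq t \leq n$, the event $\{{\cal S}^{\myco}(t, \myunmark_t, \mymark_t) = i\}$ coincides with $\{{\cal S}^*(t, \myunmark_t) = i\}$, since whenever ${\cal S}^*$ picks an average customer the policy ${\cal S}^{\myco}$ serves that customer rather than marking it (at $t = 1$ this uses $\myunmark_1 = [n]$). The distributional equality $\myunmark_t \stackrel{d}{=} {\cal A}_t^*$ then yields
\[ \pr{{\cal S}^{\myco}(t, \myunmark_t, \mymark_t) = i} ~~=~~ \pr{{\cal S}^*(t, \myunmark_t) = i} ~~=~~ \pr{{\cal S}^*(t, {\cal A}_t^*) = i} \ . \]
Multiplying by $r_i$ and summing over $t \in [n]$ and $i \in {\cal C}_{\myavg}$ gives ${\cal R}_{\myavg}({\cal S}^{\myco}) = {\cal R}_{\myavg}({\cal S}^*)$.

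\textbf{Anticipated obstacle.} There is essentially no obstacle; the statement is bookkeeping. The only point requiring a moment of care is the coupling argument, specifically confirming that the presence of a non-empty $\mymark_t$ in the ${\cal S}^{\myco}$ process does not perturb the evolution of $\myunmark_t$ relative to ${\cal A}_t^*$, since marked customers are tracked separately and their own (coupled) departure outcomes do not feed back into $\myunmark_{t+1}$.
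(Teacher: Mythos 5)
Your proposal is correct and follows essentially the same route as the paper: both reduce the claim to the observation that $\myunmark_t$ and ${\cal A}_t^*$ are identically distributed for $t \in [n]$, combined with the fact that ${\cal S}^{\myco}$ serves an average customer at stage $t$ exactly when ${\cal S}^*$ would pick that customer from $\myunmark_t$. The only difference is that you spell out the coupling behind the distributional identity (which the paper asserts without proof just before the claim), and your handling of it, including the remark that departures of marked customers do not feed back into $\myunmark_{t+1}$, is sound.
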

\begin{proof}
To establish this relation, we observe that the expected reward collected by the policy ${\cal S}^{\myco}$ due to average customers identifies with the analogous quantity collected by ${\cal S}^*$, since
\begin{eqnarray*}
{\cal R}_{ \myavg }({\cal S}^{\myco}) & = & \sum_{t \in [n]} \sum_{i \in {\cal C}_{\myavg}} \pr{{\cal S}^{\myco}(t, \myunmark_t, \mymark_t ) = i} \cdot r_i \\
& = & \sum_{t \in [n]} \sum_{i \in {\cal C}_{\myavg}} \pr{{\cal S}^*(t, {\cal A}_t^*) = i} \cdot r_i \\
& = & {\cal R}_{ \myavg }({\cal S}^*) \ ,
\end{eqnarray*}
where the second equality holds since, as previously mentioned, $\myunmark_t$ and ${\cal A}_t^*$ are identically distributed, for every $1 \leq t \leq n$.
\end{proof}

\begin{claim} \label{clm:class_comp_quit}
${\cal R}_{ \myquit }({\cal S}^{\myco}) \geq {\cal R}_{ \myquit }({\cal S}^*) - 2 \eps \cdot {\cal R}({\cal S}^*)$.
\end{claim}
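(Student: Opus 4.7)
The plan is to first identify precisely how ${\cal R}_{ \myquit }({\cal S}^{\myco})$ and ${\cal R}_{ \myquit }({\cal S}^{*})$ differ. Inspecting the construction of ${\cal S}^{\myco}$, we see that the only quitter ever served by ${\cal S}^{\myco}$ is the one potentially picked by ${\cal S}^{*}$ at stage 1 (since from stage 2 onward, any choice of a quitter by ${\cal S}^{*}$ on the unmarked set is replaced with a ``mark'' operation, and stages $n+1, \ldots, 2n$ are devoted to stickers). Consequently,
\[
{\cal R}_{ \myquit }({\cal S}^{\myco}) ~~=~~ \sum_{ i \in {\cal C}_{\myquit} } \pr{ {\cal S}^{*}(1, [n]) = i } \cdot r_i \ ,
\]
which is precisely the $t = 1$ contribution to ${\cal R}_{ \myquit }({\cal S}^{*})$. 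Therefore, the difference we must control reduces to
\[
{\cal R}_{ \myquit }({\cal S}^{*}) - {\cal R}_{ \myquit }({\cal S}^{\myco}) ~~=~~ \sum_{ t = 2 }^{ n } \sum_{ i \in {\cal C}_{\myquit} } \pr{ {\cal S}^{*}( t, {\cal A}_t^{*} ) = i } \cdot r_i \ .
\]

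The key step is then a simple availability bound for quitters. For each $i \in {\cal C}_{\myquit}$ and each $t \geq 2$, since serving $i$ at stage $t$ requires $i$ to still be in the system, and since $i$ departs independently with probability $p_i > 1 - \frac{\eps}{n}$ at each of the preceding $t-1$ stages regardless of how ${\cal S}^{*}$ acts,
\[
\pr{ {\cal S}^{*}(t, {\cal A}_t^{*}) = i } ~~\leq~~ \pr{ i \in {\cal A}_t^{*} } ~~\leq~~ (1 - p_i)^{ t-1 } ~~\leq~~ \left( \frac{ \eps }{ n } \right)^{ t-1 } \ .
\]
Summing a geometric series and using $\eps \in (0, \frac{1}{4})$ gives $\sum_{ t \geq 2 } (\eps/n)^{t-1} \leq \frac{ 2\eps }{ n }$ for any $n \geq 1$.

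Plugging these bounds in, and using $|{\cal C}_{\myquit}| \leq n$ and $r_i \leq r_{\max}$, we obtain
\[
{\cal R}_{ \myquit }({\cal S}^{*}) - {\cal R}_{ \myquit }({\cal S}^{\myco}) ~~\leq~~ \sum_{ i \in {\cal C}_{\myquit} } \frac{ 2\eps }{ n } \cdot r_i ~~\leq~~ 2\eps \cdot r_{\max} \ .
\]
The final step is the trivial remark that ${\cal R}({\cal S}^{*}) \geq r_{\max}$, since ${\cal S}^{*}$ can always serve a maximum-reward customer at stage 1. Combining the last two displays yields ${\cal R}_{ \myquit }({\cal S}^{\myco}) \geq {\cal R}_{ \myquit }({\cal S}^{*}) - 2\eps \cdot {\cal R}({\cal S}^{*})$, as claimed.

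There is no real obstacle here: the proof is entirely driven by the definition of ``quitter'' (departure probability exceeding $1 - \eps/n$), which makes the availability of any quitter beyond stage 1 vanishingly small. The only minor care required is to verify that ${\cal S}^{\myco}$'s stage-1 quitter reward exactly matches the stage-1 contribution of ${\cal S}^{*}$, which is immediate from the case analysis in the construction of ${\cal S}^{\myco}$.
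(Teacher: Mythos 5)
Your proof is correct and follows essentially the same route as the paper's: split off the stage-$1$ quitter contribution (which ${\cal S}^{\myco}$ matches), bound the residual via $\pr{i \in {\cal A}_t^*} \leq (\eps/n)^{t-1}$ for quitters, sum the geometric series, and finish with ${\cal R}({\cal S}^*) \geq r_{\max}$. The only cosmetic difference is that you assert exact equality of the stage-$1$ terms where the paper only needs an inequality; both are valid.
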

\begin{proof}
In regard to quitters, letting $r_{\max} = \max_{i \in [n]} r_i$ be the maximum reward of any customer, we observe that
\begin{eqnarray*}
{\cal R}_{ \myquit }({\cal S}^*) & = & \sum_{t \in [n]} \sum_{i \in {\cal C}_{\myquit}} \pr{{\cal S}^*(t, {\cal A}_t^*) = i} \cdot r_i \\
& = & \sum_{i \in {\cal C}_{\myquit}} \pr{{\cal S}^*(1, {\cal A}_1^*) = i} \cdot r_i + \sum_{t \in [2,n]} \sum_{i \in {\cal C}_{\myquit}} \pr{{\cal S}^*(t, {\cal A}_t^*) = i} \cdot r_i \\
& \leq & \sum_{i \in {\cal C}_{\myquit}} \pr{{\cal S}^{\myco}(1, [n], \emptyset) = i} \cdot r_i + r_{\max} \cdot \sum_{t \in [2,n]} \sum_{i \in {\cal C}_{\myquit}} \pr{{\cal S}^*(t, {\cal A}_t^*) = i} \\
& \leq & {\cal R}_{ \myquit }({\cal S}^{\myco}) + r_{\max} \cdot \sum_{t \in [2,n]} \sum_{i \in {\cal C}_{\myquit}} \pr{ i \in {\cal A}_t^*} \ ,
\end{eqnarray*}
where the last inequality holds since the event $\{ i \in {\cal A}_t^* \}$, where customer $i$ is still available at stage $t$, clearly contains the event $\{ {\cal S}^*(t, {\cal A}_t^*) = i \}$, in which this customer is served at stage $t$. Consequently, we infer that ${\cal R}_{ \myquit }({\cal S}^{\myco}) \geq {\cal R}_{ \myquit }({\cal S}^*) - 2 \eps \cdot {\cal R}({\cal S}^*)$, by noting that
\begin{eqnarray*}
r_{\max} \cdot \sum_{t \in [2,n]} \sum_{i \in {\cal C}_{\myquit}} \pr{ i \in {\cal A}_t^*} & \leq &
r_{\max} \cdot | {\cal C}_{\myquit} | \cdot \sum_{t \in [2,n]} \left( \frac { \eps }{ n } \right)^{ t-1 } \\
& \leq & r_{\max} \cdot | {\cal C}_{\myquit} | \cdot  \frac { 2\eps }{ n } \\
& \leq & 2\eps \cdot r_{\max} \\
& \leq & 2\eps \cdot {\cal R}({\cal S}^*) \ .
\end{eqnarray*}
Here, the first inequality holds since a necessary condition for customer $i$ to be available at stage $t$ is that s/he has not departed in stages $1, \ldots, t-1$, which occurs with probability $(1 - p_i)^{ t-1 } \leq ( \frac { \eps }{ n } )^{ t-1 }$, as $p_i > 1 - \frac{ \eps }{ n }$ for every $i \in {\cal C}_{\myquit}$. The last inequality is obtained by observing that one possible policy would be to serve the maximum-reward customer at stage $1$, followed by arbitrarily serving customers in subsequent stages. The latter policy clearly has an expected reward of at least $r_{\max}$, and we therefore have ${\cal R}({\cal S}^*) \geq r_{\max}$, due to the optimality of ${\cal S}^*$.
\end{proof}

\begin{claim}
${\cal R}_{ \mystick }({\cal S}^{\myco}) \geq (1 - 4\eps) \cdot {\cal R}_{ \mystick }({\cal S}^*)$.
\end{claim}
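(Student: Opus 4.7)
The plan rests on the fact that stickers depart extremely rarely: $p_i < \eps/n^2$ for every $i \in {\cal C}_{\mystick}$. Over the full $2n$-stage horizon, the expected number of stickers that ever depart is only $O(\eps)$, so essentially all stickers survive until the tail phase. Since $|{\cal C}_{\mystick}| \leq n$ and there are exactly $n$ tail stages $n+1, \ldots, 2n$ dedicated by ${\cal S}^{\myco}$ to stickers, ${\cal S}^{\myco}$ should serve nearly all stickers while ${\cal S}^*$ can at best serve each of them once -- so ${\cal S}^{\myco}$ will even beat the natural upper bound $\sum_i r_i$ on ${\cal R}_{\mystick}({\cal S}^*)$.

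Concretely, I would first establish the trivial bound ${\cal R}_{\mystick}({\cal S}^*) \leq \sum_{i \in {\cal C}_{\mystick}} r_i$. This is immediate from $\sum_{t \in [n]} \prpar{{\cal S}^*(t, {\cal A}_t^*) = i} \leq 1$, since ${\cal S}^*$ can serve each sticker at most once. Next, for the lower bound on ${\cal R}_{\mystick}({\cal S}^{\myco})$, I would introduce the survival event $E$ that no sticker departs in any of the $2n-1$ departure steps across stages $1, \ldots, 2n-1$. Conditional on $E$, every sticker is continuously available at the start of each stage in $[n+1, 2n]$. Since ${\cal S}^{\myco}$ deterministically serves the highest-reward available sticker in each such stage, and since $|{\cal C}_{\mystick}| \leq n$ stickers must be placed into $n$ slots, all stickers get served under $E$, yielding a conditional sticker reward of exactly $\sum_{i \in {\cal C}_{\mystick}} r_i$. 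Hence ${\cal R}_{\mystick}({\cal S}^{\myco}) \geq \prpar{E} \cdot \sum_{i \in {\cal C}_{\mystick}} r_i$.

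To finish, I would bound $\prpar{\bar{E}}$ via mutual independence of departures and a union bound:
\[ \prpar{\bar{E}} ~~\leq~~ \sum_{i \in {\cal C}_{\mystick}} \left( 1 - (1-p_i)^{2n-1} \right) ~~\leq~~ (2n-1) \sum_{i \in {\cal C}_{\mystick}} p_i ~~\leq~~ n \cdot 2n \cdot \frac{\eps}{n^2} ~~\leq~~ 2\eps \ , \]
using $|{\cal C}_{\mystick}| \leq n$ and $p_i < \eps/n^2$ for stickers. Combining these pieces gives ${\cal R}_{\mystick}({\cal S}^{\myco}) \geq (1 - 2\eps) \sum_{i \in {\cal C}_{\mystick}} r_i \geq (1 - 4\eps) \cdot {\cal R}_{\mystick}({\cal S}^*)$, with a bit of slack to spare.

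The only subtlety to verify is that under $E$, no sticker is blocked from being served in the tail phase by the ``marking'' formalism. But by the construction in the proof of Lemma~\ref{lem:lazy_exists}, ${\cal S}^{\myco}$ never actually serves a sticker in stages $1, \ldots, n$ -- it only ever marks them -- and marked customers remain in the system and remain subject only to the departure dynamics. Hence, conditional on $E$, every sticker is indeed in $\myunmark_t \cup {\cal C}_{\mystick} \cap {\cal A}_t$ throughout the tail stages and gets served. I do not foresee any serious technical obstacle; the argument is driven entirely by the smallness of sticker departure probabilities and the linear slack in the number of tail stages.
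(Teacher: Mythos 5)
Your proposal is correct and follows essentially the same route as the paper: condition on the event that no sticker ever departs, observe that ${\cal S}^{\myco}$ then serves every sticker in the tail stages, bound the failure probability by a union bound over the at most $n$ stickers (the paper gets $1-4\eps$ via $(1-\eps/n^2)^{2n}\geq 1-4\eps/n$ per sticker; your arithmetic yields the slightly sharper $1-2\eps$), and compare against the trivial upper bound $\sum_{i\in{\cal C}_{\mystick}} r_i \geq {\cal R}_{\mystick}({\cal S}^*)$. The extra remark about marked stickers remaining in the system is a correct reading of the construction and does not change the argument.
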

\begin{proof}
We first observe that a single sticker does not depart in any of the stages $1, \ldots, 2n$ with probability at least $(1 - \frac{ \eps }{ n^2 })^{ 2n } \geq 1 - \frac{ 4 \eps }{ n }$. Therefore, by the union bound, all stickers do not depart in any of these stages with probability at least $1 - 4 \eps$. Conditional on this event, our policy ${\cal S}^{\myco}$ serves each and every sticker, meaning that
\[ {\cal R}_{ \mystick }({\cal S}^{\myco}) ~~\geq~~ (1 - 4\eps) \cdot \sum_{i \in {\cal C}_{\mystick}} r_i ~~\geq~~ (1 - 4\eps) \cdot {\cal R}_{ \mystick }({\cal S}^*) \ . \]
\end{proof}

\subsection{Proof of Lemma~\ref{lem:reduction_average}} \label{app:reduction}

Recalling that $\rev( {\cal S}^{ \myco } ) \geq (1 - 6\eps) \cdot \rev( {\cal S}^* )$, as shown in  Lemma~\ref{lem:lazy_exists}, it suffices to argue that $\rev( {\cal S} ) \geq (1 - \eps) \cdot \rev( {\cal S}^{\myco} )$. For this purpose, let $\rev_t( S ) = \sum_{i \in [n]} \prpar{ {\cal S}(t, {\cal A}_t) = i }$ be the expected reward collected by our policy ${\cal S}$ at stage $t$. Similarly, $\rev_t( {\cal S}^{\myco} )$ designates the analogous quantity with respect to ${\cal S}^{\myco}$. Noting that both policies operate only along stages $1, \ldots, 2n$, it follows that their total expected rewards can be expressed as $\rev( {\cal S} ) = \sum_{t \in [2n]} \rev_t( {\cal S} )$ and $\rev( {\cal S}^{\myco} ) = \sum_{t \in [2n]} \rev_t( {\cal S}^{\myco} )$. In what follows, we compare these summations by considering three separate collections of stages:
\begin{itemize}
    \item {\em Stage 1}: Here, since ${\cal S}$ duplicates the service decision made by ${\cal S}^{ \myco }$, we clearly have $\rev_1( {\cal S} ) = \rev_1( S^{\myco} )$.
    
    \item {\em Stages $2, \ldots, n$}: Along these stages, ${\cal S}$ first observes the specific realization $A$ of ${\cal A}_2 \cap {\cal C}_\myavg$, and then employs the approximate policy ${\cal S}^{\approx A}$. Therefore,
    \begin{eqnarray*}
    \sum_{t \in [2,n]} \rev_t( {\cal S} ) & = & \sum_{A \subseteq {\cal C}_\myavg} \left( \pr{ {\cal A}_2 \cap {\cal C}_\myavg = A } \cdot \sum_{t \in [2,n]} \rev_t( {\cal S} | {\cal A}_2 \cap {\cal C}_\myavg = A ) \right) \\
    & = & \sum_{A \subseteq {\cal C}_\myavg} \pr{ {\cal A}_2 \cap {\cal C}_\myavg = A } \cdot \rev_{ {\cal I}^A }( {\cal S}^{\approx A} ) \\
    & \geq & (1 - \eps ) \cdot \sum_{A \subseteq {\cal C}_\myavg} \pr{ {\cal A}_2 \cap {\cal C}_\myavg = A } \cdot \rev_{ {\cal I}^A }( {\cal S}^{* A} ) \\
    & \geq & (1 - \eps ) \cdot \sum_{A \subseteq {\cal C}_\myavg} \left( \pr{ {\cal A}_2 \cap {\cal C}_\myavg = A } \cdot \sum_{t \in [2,n]} \rev_t( {\cal S}^{ \myco } | {\cal A}_2 \cap {\cal C}_\myavg = A ) \right) \\
    & = & (1 - \eps ) \cdot \sum_{t \in [2,n]} \rev_t( {\cal S}^{ \myco } ) \ .
    \end{eqnarray*}
    Here, the first inequality holds since $\rev_{ {\cal I}^A }( {\cal S}^{\approx A} ) \geq (1 - \eps) \cdot \rev_{ {\cal I}^A }( {\cal S}^{*A} )$. To obtain the second inequality, note that ${\cal S}^{*A}$ is an optimal policy for ${\cal I}^A$, and therefore, its expected reward is at least as good as that of ${\cal S}^{ \myco }$ for this instance.
    
    \item {\em Stages $n+1, \ldots, 2n$:} In this regime, ${\cal S}$ duplicates the service decisions made by ${\cal S}^{ \myco }$ as well, implying that $\rev_t( {\cal S} ) = \rev_t( S^{\myco} )$ for every $n+1 \leq t \leq 2n$.
\end{itemize}

\subsection{Proof of Lemma~\ref{lem:effect_rnd_v}} \label{app:proof_lem_effect_rnd_v}

\paragraph{First inequality: $\bs{\rev^{ (\tilde{r}) }( {\cal S} ) \geq (1 - \eps) \cdot \rev^{ (r) }( {\cal S} ) - \eps \cdot \rev^{ (r) }( {\cal S}^* )}$.} To obtain a lower bound on $\rev^{ (\tilde{r}) }( {\cal S} ) $, we observe that our reward-rounding method guarantees $\tilde{r}_i \geq \frac{ r_i }{ 1 + \eps } - \frac{ \eps }{ n } \cdot r_{\max}$, for every customer $i \in [n]$. Therefore,
\begin{eqnarray*}
\rev^{ (\tilde{r}) }( {\cal S} ) & = & \sum_{t \in [n]} \sum_{i \in [n]} \pr{{\cal S}(t, {\cal A}_t) = i} \cdot \tilde{r}_i \\
& \geq & \sum_{t \in [n]} \sum_{i \in [n]} \pr{{\cal S}(t, {\cal A}_t) = i} \cdot \left( \frac{ r_i }{ 1 + \eps } - \frac{ \eps }{ n } \cdot r_{\max} \right) \\
& = & \frac{ 1 }{ 1 + \eps } \cdot \rev^{ (r) }( {\cal S} ) - \frac{ \eps }{ n } \cdot r_{\max} \cdot \sum_{i \in [n]} \sum_{t \in [n]}  \pr{{\cal S}(t, {\cal A}_t) = i} \\
& \geq & (1 - \eps) \cdot \rev^{ (r) }( {\cal S} ) - \eps \cdot r_{\max} \\
& \geq & (1 - \eps) \cdot \rev^{ (r) }( {\cal S} ) - \eps \cdot \rev^{ (r) }( {\cal S}^* ) \ .
\end{eqnarray*}
Here, the next-to-last inequality holds since $\sum_{t \in [n]}  \pr{{\cal S}(t, {\cal A}_t) = i} \leq 1$, as this summation is precisely the probability that customer $i$ is served in one of the stages $1, \ldots, n$. The last inequality holds since $\rev^{ (r) }( {\cal S}^* ) \geq r_{\max}$, as explained within the proof of Claim~\ref{clm:class_comp_quit}.

\paragraph{Second inequality: $\bs{\rev^{ (\tilde{r}) }( {\cal S} ) \leq \rev^{ (r) }( {\cal S} )}$.} Conversely, to upper-bound $\rev^{ (\tilde{r}) }( {\cal S} ) $, we observe that our reward-rounding method guarantees $\tilde{r}_i \leq r_i$, for every customer $i \in [n]$, implying that
\begin{eqnarray*}
\rev^{ (\tilde{r}) }( {\cal S} ) & = & \sum_{t \in [n]} \sum_{i \in [n]} \pr{{\cal S}(t, {\cal A}_t) = i} \cdot \tilde{r}_i \\
& \leq & \sum_{t \in [n]} \sum_{i \in [n]} \pr{{\cal S}(t, {\cal A}_t) = i} \cdot r_i \\
& = & \rev^{ (r) }( {\cal S} ) \ . \end{eqnarray*}

\subsection{Proof of Lemma~\ref{lem:comp_prob_short}} \label{app:proof_lem_comp_prob_short}

\paragraph{Case 1: $\bs{p_i \in [\frac{ \eps }{ n^2 }, \frac{ \eps }{ 4 }]}$.} Here, $p^{ \uparrow }_i = U( p_i ) = (1 + \delta) \cdot p^{ \downarrow }_i \in [0,1]$, and therefore
    \begin{eqnarray*}
    ( 1 - p^{ \uparrow }_i )^{ \Delta } & = & ( 1 - (1 + \delta) \cdot p^{ \downarrow }_i )^{ \Delta } \\
    & \geq & e^{ - (1 + \eps) ( 1 + \delta ) p^{ \downarrow }_i \Delta } \\
    & \geq & (1 - p^{ \downarrow }_i)^{ (1 + \eps) ( 1 + \delta )\Delta } \\
    & \geq & (1 - p^{ \downarrow }_i)^{2\eps \Delta} \cdot (1 - p^{ \downarrow }_i)^{ \Delta } \\
    & \geq & \left( 1 - \frac{ \eps }{ 4 } \right)^2 \cdot (1 - p^{ \downarrow }_i)^{ \Delta} \\
    & \geq & \left( 1 - \frac{ \eps }{ 2 } \right) \cdot ( 1 - p^{ \downarrow }_i )^{\Delta} \ .
    \end{eqnarray*}
    To understand the first inequality, note that $1 - x \geq e^{ -(1 + \eps) x }$ for $x \in [0, \frac{ \eps }{ 2 }]$, and we indeed have $(1 + \delta) \cdot p^{ \downarrow }_i \leq \frac{ \eps }{ 2 }$ since $p^{ \downarrow }_i \leq p_i \leq \frac{ \eps }{ 4 }$ and $\delta \leq 1$. The third inequality holds since $\delta = \frac{ \eps^2 }{ 16 }$. The fourth inequality is obtained by recalling that $p^{ \downarrow }_i \leq p_i \leq \frac{ \eps }{ 4 }$ and $\Delta < \frac{ 1 }{ \eps }$.
    
\paragraph{Case 2: $\bs{p_i \in (\frac{ \eps }{ 4 }, 1 - \frac{ \eps }{ n }]}$.} In this case, we set $p^{\uparrow}_i = 1 - D(1 - p_i)$, meaning that $1 - p^{\uparrow}_i = D(1 - p_i) = \frac{ 1 - p^{\downarrow}_i }{ 1 + \delta }$, and therefore
    \begin{eqnarray*}
    ( 1 - p^{\uparrow}_i )^{ \Delta } & = & \left( \frac{ 1 - p^{\downarrow}_i }{ 1 + \delta } \right)^{ \Delta } \\
    & \geq & ( 1 - \delta )^{ 1 / \eps } \cdot (1 - p^{\downarrow}_i)^{ \Delta } \\
    & \geq & e^{-2\delta / \eps } \cdot (1 - p^{\downarrow}_i)^{ \Delta } \\
    & = & e^{-\eps/8 } \cdot (1 - p^{\downarrow}_i)^{ \Delta } \\
    & \geq & \left( 1 - \frac{ \eps }{ 8 } \right) \cdot ( 1 - p^{\downarrow}_i )^{ \Delta } \ ,
    \end{eqnarray*}
    where the first and second inequalities hold since $\Delta < \frac{ 1 }{ \eps }$ and $\delta = \frac{ \eps^2 }{ 16 }$, respectively.

\subsection{Proof of Lemma~\ref{lem:comp_prob_lengthy}} \label{app:proof_lem_comp_prob_lengthy}

\paragraph{Case 1: $\bs{p_i \in [\frac{ \eps }{ n^2 }, \frac{ \eps }{ 4 }]}$.} Here, $p^{ \uparrow }_i = U( p_i ) = (1 + \delta) \cdot p^{ \downarrow }_i \in [0,1]$, and therefore
    \begin{eqnarray*}
    ( 1 - p^{ \uparrow }_i )^{(1 - \eps) \Delta } & = & ( 1 - ( 1 + \delta ) \cdot p^{ \downarrow }_i )^{(1 - \eps) \Delta } \\
    & \geq & e^{ -(1 + \eps) ( 1 + \delta ) (1 - \eps) p^{ \downarrow }_i \Delta } \\
    & \geq & e^{-p^{ \downarrow }_i \Delta} \\
    & \geq & ( 1 - p^{ \downarrow }_i )^{ \Delta } \ .
    \end{eqnarray*}
    To understand the first inequality, note that $1 - x \geq e^{ -(1 + \eps) x }$ for $x \in [0, \frac{ \eps }{ 2 }]$, and we indeed have $( 1 + \delta ) \cdot p^{ \downarrow }_i \leq \frac{ \eps }{ 2 }$ since $p^{ \downarrow }_i \leq p_i \leq \frac{ \eps }{ 4 }$ and $\delta \leq 1$. The second inequality follows by recalling that $\delta = \frac{ \eps^2 }{ 16 }$.
    
\paragraph{Case 2: $\bs{p_i \in (\frac{ \eps }{ 4 }, 1 - \frac{ \eps }{ n }]}$.} In this case, we set $p^{ \uparrow }_i = 1 - D(1 - p_i)$, meaning that $1 - p^{ \uparrow }_i = D(1 - p_i) = \frac{ 1 - p^{ \downarrow }_i }{ 1 + \delta }$, and therefore
    \begin{eqnarray*}
    ( 1 - p^{ \uparrow }_i )^{(1 - \eps) \Delta } & = & \left( \frac{ 1 - p^{ \downarrow }_i }{ 1 + \delta } \right)^{(1 - \eps) \Delta } \\
    & \geq & ( 1 - \delta )^{ \Delta } \cdot (1 - p^{ \downarrow }_i)^{-\eps \Delta } \cdot ( 1 - p^{ \downarrow }_i )^{ \Delta } \\
    & \geq & e^{ -2 \delta \Delta } \cdot e^{ \eps p^{ \downarrow }_i \Delta } \cdot ( 1 - p^{ \downarrow }_i )^{\Delta} \\
    & \geq &( 1 - p^{ \downarrow }_i )^{\Delta} \ ,
    \end{eqnarray*}
    where the last inequality holds since
    \[ \eps p^{ \downarrow }_i ~~=~~ \eps \cdot ( p^{ \uparrow }_i - \delta \cdot (1 - p^{ \uparrow }_i) ) ~~\geq~~ \eps \cdot (p_i - \delta) ~~\geq~~ \eps \cdot \left( \frac{ \eps }{ 4 } - \frac{ \eps^2 }{ 16 } \right) ~~\geq~~ \frac{ 3\eps^2 }{ 16 } ~~>~~ 2\delta \ . \]
    
\subsection{Proof of Theorem~\ref{thm:rev_diff_easy}} \label{app:proof_thm_rev_diff_easy}

\paragraph{Constructing $\bs{{\cal S}^{\downarrow}}$.} For ease of analysis, in addition to keeping track of the collection of available customers ${\cal A}_t$ at the beginning of any stage $t$, we will allow our  service policy ${\cal S}^{ \downarrow }$ to maintain an additional state parameter ${\cal E}_t$, which is referred to as the current collection of ``eliminated'' customers. Initially, all customers are uneliminated, meaning that ${\cal E}_1 = \emptyset$. Then, in each stage $t \in [n]$, we create ${\cal E}_{t+1}$ by augmenting ${\cal E}_t$ with a random set of customers, produced by independently picking every customer $i \in [n]$ with probability $\frac{ p^+_i - p^-_i }{ 1 - p^-_i }$. It is not difficult to verify that such extended policies can easily be translated back to our standard notion of an adaptive service policy with no deterioration in their expected reward.

Now, in order to define the policy ${\cal S}^{\downarrow}$, when the latter arrives at state $(t,{\cal A}_t, {\cal E}_t)$, it serves precisely the same customer served by ${\cal S}$ when all eliminated customers are ignored, meaning that ${\cal S}^{\downarrow} (t,{\cal A}_t, {\cal E}_t) = {\cal S} (t,{\cal A}_t \setminus {\cal E}_t)$.

\paragraph{Analysis.} For the purpose of relating between the expected rewards $\rev^{ (p^-) }( {\cal S}^{ \downarrow } )$ and $\rev^{ (p^+) }( {\cal S} )$, we first show that the random sets ${\cal A}_t^{{\cal S}^{\downarrow}} \setminus {\cal E}_t^{{\cal S}^{\downarrow}}$ and ${\cal A}_t^{\cal S}$, respectively corresponding to the policies ${\cal S}^{\downarrow}$ and ${\cal S}$, are identically distributed.

\begin{lemma} \label{lem:same_prob_elim}
$\prpar{ {\cal A}_t^{{\cal S}^{\downarrow}} \setminus {\cal E}_t^{{\cal S}^{\downarrow}} = A } = \prpar{ {\cal A}_t^{\cal S} = A }$, for every $t\geq 1$ and $A \subseteq [n]$.
\end{lemma}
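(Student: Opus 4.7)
\textbf{Proof plan for Lemma~\ref{lem:same_prob_elim}.} The plan is to prove the claim by induction on the stage index $t$, establishing the stronger statement that the process $\{{\cal A}_t^{{\cal S}^{\downarrow}} \setminus {\cal E}_t^{{\cal S}^{\downarrow}}\}_{t \geq 1}$ in the $p^-$-world (run by ${\cal S}^\downarrow$) and the process $\{{\cal A}_t^{\cal S}\}_{t \geq 1}$ in the $p^+$-world (run by ${\cal S}$) are two Markov chains with identical initial distributions and identical transition kernels. The base case is immediate since ${\cal A}_1^{{\cal S}^{\downarrow}} \setminus {\cal E}_1^{{\cal S}^{\downarrow}} = [n] = {\cal A}_1^{{\cal S}}$ with probability one.

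For the inductive step, I would condition on the event $\{{\cal A}_t^{{\cal S}^{\downarrow}} \setminus {\cal E}_t^{{\cal S}^{\downarrow}} = A\}$ and compute the transition to stage $t+1$. By the definition of ${\cal S}^{\downarrow}$, the customer served at stage $t$ is ${\cal S}(t,A)$, which is removed deterministically. For each remaining customer $i \in A \setminus \{{\cal S}(t,A)\}$, the event of remaining in ${\cal A}_{t+1}^{{\cal S}^{\downarrow}} \setminus {\cal E}_{t+1}^{{\cal S}^{\downarrow}}$ requires both not departing (probability $1 - p_i^-$) and not being added to the elimination set (probability $1 - \frac{p_i^+ - p_i^-}{1 - p_i^-}$), and these events are independent across customers since both the departures and the fresh eliminations at stage $t$ are drawn independently and are independent of the history up to stage $t$. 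A one-line algebraic simplification gives
\[ (1 - p_i^-) \cdot \left(1 - \frac{p_i^+ - p_i^-}{1 - p_i^-}\right) ~~=~~ 1 - p_i^+ \ , \]
which is exactly the per-stage survival probability in the $p^+$-process for ${\cal S}$. Hence, conditional on $\{{\cal A}_t^{{\cal S}^{\downarrow}} \setminus {\cal E}_t^{{\cal S}^{\downarrow}} = A\}$, the set ${\cal A}_{t+1}^{{\cal S}^{\downarrow}} \setminus {\cal E}_{t+1}^{{\cal S}^{\downarrow}}$ is obtained from $A$ by removing ${\cal S}(t,A)$ and then independently keeping each other customer $i$ with probability $1 - p_i^+$; this transition kernel agrees exactly with that of ${\cal A}_{t+1}^{{\cal S}}$ conditional on $\{{\cal A}_t^{{\cal S}} = A\}$.

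Combining the matching transition kernels with the inductive hypothesis via the law of total probability then yields $\prpar{{\cal A}_{t+1}^{{\cal S}^{\downarrow}} \setminus {\cal E}_{t+1}^{{\cal S}^{\downarrow}} = A'} = \prpar{{\cal A}_{t+1}^{{\cal S}} = A'}$ for all $A' \subseteq [n]$, closing the induction. I expect the only subtle point to be the bookkeeping around conditional independence: one needs to note that even though ${\cal E}_t^{{\cal S}^{\downarrow}}$ may contain customers outside ${\cal A}_t^{{\cal S}^{\downarrow}}$, and even though eliminations are carried out for every customer in $[n]$ at every stage, only the eliminations (and departures) of customers in $A \setminus \{{\cal S}(t,A)\}$ influence the one-step transition, and these are drawn fresh at stage $t$ and hence independent of the conditioning event. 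Once this is in place, the algebraic identity above drives the entire argument.
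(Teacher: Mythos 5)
Your proposal is correct and matches the paper's own argument: both proceed by induction on $t$, showing that the one-step transition kernel of $\{{\cal A}_t^{{\cal S}^{\downarrow}} \setminus {\cal E}_t^{{\cal S}^{\downarrow}}\}_t$ coincides with that of $\{{\cal A}_t^{\cal S}\}_t$ via the identity $(1-p_i^-)\bigl(1 - \frac{p_i^+ - p_i^-}{1-p_i^-}\bigr) = 1 - p_i^+$. Your explicit remark that only the fresh, history-independent departures and eliminations of customers in $A \setminus \{{\cal S}(t,A)\}$ affect the transition is exactly the bookkeeping the paper leaves implicit.
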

\begin{proof}
The proof works by induction on $t$, noting that the base case of $t = 1$ trivially holds, since ${\cal A}_1^{{\cal S}^{\downarrow}} = [n]$, ${\cal E}_1^{{\cal S}^{\downarrow}} = \emptyset$, and ${\cal A}_t^{\cal S} = [n]$ with probability $1$. For the general case of $t \geq 2$, we have
\begin{eqnarray*}
\pr{ {\cal A}_t^{\cal S} = A } & = & \sum_{ \MyAbove{ B \supsetneq A : }{ {\cal S}(t-1,B) \notin A} } \pr{ {\cal A}_{t-1}^{\cal S} = B } \cdot \pr{ {\cal A}_t^{\cal S} = A | {\cal A}_{t-1}^{\cal S} = B } \\
& = & \sum_{ \MyAbove{ B \supsetneq A : }{ {\cal S}(t-1,B) \notin A} } \pr{ {\cal A}_{t-1}^{\cal S} = B } \cdot \prod_{i \in B \setminus (A \cup \{ {\cal S}(t-1,B) \})} p_i^+ \cdot \prod_{i \in A} (1 - p^+_i) \\
& = & \sum_{ \MyAbove{ B \supsetneq A : }{ {\cal S}(t-1,B) \notin A} } \prpar{ {\cal A}_{t-1}^{{\cal S}^{\downarrow}} \setminus {\cal E}_{t-1}^{{\cal S}^{\downarrow}} = B } \\
& & \qquad \qquad \cdot \prod_{i \in B \setminus (A \cup \{ {\cal S}(t-1,B) \})} \underbrace{ \left( 1 - (1 - p^-_i ) \cdot \left(1 - \frac{ p^+_i - p^-_i }{ 1 - p^-_i } \right) \right) }_{ p_i^+ } \\
& & \qquad \qquad \cdot \prod_{i \in A} \underbrace{ \left( (1 - p^-_i ) \cdot \left( 1 - \frac{ p^+_i - p^-_i }{ 1 - p^-_i } \right) \right) }_{ 1 - p_i^+ }\\
& = & \sum_{ \MyAbove{ B \supsetneq A : }{ {\cal S}(t-1,B) \notin A} } \prpar{ {\cal A}_{t-1}^{{\cal S}^{\downarrow}} \setminus {\cal E}_{t-1}^{{\cal S}^{\downarrow}} = B } \cdot \pr{ {\cal A}_{t}^{{\cal S}^{\downarrow}} \setminus {\cal E}_{t}^{{\cal S}^{\downarrow}} = A \left| {\cal A}_{t-1}^{{\cal S}^{\downarrow}} \setminus {\cal E}_{t-1}^{{\cal S}^{\downarrow}} = B \right. } \\
& = & \pr{ {\cal A}_t^{{\cal S}^{\downarrow}} \setminus {\cal E}_t^{{\cal S}^{\downarrow}} = A } \ ,
\end{eqnarray*}
where the third equality follows from the induction hypothesis.
\end{proof}

Given this result, we immediately obtain the desired relation, by noting that
\begin{eqnarray*}
\rev^{ (p^-) }( {\cal S}^{ \downarrow } ) & = & \sum_{t \geq 1} \sum_{A \subseteq [n]} \pr{ {\cal A}_t^{{\cal S}^{\downarrow}} \setminus {\cal E}_t^{{\cal S}^{\downarrow}} = A } \cdot r_{ {\cal S}^{\downarrow} (t,{\cal A}_t^{{\cal S}^{\downarrow}}, {\cal E}_t^{{\cal S}^{\downarrow}}) | {\cal A}_t^{{\cal S}^{\downarrow}} \setminus {\cal E}_t^{{\cal S}^{\downarrow}} = A } \\
& = & \sum_{t \geq 1} \sum_{A \subseteq [n]} \pr{ {\cal A}_t^{\cal S} = A } \cdot r_{ {\cal S} (t,A) } \\
& = & \rev^{ (p^+) }( {\cal S} ) \ .
\end{eqnarray*}

\subsection{Proof of Lemma~\ref{lem:no_future}} \label{app:proof_lem_no_future}

Let us focus on a single customer $i\in [n]$ and on any preceding stage, $\tau_0 \leq \tau$. To avoid duplicated contents, we utilize the extra piece of notation $\bar{\tau}_0$, such that $\bar{\tau}_0 = \tau_0$ when $\tau_0$ is a regular stage, and such that $\bar{\tau}_0 = \tau_0 - 1$ when $\tau_0$ is a milestone. By closely inspecting how $Y^\downarrow_{i,\tau_0 }$ is defined in Section~\ref{subsec:coupling}, it follows that we are actually employing deterministic mappings $\pi_1(\cdot)$ and $\pi_2(\cdot)$, corresponding to cases~1 and~2, such that
\[ Y^\downarrow_{i,\tau_0 } ~~=~~  \begin{cases}
    \pi_1(W_{i,\tau_0},{V}_{i,\tau_0}, {{\cal Y}^\downarrow_{i}}^{\leq\tau_0-1},{{\cal Y}^\uparrow_{i}}^{\leq\mu^{-1}(\tau_0 -1)}) \qquad & \text{if } \tau_0 \in {\cal M}_{\gamma} \\
    \pi_2(Y^\uparrow_{i,\mu^{-1}(\tau_0 )},Z_{i,\tau_0 }) & \text{if } \tau_0 \notin {\cal M}_{\gamma}
\end{cases} \]
By leveraging these equations for every $\tau_0 \leq \tau$, it immediately follows that we can specify deterministic mappings $\nu_1(\cdot)$ and $\nu_2(\cdot)$ such that 
\[ {{\cal Y}^{\downarrow}}^{\leq \tau} ~~=~~ 
\begin{dcases}
 \nu_1({\cal Z}^{\leq \tau},{\cal W}^{\leq \tau},{\cal V}^{\leq \tau},{{\cal Y}^{\uparrow}}^{\leq  \mu^{-1}(\tau)}) \qquad & \text{if } \tau\notin {\cal M}_\gamma \\
 \nu({\cal Z}^{\leq \tau},{\cal W}^{\leq \tau},{\cal V}^{\leq \tau},{{\cal Y}^{\uparrow}}^{\leq  \mu^{-1}(\tau-1)}) \qquad & \text{if } \tau\in {\cal M}_\gamma
\end{dcases} \]
In light of our construction, we have enforced that $({\cal Z}^{>\tau},{\cal W}^{>\tau},{\cal V}^{>\tau},{\cal Y}^{\uparrow >  \mu^{-1}(\bar{\tau})})$ is independent of $({\cal Z}^{\leq \tau },{\cal W}^{\leq \tau },{V}^{\leq \tau },{\cal Y}^{\uparrow\leq\mu^{-1}(\bar{\tau})})$, and therefore, the equations above imply that ${{\cal Y}^{\downarrow}}^{\leq \tau} $ is independent of $({\cal Z}^{>\tau},{\cal W}^{>\tau},{\cal V}^{>\tau},{\cal Y}^{\uparrow >  \mu^{-1}(\bar{\tau})})$.

\end{document}